\documentclass[journal]{IEEEtran}

\usepackage{amsmath,amssymb,amsfonts}
\usepackage{amsthm}
\usepackage{algorithm}
\usepackage{algorithmic}
\usepackage{graphicx}
\usepackage{cite}
\usepackage{url}
\usepackage{booktabs}
\usepackage{multirow}
\usepackage{color}
\usepackage{bm}
\usepackage{bbm}
\usepackage{mathrsfs}
\usepackage{pifont} 

\newcommand{\cmark}{\ding{51}}

\newcommand{\xmark}{\ding{55}}

\newtheorem{theorem}{Theorem}

\newtheorem{lemma}{Lemma}

\newtheorem{proposition}{Proposition}

\newtheorem{corollary}{Corollary}

\newtheorem{definition}{Definition}

\newtheorem{remark}{Remark}

\newtheorem{assumption}{Assumption}

\newcommand{\CB}{C^{B}}

\newcommand{\CBmin}{C^{B}_{min}}

\newcommand{\CBmax}{C^{B}_{max}}

\newcommand{\Pmax}{P_{max}}

\newcommand{\Pharvest}{P^{harvest}}

\newcommand{\calM}{\mathcal{M}}

\newcommand{\calG}{\mathcal{G}}

\newcommand{\calT}{\mathcal{T}}

\newcommand{\bfP}{\mathbf{P}}

\newcommand{\bfR}{\mathbf{R}}

\newcommand{\Rmn}{R_{m,n}^{t}}

\newcommand{\WassDist}{W_{1}}

\begin{document}

\title{Hierarchical Battery-Aware Game Algorithm 
for ISL Power Allocation in LEO Mega-Constellations}

\author{Kangkang Sun,
        Jianhua Li,~\IEEEmembership{Senior Member,~IEEE,}
        Xiuzhen Chen,~\IEEEmembership{Member,~IEEE,}\\
        Jianyong Zheng,~\IEEEmembership{Member,~IEEE,}
        Minyi Guo,~\IEEEmembership{Fellow,~IEEE}
  \thanks{Kangkang Sun, Jianhua Li, Xiuzhen Chen and Minyi Guo are
          with the Shanghai Key Laboratory of Integrated Administration
          Technologies for Information Security, School of Computer Science,
          Shanghai Jiao Tong University, Shanghai 200240, China
          (e-mail: szpsunkk@sjtu.edu.cn; lijh888@sjtu.edu.cn;
          xzchen@sjtu.edu.cn; guo-my@cs.sjtu.edu.cn).}
  \thanks{Jianyong Zheng is with the School of Future Technology, Shanghai University, Shanghai 200444, China (e-mail: zhengjy@shu.edu.cn).}
  \thanks{\textit{Corresponding author: Jianhua Li}.}
  \thanks{This work has been submitted to the IEEE for possible publication. Copyright may be transferred without notice, after which this version may no longer be accessible.}
}

\maketitle

\begin{abstract}
Sustaining high inter-satellite link (ISL) throughput under intermittent solar harvesting is a fundamental challenge for LEO mega-constellations. Existing works impose static power ceilings that ignore real-time battery state and comprehensive onboard power budgets, causing eclipse-period energy crises. Learning-based approaches capture battery dynamics but lack equilibrium guarantees and do not scale beyond small constellations. We propose the \textbf{Hierarchical Battery-Aware Game (HBAG)} algorithm, a unified game-theoretic framework for ISL power allocation that operates identically across finite and mega-constellation regimes. For finite constellations, HBAG converges to a unique variational equilibrium; as constellation size grows, the same distributed update rule converges to the Mean Field Game (MFG) equilibrium without algorithm redesign. Comprehensive experiments on Starlink Shell~A ($M=172$, $\theta=0.38$) show that HBAG achieves \textbf{100\% energy sustainability rate} (ESR) in all 10 independent runs, representing a \textbf{+87.4\%} gain over the traditional static-power baseline (SATFLOW-L, ESR\,=\,12.6\%). At the same time, HBAG reduces the flow violation ratio by \textbf{78.3\%} to 7.62\% (below the 10\% industry tolerance). HBAG further maintains ESR $\geq 93.4\%$ across eclipse fractions $\theta \in [0,\,0.6]$ and scales linearly to 5{,}000 satellites with less than 75\,ms per-slot runtime, confirming deployment feasibility at full Starlink scale.
\end{abstract}

\begin{IEEEkeywords}
LEO mega-constellations, inter-satellite links, mean field game, distributed optimization
\end{IEEEkeywords}

\section{Introduction}
\label{sec:intro}
\IEEEPARstart{L}{ow} Earth Orbit (LEO) satellite mega-constellations have emerged as a transformative paradigm for global connectivity, enabling broadband Internet access across oceans, polar regions, and underserved rural areas where terrestrial infrastructure is absent or economically prohibitive. Central to their operation are inter-satellite links (ISLs), namely laser or RF channels supporting multi-Gbps throughput between satellites, which eliminate dependence on ground gateways and enable seamless end-to-end routing across orbital shells with latencies competitive to terrestrial fiber. As constellation scales continue to grow from hundreds to thousands of satellites, sustaining ISL throughput under the stringent energy constraints of space becomes a fundamental engineering challenge \cite{kodheli2020survey,al2022survey}.

The economic viability of mega-constellations hinges on maximizing ISL availability over the satellite lifespan while minimizing operational expenditures. Any premature battery failure requiring satellite replacement incurs substantial hardware and launch costs \cite{heydarishahreza2024spectrum}. Each satellite harvests solar energy via deployable panels and stores it in onboard lithium-ion batteries. This harvested energy must be shared among ISL laser terminals, attitude control, onboard processing, and thermal management subsystems. Unlike terrestrial base stations drawing from the grid with effectively unlimited supply, LEO satellites operate under a closed energy budget: every joule spent on ISL transmission depletes the battery reserve available for eclipse periods, and repeated deep discharge cycles accelerate irreversible capacity degradation \cite{kodheli2020survey}. This closed-loop coupling between ISL power allocation and battery longevity motivates the need for energy-aware frameworks that go beyond static power ceilings.

\subsection{Motivation: The Eclipse-Period Energy Crisis}
\label{subsec:motivation}

LEO satellites face a fundamental energy intermittency challenge driven by orbital mechanics \cite{kodheli2020survey}. Each orbital period comprises an illumination phase, during which solar panels harvest energy and recharge the battery, and an eclipse phase, during which solar power drops to zero and the satellite must rely entirely on stored energy to sustain all onboard subsystems, including ISL terminals. The eclipse phase occupies a significant fraction of each orbit for typical constellation inclinations, and the aggregate power draw from ISL transmission, attitude control, processing, and thermal management can consume a substantial portion of the battery reserve within a single eclipse window \cite{radhakrishnan2016survey}.

The critical vulnerability arises from the coupling between consecutive orbital cycles: if the illumination phase fails to fully replenish the energy consumed during eclipse, due to suboptimal power allocation prioritizing ISL throughput over battery replenishment, successive eclipse cycles cause cumulative depletion. Left unchecked, this cascading energy deficit eventually triggers catastrophic battery exhaustion and involuntary transmitter shutdown, severing ISL connectivity and disrupting end-to-end routing across the constellation \cite{al2022survey}.
This eclipse-period energy crisis is invisible to existing static-power frameworks, which enforce fixed power ceilings regardless of real-time battery state or orbital phase \cite{cen2024satflow}. It exposes three compounding gaps in the literature that no single prior work addresses simultaneously in Table \ref{tab:gap_analysis}.

\begin{table*}[t]
\centering
\caption{Gap Analysis: Limitations of Existing Approaches for ISL Power Allocation Under Eclipse Constraints. The FVR gap between HBAG (7.62\%) and SMFG-adapted (0.15\%) reflects the \emph{price of distribution}; see Section~\ref{subsubsec:tradeoff}.}
\label{tab:gap_analysis}
\small
\begin{tabular}{p{2.3cm}p{3.3cm}p{3.2cm}p{3.4cm}p{3.2cm}}
\toprule
\textbf{Approach} & \textbf{Category} & \textbf{Battery Dynamics} & \textbf{Equilibrium Guarantees} & \textbf{Scalability ($M > 1000$)} \\
\midrule
SATFLOW~\cite{cen2024satflow} & Optimization & \xmark\ Static $P_{\max}$ & \xmark\ No equilibrium & \cmark\ $\mathcal{O}(n_t)$ \\
MAAC-IILP~\cite{li2023energy} & Multi-agent RL & \cmark\ Dynamic $C_m^B(t)$ & \mbox{\xmark\ No convergence proof} & \xmark\ Trained on $M=66$ \\
DeepISL~\cite{li2023deepisl} & Deep RL & \cmark\ Dynamic $C_m^B(t)$ & \xmark\ No equilibrium & \xmark\ $\mathcal{O}(M^2)$ training \\
FlexD~\cite{lokugama2026flexd} & Analytical & \xmark\ Battery-agnostic & N/A (heuristic) & \cmark\ $\mathcal{O}(1)$ \\
SMFG~\cite{wang2023smfg} & MFG & Partial (offloading only) & \cmark\ Stackelberg eq. & \cmark\ $\mathcal{O}(M \log M)$ \\
\midrule
\textbf{This work} & \textbf{Potential game + MFG} & \cmark\ \textbf{Dynamic} $P_{m,t}^{max}$ & \cmark\ \textbf{Unique VE}$^{\dagger}$ & \cmark\ $\mathcal{O}(n_t)^{\ddagger}$ \\
\bottomrule
\multicolumn{5}{l}{\footnotesize $^{\dagger}$Under quasi-static assumption (Remark~\ref{rem:timescale}) and interference-free ISLs (Assumption~\ref{assm:interference_free}).} \\
\multicolumn{5}{l}{\footnotesize $^{\ddagger}$FVR (7.62\%) exceeds SMFG-adapted (0.15\%) due to distributed Nash equilibrium trade-off; see Table~\ref{tab:exp1_results}.} \\
\end{tabular}
\end{table*}

\subsection{Research Gaps and Challenges}
\label{subsec:gaps}

\textit{Gap~1: Static power budgets ignore eclipse-driven battery dynamics.}
All major ISL planning frameworks \cite{cen2024satflow,deng2020ultra,gupta2025energyefficient} enforce fixed power ceilings irrespective of real-time battery state or eclipse phase. This causes two critical failures: 1) During eclipse, enforcing a static maximum power overdrafts the battery faster than it can be replenished in the next illumination window, eventually triggering involuntary transmitter shutdown, a failure mode that remains entirely invisible to the optimizer's cost function. 2) During illumination, satellites recharge to full capacity but often fail to reserve an adequate energy margin for the upcoming eclipse, causing cascading depletion in subsequent orbital cycles.
Learning-based methods MAAC-IILP \cite{li2023energy} and DeepISL \cite{li2023deepisl} partially address battery dynamics by incorporating battery state into the reinforcement learning observation space, achieving modest energy savings on small-scale constellations. However, both methods exhibit three fundamental scalability barriers: (a) Training overhead grows super-linearly with constellation size, rendering direct extrapolation to mega-constellations computationally infeasible on current hardware; (b) neither method provides convergence guarantees or equilibrium characterization, creating fundamental uncertainty about solution quality at large scales; (c) policies trained at one constellation scale fail to generalize across deployment sizes, necessitating costly retraining for every scale change.

\textit{Gap~2: Game-theoretic approaches target the wrong interaction layer.}
While game theory has been applied to satellite topology optimization~\cite{han2024topology}, switch migration~\cite{yan2024switchmigration}, and integrated scheduling~\cite{fu2013transmission}, none address \emph{horizontal satellite-to-satellite ISL power competition} under battery constraints. The sole prior mean field game (MFG) work for LEO energy management, SMFG~\cite{wang2023smfg}, models a \emph{vertical} Stackelberg game between satellites (leaders) and ground users (followers) for data offloading pricing. SMFG's Stackelberg hierarchy is fundamentally mismatched with our setting \cite{wang2023smfg}. The most critical issue is the vertical leader-follower assumption, which presupposes that satellites can commit to pricing strategies before peers respond—a premise that breaks down entirely when satellites compete symmetrically for ISL bandwidth. A secondary but equally important gap concerns the role of battery state: in SMFG, energy constrains offloading capacity but never enters the ISL competition itself. These two issues together mean that adapting SMFG requires not just parameter tuning but architectural redesign.


\textit{Gap~3: Lack of unified scalability from finite to mega-constellation regimes.}
No existing work provides an algorithm that \emph{operates identically} and \emph{provides convergence guarantees} across constellation sizes from $M \sim 100$ (regional systems like Iridium) to $M \sim 10{,}000$ (proposed Amazon Kuiper scale). Optimization-based methods scale linearly but ignore energy dynamics; learning-based methods capture energy but require retraining; game-theoretic methods provide equilibrium theory but have been applied only to vertical interactions or small-scale scenarios. The fundamental challenge is: \textbf{\textit{Can we design a single distributed algorithm that converges to an exact equilibrium for finite $M$ and seamlessly approximates the mean field equilibrium as $M \to \infty$, with quantifiable convergence rates in both regimes?}}
\smallskip

\subsection{Contributions and Technical Novelty}
\label{subsec:contributions}

This paper proposes the Hierarchical Battery-Aware Game (HBAG) algorithm, a unified game-theoretic framework for ISL power allocation that addresses all three gaps identified above. Our main contributions are:
\begin{enumerate}
    \item \textit{Battery-aware exact potential game with unique variational equilibrium (Theorem~\ref{thm:potential}).} The ISL power allocation is fomulated as an exact potential game with a battery-state-dependent penalty that diverges as charge approaches the safety floor. We prove that the potential function is strictly jointly concave, guaranteeing a unique variational equilibrium despite the time-varying feasible set induced by eclipse-driven dynamic power bounds.
    \item \textit{Unified hierarchical algorithm with dual convergence guarantees (Theorems~\ref{thm:convergence} and~\ref{thm:mfg}).} A single distributed algorithm is proposed that converges to the unique VE at rate $\mathcal{O}(1/\sqrt{k})$ for finite constellations and approximates the mean field equilibrium at rate $\mathcal{O}(M^{-1/4})$ as $M \to \infty$, without any algorithmic redesign.
    \item \textit{Comprehensive empirical validation on Starlink Shell~A.} The Energy Sustainability Rate (ESR) metric is introduced and shown through seven experiments that HBAG achieves 100\% ESR (87.4 pp improvement over SATFLOW), scales linearly to 5{,}000 satellites, and maintains flow violation ratio below the 10\% industry tolerance.
\end{enumerate}

The remainder of this paper is organized as follows. Section~\ref{sec:related} reviews related work. Section~\ref{sec:model} presents the system model. Section~\ref{sec:game} formulates the game and proves equilibrium results. Section~\ref{sec:algorithm} presents the hierarchical algorithm and unified convergence properties. Section~\ref{sec:asymptotic} provides finite-to-mean-field asymptotic analysis. Sections~\ref{sec:simulation} and~\ref{sec:conclusion} present simulation results and conclude the paper, respectively.

\section{Related Work}
\label{sec:related}

This section reviews related work in three categories aligned with the research gaps identified in Section~\ref{subsec:gaps}: ISL network planning and power allocation (Section~\ref{subsec:related_planning}), game-theoretic approaches to satellite networks (Section~\ref{subsec:related_game}), and mean field games for large-scale systems (Section~\ref{subsec:related_mfg}).
\subsection{ISL Network Planning and Energy-Aware Power Allocation}
\label{subsec:related_planning}

\textit{Optimization-based frameworks.}
Early topology-centric works~\cite{leyva2021connectivity,di2019ultra} focus on connectivity under geometric constraints but overlook energy entirely. SATFLOW~\cite{cen2024satflow} represents the state-of-the-art, jointly optimizing topology re-establishment and per-link power allocation via hierarchical decomposition. However, its power constraint $P_{m,n}^t \leq a_{m,n}^t \Pmax$ is \emph{battery-agnostic}: $\Pmax$ is fixed irrespective of $\CB_m(t)$ or $\phi_m(t)$. Deng et al.~\cite{deng2020ultra} and Gupta et al.~\cite{gupta2025energyefficient} similarly treat power as a static resource or soft cost without hard eclipse-driven battery constraints. FlexD~\cite{lokugama2026flexd} achieves 30\% energy efficiency gains via flexible duplex ISL operation but maximizes instantaneous throughput without battery sufficiency checks. Energy-aware routing works~\cite{yang2016energyrouting,marchese2020ecgr,tang2019multipath} address different problem layers (routing vs.\ power control) and do not provide equilibrium guarantees. Radhakrishnan et al.~\cite{radhakrishnan2016survey} provide a comprehensive survey of ISL physical-layer parameters for small satellite systems.


\textit{Learning-based methods with battery dynamics.}
MAAC-IILP~\cite{li2023energy} and DeepISL~\cite{li2023deepisl} are the first to incorporate physically grounded battery dynamics into ISL planning, augmenting the RL state space with $\CB_m(t)$ and $\phi_m(t)$. Critical limitations: \emph{(i)} \textit{Scalability:} Training overhead grows super-linearly with constellation size, rendering direct extrapolation from small-scale testbeds to mega-constellations computationally infeasible on current GPU clusters. \emph{(ii)} \textit{Generalization:} Policies trained on small constellations suffer significant performance degradation when directly applied to larger deployments, confirming poor cross-scale transfer. \emph{(iii)} \textit{Theory:} Neither method provides convergence guarantees or equilibrium characterization.

All planning frameworks except learning-based methods ignore eclipse-driven battery dynamics. Our potential game formulation provides: \emph{(i)} battery-state-dependent constraints via $P_{m,t}^{max}$; \emph{(ii)} equilibrium guarantees via strict concavity of $\Phi$; and \emph{(iii)} linear scalability via mean field reduction.

\subsection{Game Theory for Satellite Networks and Wireless Power Control}
\label{subsec:related_game}

\textit{Game theory in satellite systems.}
Game-theoretic modeling of satellite networks has gained traction for diverse problems. Jiang et al.~\cite{jiang2024gametheory} survey recent applications, identifying resource allocation and topology optimization as key domains. Han et al.~\cite{han2024topology} apply potential game theory to VLEO-LEO hybrid network topology optimization, formulating satellite selection as a congestion game and proving NE existence via Rosenthal's theorem. However, their utility does not incorporate battery state dynamics---energy is implicitly assumed unlimited. Yan et al.~\cite{yan2024switchmigration} study switch migration and integrated scheduling via game-theoretic mechanisms (Stackelberg and matching games, respectively), confirming the broader applicability of game theory to satellite systems but not addressing ISL power allocation under energy constraints. Jiao et al.~\cite{jiao2020noma} study network utility maximization for NOMA-based satellite IoT, modeling energy as a soft cost in the objective (penalty weight $\beta > 0$) rather than a hard battery constraint tied to eclipse dynamics.

\textit{Mean field games for LEO energy management.}
SMFG~\cite{wang2023smfg} is the \emph{only} prior work applying MFG to LEO satellite energy management, formulating a Stackelberg Mean Field Game for data offloading pricing. The satellite energy state $\tilde{E}^j(t)$ follows $\mathrm{d}\tilde{E}^j = (-\tilde{p}^j + \tilde{\alpha}^j)\mathrm{d}t + \sigma \mathrm{d}W$ (equation~(6) in~\cite{wang2023smfg}), where $\tilde{p}^j$ is offloading power and $\tilde{\alpha}^j$ is harvesting rate. Three structural differences prevent SMFG from addressing our problem:
\begin{enumerate}
    \item \textbf{Vertical vs.\ horizontal interaction:} SMFG models satellites (leaders) setting prices to maximize revenue, and ground users (followers) choosing offloading strategies. This Stackelberg hierarchy assumes satellites can commit to strategies before users respond, inappropriate for symmetric peer satellites competing for ISL resources. Our game requires Nash (or variational) equilibrium among equals.
    \item \textbf{Energy state role:} In SMFG, $\tilde{E}^j(t)$ constrains offloading capacity ($\tilde{p}^j \leq \tilde{E}^j / \Delta t$) but does not enter the ISL power allocation among satellites. ISL competition is never modeled. Our formulation makes $C_m^B(t)$ a first-class state variable determining feasible power $P_{m,t}^{max}$.
    \item \textbf{Convergence rate:} SMFG provides no analysis of finite-to-MFE convergence rate, leaving approximation quality at practical large-scale $M$ unknown. We establish $\mathcal{O}(M^{-1/4})$ rate (Theorem~\ref{thm:mfg}).
\end{enumerate}

\textit{Potential games for wireless power control.}
Scutari et al.~\cite{scutari2008mimo} formulate MIMO interference channel power control as an exact potential game, proving NE existence under log-determinant utility. The foundational potential game theory of Monderer and Shapley~\cite{monderer1996potential} and the broader treatment in~\cite{lasaulce2011gametheory} provide the theoretical basis for our formulation. However, all terrestrial potential game frameworks assume interference-coupled utilities with \emph{fixed} feasible sets, whereas our game features eclipse-driven battery-state-dependent $P_{m,t}^{max}$ that varies the feasible set. We bridge this gap by proving strict concavity of $\Phi$ under the quasi-static approximation, guaranteeing VE uniqueness despite time-varying constraints.

\textit{Variational equilibrium for games with shared constraints.}
When players face shared constraints (e.g., flow conservation in ISL networks), standard NE is inappropriate because individual feasible sets are not separable. Facchinei and Kanzow~\cite{facchinei2007vi} formalize variational equilibrium (VE) via variational inequality: $\mathbf{R}^*$ is a VE if $\sum_m \langle 
\nabla u_m(\mathbf{R}^*), \mathbf{R}_m - \mathbf{R}_m^* \rangle \leq 0$ for all $\mathbf{R} \in \mathcal{F}$ (shared feasible set). Our Lagrangian decomposition (Section~\ref{sec:game}) implicitly computes a VE: dualizing flow conservation yields separable sub-problems whose NE coincides with the VE of the original constrained game when dual variables satisfy optimality.

Prior game-theoretic works for satellites ignore battery state dynamics (Han~\cite{han2024topology}, Yan~\cite{yan2024switchmigration}) or model the wrong interaction layer (SMFG's vertical Stackelberg vs.\ our horizontal Nash). Terrestrial potential games~\cite{scutari2008mimo} provide the theoretical foundation but assume fixed feasible sets, whereas our game features battery-state-dependent $P_{m,t}^{max}$. We bridge this gap by proving strict concavity of $\Phi$ under the quasi-static approximation, guaranteeing VE uniqueness despite time-varying constraints.

\subsection{Mean Field Games for Large-Scale Communication Systems}
\label{subsec:related_mfg}

\textit{MFG foundations and LEO applications.}
MFG theory~\cite{lasry2007mfg,huang2012socialoptima} provides a principled framework for large-population strategic interactions, characterizing equilibria via a coupled HJB--FPK system. Applications to communication networks include uplink power control~\cite{huang2012socialoptima} and energy-harvesting systems with stochastic arrivals~\cite{tembine2014risksensitive}. The key distinction of our setting is that LEO satellite battery dynamics are \emph{deterministic} (driven by orbital mechanics), yielding a pure-advection FPK rather than a diffusion--advection equation. Most communication MFG works do not establish explicit finite-to-MFE convergence rates, leaving approximation quality at practical $M \sim 10^3$--$10^4$ unquantified; we establish $\mathcal{O}(M^{-1/4})$ rate (Theorem~\ref{thm:mfg}).
Fournier and Guillin~\cite{fournier2015wasserstein} establish Wasserstein-1 convergence rates for empirical measures: if $\{X_i\}_{i=1}^M$ are i.i.d.\ samples from $\mu^*$ with finite $p$-th moment ($p \geq 1$), then $\mathbb{E}[W_1(\mu^M, \mu^*)] \leq C_\mu / M^{1/2}$ for dimension $d=1$ (battery state in our case). This result enables quantitative analysis of finite-to-MFE approximation quality. However, most MFG works in communications do not establish explicit convergence rates, leaving approximation quality at practical system sizes unquantified.

\section{System Model}
\label{sec:model}

\begin{table}[t]
\centering
\caption{Key Notation}
\label{tab:notation}
\renewcommand{\arraystretch}{1.1}
\begin{tabular}{ll}
\toprule
\textbf{Symbol} & \textbf{Description} \\
\midrule
$M$, $\calM$ & Number of satellites; index set $\{0,\ldots,M-1\}$ \\
$R_{m,n}^t$ & Allocated data rate on ISL $(m,n)$ at slot $t$ \\
$P_{m,n}^t$ & Transmit power on ISL $(m,n)$ at slot $t$ \\
$P_{m,t}^{max}$ & Dynamic power upper bound for satellite $m$ at $t$ \\
$\CB_m(t)$ & Battery state-of-charge (kJ) of satellite $m$ at $t$ \\
$\CBmin$, $\CBmax$ & Minimum safe charge (40 kJ); maximum capacity (400 kJ) \\
$\phi_m(t)$ & Illumination indicator ($1$ = illuminated, $0$ = eclipse) \\
$\delta_{\mathrm{ecl}}(t)$ & Remaining eclipse duration (s) at slot $t$ \\
$\kappa_{m,n}^t$ & Path-loss/hardware coefficient for ISL $(m,n)$ at $t$ \\
$\lambda_m(t)$ & Energy sensitivity coefficient of satellite $m$ at $t$ \\
$\epsilon$ & Battery safety margin parameter\\
$\Phi(\bfR)$ & Exact potential function of the penalized game \\
$\mu(\CB, t)$ & Battery-state distribution in the mean field limit \\
$\bfR^*$ & Unique variational equilibrium strategy profile \\
$\alpha_\Phi$ & Strong concavity modulus of $\Phi$ \\
$L_U$ & Lipschitz constant of utility in the mean field \\
\bottomrule
\end{tabular}
\end{table}

This section presents the system model for battery-aware ISL power allocation. We first describe the constellation architecture and ISL topology (Section~\ref{subsec:arch}), then introduce the communication model governing link capacity (Section~\ref{subsec:comm}), the energy model capturing solar harvesting, battery dynamics, and dynamic power bounds (Section~\ref{subsec:energy}), and finally formulate the optimization problem (Section~\ref{subsec:formulation}). Figure~\ref{fig:scenario} illustrates the operational scenario: satellites in illumination harvest solar power and maintain high-power ISLs, while satellites entering eclipse progressively reduce transmit power as battery state decreases.

\begin{figure}[!t]
    \centering
    \includegraphics[width=0.5\textwidth]{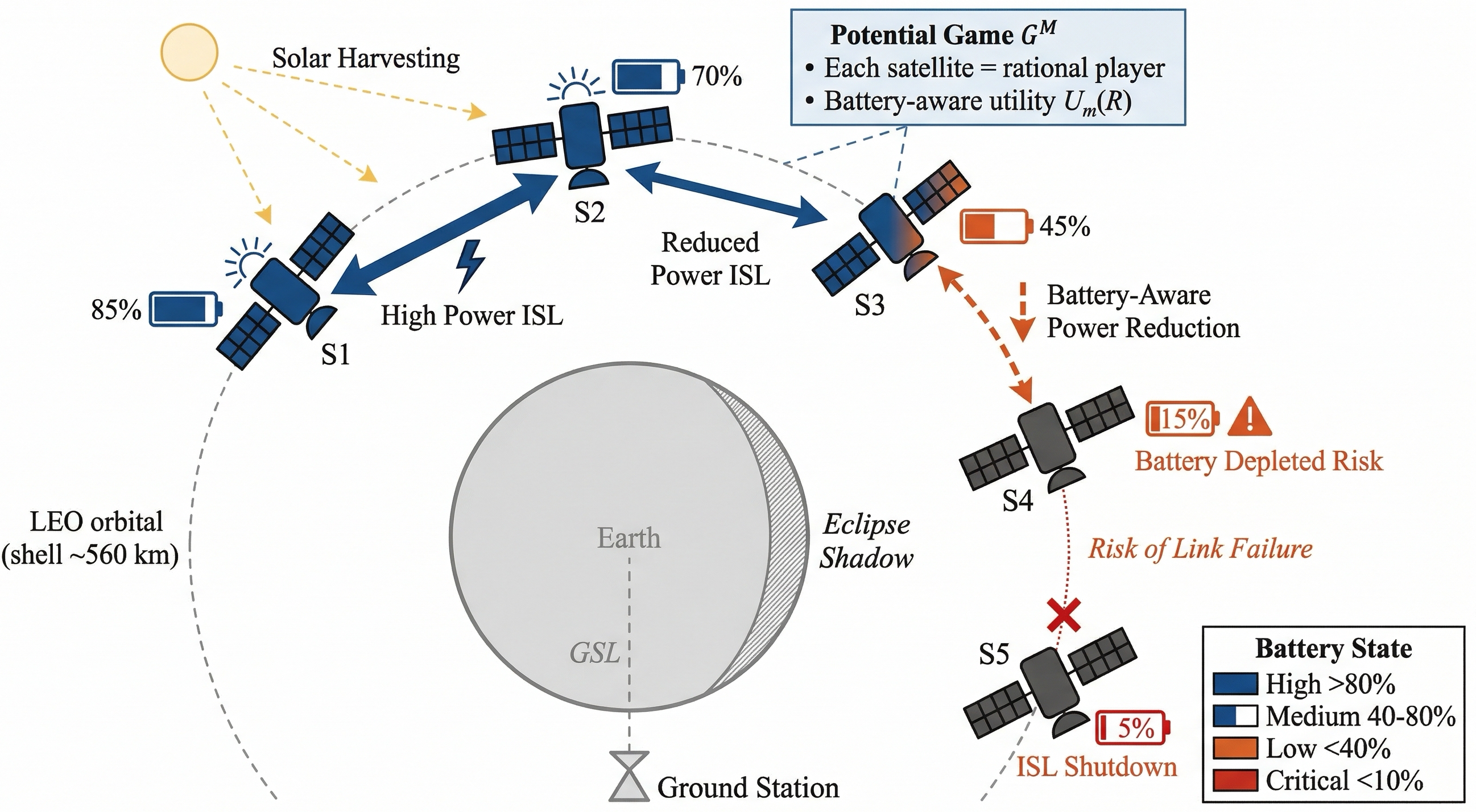}
    \caption{Scenario illustration of battery-aware ISL
    operation in LEO: illuminated satellites harvest solar
    energy and maintain high-power links, while eclipse-side
    satellites reduce power or shut down as state-of-charge decreases.}
    \label{fig:scenario}
\end{figure}

\subsection{Constellation Architecture}
\label{subsec:arch}

We consider $M$ LEO satellites evenly distributed in $N_{orbit}$ orbital planes at altitude $h$ and inclination $\iota$. The satellite index set is $\calM = \{0,1,\ldots,M-1\}$.
Each satellite is equipped with four laser terminals supporting up to four full-duplex ISLs: two permanent intra-plane ISLs and two dynamic inter-plane ISLs~\cite{cen2024satflow}. Let $\mathbf{A}^t \in \{0,1\}^{M \times M}$ denote the ISL adjacency matrix at time $t$, where $a_{m,n}^t = 1$ if an ISL is established from satellite $m$ to $n$.
Terminal power is adjusted every $D_e$ seconds over the time-slot set $\mathcal{T}_e = \{t_0, t_1, \ldots, t_{n_t-1}\}$.

While operational Starlink satellites employ optical ISLs, we adopt an RF ISL model at Ka-band (26~GHz) following the validated setup in~\cite{cen2024satflow}, as the Shannon capacity framework~\eqref{eq:capacity} is well established for RF links.
We set ISL bandwidth $B = 500$~MHz in Table~\ref{tab:parameters}; Ka-band regulatory and payload allocations support wideband RF ISL operation, and the same value is used in the Shannon evaluation for reproducibility.
The core battery-aware game formulation is agnostic to the specific channel model: it only requires the power-rate mapping $P_{m,n}^t(R_{m,n}^t)$ to be convex and increasing. For coherent optical ISLs with homodyne detection, the mapping becomes $P_{m,n}^t(R) = (h\nu / \eta_{\mathrm{det}}) \cdot (2^{R/B_{\mathrm{opt}}} - 1)$, where $h\nu$ is photon energy and $\eta_{\mathrm{det}}$ is detector quantum efficiency. This mapping is convex and increasing in $R$, so $\Phi$ retains strict joint concavity with a curvature constant $\kappa_{m,n}^{\mathrm{opt}}$ replacing $\kappa_{m,n}^t$. 

\subsection{Communication Model}
\label{subsec:comm}

The free-space path loss from satellite $m$ to $n$ at time $t$ is~\cite{cen2024satflow}:
\begin{equation}
    L_{m,n}^t = \left(\frac{4\pi d_{m,n}^t f}{c}
    \right)^2,
    \label{eq:pathloss}
\end{equation}
where $d_{m,n}^t$ is the Euclidean distance, $f$ is carrier frequency, and $c$ is the speed of light. The channel capacity is:
\begin{equation}
    C_{m,n}^t = B\log_2\!\left(1 +
    \frac{p_m^t G_m G_n}{k_B \tau B L_{m,n}^t}
    \right).
    \label{eq:capacity}
\end{equation}
The required power for allocated data rate $R_{m,n}^t = \sum_{\omega} Y_{m,n}^{\omega,t}$ is:
\begin{equation}
    P_{m,n}^t(R_{m,n}^t) = \kappa_{m,n}^t
    \!\left(2^{R_{m,n}^t/B} - 1\right),
    \label{eq:power_rate}
\end{equation}
where $\kappa_{m,n}^t = a_{m,n}^t
\frac{k_B\tau B}{G_m G_n}
\!\left(\frac{4\pi d_{m,n}^t f}{c}\right)^{\!2}$.

\subsection{Energy Model}
\label{subsec:energy}

The instantaneous solar harvesting power of satellite $m$ at time $t$ is
\begin{equation}
    \Pharvest_{m,t} = \phi_m(t)\cdot\eta\cdot
    \gamma'\cdot A\cdot\sin\sigma_m(t),
    \label{eq:harvest}
\end{equation}
as modeled in~\cite{li2023energy,li2023deepisl}, where $\phi_m(t)\in\{0,1\}$ is the illumination indicator ($\phi_m(t)=0$ during eclipse), $\eta$ is solar panel efficiency, $\gamma'$ is solar irradiance per unit area, $A$ is solar panel area, and $\sigma_m(t)$ is the angle between the panel and sunlight.

The net energy change per time slot and the resulting battery update are
\begin{equation}
    \begin{aligned}
        \Delta E_m^t & = E_m^{harvest}(t) -
        E_m^{consume}(t), \\
        E_m^{consume}(t)&= D_e\left(\sum_{n}P_{m,n}^{t} + P_m^{base}\right),
    \end{aligned}
    \label{eq:deltaE}
\end{equation}
\begin{equation}
    \CB_m(t{+}1) = \begin{cases}
        \min\!\left(\CB_m(t)+\Delta E_m^t,\,
        \CBmax\right),
        & \Delta E_m^t > 0 \\
        \max\!\left(\CB_m(t)+\Delta E_m^t,\,0
        \right),
        & \Delta E_m^t \leq 0
    \end{cases}
    \label{eq:battery}
\end{equation}
where $P_m^{base}$ denotes the baseline power consumption from non-ISL subsystems: attitude control system (ACS), onboard processors, and thermal management. This comprehensive energy model ensures realistic battery depletion dynamics consistent with operational LEO satellites~\cite{li2023energy,li2023deepisl}.

The linear battery model in equations~\eqref{eq:deltaE}--\eqref{eq:battery} adopts three common simplifications: \emph{(i)}~unity charge/discharge efficiency \cite{radhakrishnan2016survey}, and affine corrections preserve convexity in $P_{m,n}^t$; \emph{(ii)}~constant $\CBmax$ over lifespan. 

With efficiencies $\eta_c, \eta_d < 1$, the battery dynamics become:
\begin{equation}
    \CB_m(t{+}1) = \begin{cases}
    \min(\CB_m(t) + \eta_c \Delta E_m^t, \CBmax), 
    & \Delta E_m^t > 0 \\
    \max(\CB_m(t) + \Delta E_m^t / \eta_d, 0), 
    & \Delta E_m^t \leq 0.
    \end{cases}
\end{equation}


To prevent battery depletion, we impose a time-varying power ceiling that accounts for both the current battery state and the remaining eclipse duration:
\begin{equation}
    P_{m,t}^{max} = \begin{cases}
        \min\!\left(\dfrac{\CB_m(t)}{\delta_{\mathrm{ecl}}(t)},\,
        \Pmax\right),
        & \phi_m(t)=0\\[6pt]
        \min\!\left(\Pharvest_{m,t}+
        \dfrac{\CB_m(t)}{\delta_{\mathrm{ecl}}(t)},\,\Pmax\right),
        & \phi_m(t)=1
    \end{cases}
    \label{eq:dynpower}
\end{equation}
where $\delta_{\mathrm{ecl}}(t)>0$ denotes the remaining eclipse duration (in seconds): during eclipse ($\phi_m(t)=0$), it equals the time until illumination resumes; during illumination ($\phi_m(t)=1$), it equals the duration of the upcoming eclipse, so that the satellite reserves sufficient energy to survive the next dark period \cite{cen2024satflow, li2023deepisl}.


\subsection{Problem Formulation}
\label{subsec:formulation}

Our objective is to minimize total ISL energy consumption subject to energy sustainability constraints. We replace the static power constraint $P_{m,n}^t \leq a_{m,n}^t \Pmax$ from~\cite{cen2024satflow} with the eclipse-aware dynamic bound~\eqref{eq:dynpower} and add explicit battery floor constraints:

\begin{subequations}
\label{prob:main}
\begin{align}
    \min_{\bfR,\,\mathbf{A}} \quad &
    \alpha D_e \sum_{t\in\mathcal{T}_e}
    \sum_{m,n\in\calM} P_{m,n}^t\!\left(\Rmn\right)
    + \beta\sum_{t\in\mathcal{T}_i} S^t
    \label{eq:obj}\\
    \text{s.t.} \quad &
    P_{m,n}^t\!\left(\Rmn\right) \leq
    a_{m,n}^t \cdot P_{m,t}^{max},
    \;\forall m,n,t
    \label{eq:c1}\\
    & \sum_{n} Y_{m,n}^{\omega,t} -
    \sum_{n} Y_{n,m}^{\omega,t} = [b^\omega]_m,
    \;\forall\omega,m,t
    \label{eq:c2}\\
    & 0\leq Y_{m,n}^{\omega,t} \leq
    a_{m,n}^t d_\omega,
    \;\forall\omega,m,n,t
    \label{eq:c3}\\
    & \CB_m(t) \geq \CBmin, \quad\forall m,t
    \label{eq:c4}\\
    & \mathbf{A}^t \in \mathcal{A}^t,
    \;\forall t,
    \label{eq:c5}
\end{align}
\end{subequations}
where constraint~\eqref{eq:c4} is the \emph{energy sustainability constraint}. Problem~\eqref{prob:main} remains NP-hard because it strictly subsumes the fixed-charge network flow structure identified. Constraint~\eqref{eq:c4} serves as the \emph{hard} battery floor defining feasibility. In the game formulation (Section~\ref{sec:game}), the battery penalty term in utility~\eqref{eq:utility} provides a complementary \emph{soft} surrogate: as $\CB_m(t)\downarrow\CBmin$, the penalty diverges, making overdraft individually irrational. These two mechanisms are active simultaneously in Algorithm~\ref{alg:hbag}: the dynamic bound~\eqref{eq:dynpower} guarantees feasibility, while the soft penalty guides convergence toward energy-efficient solutions within the feasible set---their complementary roles are quantified in the ablation study (Table~\ref{tab:ablation}).

\section{Game Formulation and Equilibrium Analysis}
\label{sec:game}

This section transforms the system model of Section~\ref{sec:model} into a game-theoretic framework for ISL power allocation. We first decompose the optimization problem~\eqref{prob:main} into a multi-player game and adopt the variational equilibrium (VE) concept to handle the shared flow conservation constraint (Section~\ref{subsec:game_decomp}). We then define the penalized game with its strategy space and battery-aware utility function (Section~\ref{subsec:penalized_game}), and construct an exact potential function that captures the battery-state-dependent cost structure while preserving additive separability under Assumption~\ref{assm:interference_free} (Section~\ref{subsec:potential}). Finally, we prove that the penalized game admits a unique variational equilibrium guaranteed by the strict joint concavity of the potential function, and derive the associated corollary and deviation bound proposition (Section~\ref{subsec:theory}).

\subsection{Game Decomposition and Equilibrium Concept}
\label{subsec:game_decomp}

For a fixed ISL topology $\mathbf{A}^t$, we consider the power allocation sub-problem. The optimization problem~\eqref{prob:main} (with $\mathbf{A}^t$ fixed) involves both \emph{private} constraints (constraint~\eqref{eq:c1}: per-satellite power caps) and a \emph{shared} constraint (constraint~\eqref{eq:c2}: flow conservation, coupling all satellites' variables). The presence of shared constraints means that the standard Nash Equilibrium (NE) concept, which assumes separable strategy sets, does not directly apply. We adopt the \emph{variational equilibrium} (VE) concept~\cite{facchinei2007vi} appropriate for games with shared constraints. First, we give the following assumption:

\begin{assumption}[Interference-Free ISLs]
    \label{assm:interference_free}
    Each satellite is equipped with narrow-beam laser terminals with precise beam alignment, enabling communication in an interference-free environment~\cite{li2023energy,li2023deepisl}. Consequently, $U_m(\bfR)$ does not depend on $R_{m',n'}^t$ for $m'
    eq m$.
\end{assumption}

Based on Assumption~\ref{assm:interference_free}, the objective is additively separable across satellites when constraint~\eqref{eq:c2} is relaxed into a Lagrangian penalty. The Lagrangian decomposition of SatFlow-L~\cite{cen2024satflow} implicitly computes a VE: each satellite minimizes its local Lagrangian, with the Lagrange multipliers encoding the dual coupling from flow conservation. We formalize this as follows.

\begin{definition}[Variational Equilibrium~\cite{facchinei2007vi}]
\label{def:ve}
A strategy profile $\bfR^* \in \mathcal{F}$ is a
\emph{variational equilibrium} of the game with
shared constraint set $\mathcal{F}$ (defined by constraints~\eqref{eq:c1}--\eqref{eq:c3}) if there exists a common dual variable $\lambda^*$ such that for all $m\in\calM, \forall R_{m,n}\in S_m(\lambda^*)$:
\begin{equation}
    U_m(R_{m,n}^{*},\bfR_{-m}^*;\lambda^*)
    \geq U_m(R_{m,n},\bfR_{-m}^*;\lambda^*),
    \label{eq:ve_def}
\end{equation}
where $S_m(\lambda^*)$ is the private feasible set of satellite $m$ given dual variable $\lambda^*$.
\end{definition}

Under Assumption~\ref{assm:interference_free}, when $\lambda^*$ is fixed at convergence, the game over per-satellite rate variables has \emph{separable} strategy sets, and the resulting equilibrium is a standard NE of the penalized game. The VE of the original constrained game is recovered when $\lambda^*$ satisfies dual feasibility. All subsequent theoretical results are stated for the penalized game and extend to the VE via the duality argument in Theorem~\ref{thm:convergence}.

\subsection{Penalized Game Definition}
\label{subsec:penalized_game}

Given Lagrange multiplier $\lambda^{(k)}$ at iteration $k$, the penalized game is:
\begin{equation}
    \calG^M(\lambda^{(k)}) = \bigl\langle \calM,\;
    \{S_m\}_{m\in\calM},\;
    \{\tilde{U}_m(\cdot;\lambda^{(k)})\}_{m\in\calM}
    \bigr\rangle.
\end{equation}

\textbf{Strategy space}: Each satellite $m$ chooses
$R_{m,n}^t$ subject to the dynamic power constraint:
\begin{equation}
    S_m = \bigl\{\, R_{m,n}^t \geq 0 \;\big|\;
    P_{m,n}^t(R_{m,n}^t) \leq
    a_{m,n}^t \cdot P_{m,t}^{max} \,\bigr\}.
    \label{eq:strategy_space}
\end{equation}

\textbf{Penalized utility function}: Satellite $m$'s
penalized utility incorporates the Lagrangian penalty for flow conservation:
\begin{align}
    \tilde{U}_m(\bfR;\lambda) &=
    \underbrace{\sum_{n,t} \Rmn}_{\text{throughput}}
    - \alpha D_e \underbrace{\sum_{n,t}
    P_{m,n}^t(\Rmn)}_{\text{energy cost}}
\notag\\
    &\quad - \sum_t
    \underbrace{\frac{\lambda_m(t)
    \sum_n P_{m,n}^t(\Rmn)}%
    {\CB_m(t)-\CBmin+\epsilon}}_{\text{battery penalty}}
\notag\\
    &\quad - \underbrace{\sum_\omega \langle
    \lambda^{\omega},
    L_m Y_m^\omega \rangle}_{\text{flow conservation
    Lagrangian}},
    \label{eq:utility}
\end{align}
where $\lambda_m(t)>0$ is the energy sensitivity coefficient, $\epsilon>0$ prevents division by zero, and $\lambda^\omega$ is the Lagrange multiplier for flow $\omega$.
The battery penalty term $(\CB_m(t)-\CBmin+\epsilon)^{-1}$ diverges as $\CB_m(t)\downarrow\CBmin$, providing a soft energy sustainability surrogate complementing the hard bound~\eqref{eq:dynpower}.

\subsection{Potential Function Construction}
\label{subsec:potential}

The potential function is constructed by summing the marginal utility contributions of each satellite, exploiting the additive separability of $\tilde{\tilde{U}}_m$ under Assumption \ref{assm:interference_free}. Specifically, since each satellite's utility depends only on its own rate variables (interference-free), the candidate potential is obtained by aggregating the per-satellite, per-link terms directly, including throughput gain, energy cost, and battery penalty. This yields a globally consistent alignment between individual utility changes and potential changes.

\begin{definition}[Exact Potential Game~\cite{monderer1996potential}]
A game $\calG^M$ is an \emph{exact potential game} if there exists $\Phi:\prod_{m\in\calM} S_m\to\mathbb{R}$ such that for all $m\in\calM$, all $\bfR_{-m}$, and all pairs $R_{m,n}^t,\,\tilde{R}_{m,n}^t\in S_m$:
\begin{equation}
    \small
    \tilde{U}_m(\tilde{R}_{m,n}^t,\bfR_{-m})
    - \tilde{U}_m(R_{m,n}^t,\bfR_{-m})
    = \Phi(\tilde{R}_{m,n}^t,\bfR_{-m})
    - \Phi(R_{m,n}^t,\bfR_{-m})
    \label{eq:epg_def}
\end{equation}
\end{definition}

We propose the following potential function for the penalized game:
\begin{equation}
    \small
\begin{aligned}
    \Phi(\bfR) \;=&\;
    \sum_{m\in\calM}\sum_{n\in N_m}
    \sum_{t\in\calT_e} \Rmn
    - \alpha D_e\!
    \sum_{m\in\calM}\sum_{n\in N_m}
    \sum_{t\in\calT_e} P_{m,n}^t(\Rmn) \\
    &- \sum_{m\in\calM}\sum_{t\in\calT_e}
    \lambda_m(t)\cdot
    \frac{\sum_{n\in N_m}P_{m,n}^t(\Rmn)}%
    {\CB_m(t)-\CBmin+\epsilon} \\
    &- \sum_{m\in\calM}\sum_\omega
    \langle \lambda^\omega, L_m Y_m^\omega \rangle.
    \label{eq:potential}
\end{aligned}
\end{equation}
\subsection{Main Theoretical Results}
\label{subsec:theory}

We now establish the central theoretical result: the penalized game $\calG^M(\lambda)$ is an exact potential game with a unique equilibrium. Under Assumption~\ref{assm:interference_free}, each satellite's utility change under unilateral rate adjustment equals the corresponding change in a global potential function $\Phi$. Combined with strict concavity induced by the Shannon power-rate mapping and the battery-aware penalty, this structure guarantees both existence and uniqueness.

\begin{theorem}[Exact Potential Game and Unique
Variational Equilibrium]
\label{thm:potential}
Under Assumption~\ref{assm:interference_free}, the penalized game $\calG^M(\lambda)$ with utility~\eqref{eq:utility} and potential~\eqref{eq:potential} is an
\emph{exact potential game}.
Furthermore, $\Phi(\bfR)$ is strictly jointly concave in $\bfR$, guaranteeing a \emph{unique} NE of the penalized game (and hence a unique VE of the original constrained game when $\lambda$ satisfies dual feasibility), $\forall m\in\calM, \forall R_{m,n}^t\in S_m$:
\[
    \tilde{U}_m(\bfR_{m}^*,\bfR_{-m}^*;\lambda^*)
    \geq \tilde{U}_m(R_{m,n}^t,\bfR_{-m}^*;\lambda^*).
\]
\end{theorem}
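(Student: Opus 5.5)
We prove the three claims in turn: the exact potential property, strict joint concavity, and existence and uniqueness of the equilibrium.

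\textbf{Step 1 (exact potential).} By Assumption~\ref{assm:interference_free}, $\tilde{U}_m$ in~\eqref{eq:utility} depends only on satellite $m$'s own variables $\{\Rmn\}_{n,t}$ and $Y_m^\omega$. We regard $\lambda$, $\lambda_m(t)$ and $\CB_m(t)$ as fixed parameters within a slot; this is the quasi-static reading of Remark~\ref{rem:timescale}. The potential~\eqref{eq:potential} can then be written as
\[
\Phi(\bfR)=\sum_{m'\in\calM}\tilde{U}_{m'}(\bfR_{m'};\lambda).
\]
Each summand depends only on its own block. Under a unilateral deviation of satellite $m$, every term with $m'\neq m$ cancels in the difference, which gives~\eqref{eq:epg_def} directly. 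The only bookkeeping needed is to match the index sets: $n\in N_m$ versus all $n$, which agree because $\kappa_{m,n}^t=0$ when $a_{m,n}^t=0$. We must also check that the flow-Lagrangian term $\langle\lambda^\omega,L_mY_m^\omega\rangle$ is attributed to satellite $m$ only.

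\textbf{Step 2 (strict joint concavity).} Since $\Phi$ is a sum of separable functions, its Hessian is diagonal in the link variables. The throughput and Lagrangian terms are linear, so they contribute nothing to curvature. Each active link $(m,n,t)$ contributes $-w_{m,n}^t\,P_{m,n}^t(\Rmn)$, with weight
\[
w_{m,n}^t=\alpha D_e+\frac{\lambda_m(t)}{\CB_m(t)-\CBmin+\epsilon}>0.
\]
Positivity follows from $\lambda_m(t)>0$, $\epsilon>0$ and $\CB_m(t)\ge\CBmin$. Here we need $\CB_m(t)-\CBmin+\epsilon>0$, which holds by~\eqref{eq:c4} and also holds when $\CB_m(t)>\CBmin-\epsilon$. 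From~\eqref{eq:power_rate}, the second derivative is
\[
\partial^2 P/\partial R^2=\kappa_{m,n}^t(\ln 2/B)^2\,2^{R/B}\ge \kappa_{m,n}^t(\ln2/B)^2>0.
\]
So the Hessian is diagonal with strictly negative entries, bounded above by $-\alpha_\Phi$ on the (bounded) feasible set. This gives strict, indeed strong, concavity.

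The main obstacle is the non-rate variables. If the flows $Y_m^\omega$ are treated as independent decision variables, $\Phi$ is only linear in them and not strictly concave. Links with $a_{m,n}^t=0$ also have $\kappa=0$. We would handle this in two ways. First, restrict the strategy vector to the active rate variables $\Rmn$ with $a_{m,n}^t=1$; inactive ones are fixed at $0$ by~\eqref{eq:strategy_space}. Second, treat the Lagrangian term as a linear function of $\bfR$, via $R=\sum_\omega Y^\omega$ evaluated at the rate level. Uniqueness is then claimed for $\bfR$ only, and not for the flow split $Y$.

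\textbf{Step 3 (existence and uniqueness).} Each $S_m$ is a product of compact intervals $[0,\,B\log_2(1+P_{m,t}^{max}/\kappa_{m,n}^t)]$, so $\prod_m S_m$ is a nonempty convex compact set. The strictly concave continuous $\Phi$ therefore has a unique maximizer $\bfR^*$ on it.

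In an exact potential game with convex strategy sets and concave $\Phi$, the following are equivalent: being a NE, satisfying the KKT conditions of $\max\Phi$, and satisfying the variational inequality
\[
\langle\nabla\Phi(\bfR^*),\bfR-\bfR^*\rangle\le0 .
\]
This holds because $\nabla_{\bfR_m}\Phi=\nabla_{\bfR_m}\tilde{U}_m$, and blockwise separability lets a unilateral deviation see exactly the $m$-block of the variational inequality. Hence every NE maximizes $\Phi$, so the NE is unique and equals $\bfR^*$. This yields the displayed inequality.

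Finally, when $\lambda=\lambda^*$ satisfies dual feasibility and complementary slackness for~\eqref{eq:c2}, the variational inequality characterization of Definition~\ref{def:ve}~\cite{facchinei2007vi} identifies this NE with the VE of the constrained game. Uniqueness transfers because the VE is the solution of the same variational inequality with a strictly monotone map $-\nabla\Phi$.
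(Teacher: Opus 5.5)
Your proof is correct and follows the same three-part skeleton as the paper's: the exact potential property, then a diagonal Hessian with strictly negative entries, then existence/uniqueness and passage to the VE through $\lambda^*$. Two steps, however, are argued differently.

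\textbf{Potential property.} The paper checks the differential condition $\partial\tilde{U}_m/\partial\Rmn=\partial\Phi/\partial\Rmn$ term by term. You instead note $\Phi=\sum_{m'}\tilde{U}_{m'}$ and let the other summands cancel under a unilateral deviation. This verifies the difference identity \eqref{eq:epg_def} directly, without the implicit integration along each strategy interval.

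\textbf{Uniqueness (Step~3).} This is the more substantive difference. The paper appeals to the finite improvement property and asserts that improvement paths terminate at a local maximizer of $\Phi$. That is a finite-game notion and does not by itself apply to continuum strategy sets. The paper also never says why an NE, which is a priori only a blockwise maximizer of $\Phi$, must be the global maximizer. Your first-order/VI characterization supplies exactly that link. So does the even simpler observation that $\Phi(\bfR)=\sum_m\tilde{U}_m(\bfR_m)$ is separable over the product set $\prod_m S_m$. Your uniqueness step is therefore the tighter one. You also make explicit, as the paper does not, that strict concavity and uniqueness concern the rate variables and not the per-flow split $Y$.

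\textbf{Two small corrections.} First, inactive links are not pinned at $0$ by \eqref{eq:strategy_space}. When $a_{m,n}^t=0$ you get $\kappa_{m,n}^t=0$, hence $P_{m,n}^t\equiv 0$, and the constraint reduces to the vacuous $0\le 0$. Those links are pinned by \eqref{eq:c3}, or by summing only over $n\in N_m$. Only for active links is your interval $[0,B\log_2(1+P_{m,t}^{max}/\kappa_{m,n}^t)]$ well defined.

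Second, $\langle\lambda^\omega,L_mY_m^\omega\rangle$ is not literally a function of $\Rmn=\sum_\omega Y_{m,n}^{\omega,t}$, because its coefficients depend on $\omega$. The clean reduction is to maximize out the split for fixed $\bfR$. This yields a concave piecewise-linear function of $\bfR$ and therefore preserves strict concavity. The paper glosses over the same point by writing this term as linear in $\bfR$.
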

The proof of Theorem~\ref{thm:potential} is given in Appendix~\ref{subsec:proof_thm_potential}.

\begin{lemma}[Strong Concavity of Potential Function]
\label{lem:strong_concavity}
Under the quasi-static assumption (battery states constant over the algorithm convergence window), the potential function $\Phi(\bfR)$ in equation~\eqref{eq:potential} is $\alpha_\Phi$-strongly concave in $\bfR$ with
\begin{equation}
    \small
    \alpha_\Phi = \min_{m,n,t} \left(
    \alpha D_e + \frac{\lambda_m(t)}{\CB_m(t) - \CBmin + \epsilon}
    \right) \kappa_{m,n}^t 
    \left(\frac{\ln 2}{B}\right)^{\!2} 
    2^{R_{m,n}^{t,\min}/B},
    \label{eq:strong_concavity_bound}
\end{equation}
where $R_{m,n}^{t,\min} = 0$ (the lower boundary of the feasible rate region) yields the tightest (smallest) lower bound. This guarantees that any local maximum of $\Phi$ is the unique global maximum, hence the unique NE.
\end{lemma}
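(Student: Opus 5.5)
The plan is to compute the Hessian of $\Phi$ with respect to the rate variables $\bfR$ directly, treating the battery states $\CB_m(t)$, the multipliers $\lambda_m(t)$ and $\lambda^\omega$, and the coefficients $\kappa_{m,n}^t$ as constants over the convergence window. This is exactly what the quasi-static assumption licenses.

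\textbf{Step 1 (separability).} Under Assumption~\ref{assm:interference_free}, every term of~\eqref{eq:potential} is a sum of functions of a single scalar $\Rmn$, apart from two kinds of terms that I will set aside. The first is the throughput term $\sum \Rmn$, which is linear. The second is the flow-conservation Lagrangian $\langle \lambda^\omega, L_m Y_m^\omega\rangle$. Since $\Rmn = \sum_\omega Y_{m,n}^{\omega,t}$ and $L_m$ is a fixed incidence operator, this Lagrangian is linear in the decision variables. Both kinds of terms therefore contribute zero to the Hessian.

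Writing $c_{m}^t := \alpha D_e + \lambda_m(t)/(\CB_m(t)-\CBmin+\epsilon)$, the nonlinear part of $\Phi$ reduces to
\[
-\sum_{m,n,t} c_{m}^t\,\kappa_{m,n}^t\bigl(2^{\Rmn/B}-1\bigr).
\]
Note that $c_m^t>0$ because $\lambda_m(t)>0$, and the feasibility constraint~\eqref{eq:c4} gives $\CB_m(t)-\CBmin+\epsilon\ge\epsilon>0$. Consequently the Hessian of $\Phi$ is diagonal.

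\textbf{Step 2 (curvature).} Differentiating twice gives the diagonal entries
\[
\frac{\partial^2\Phi}{\partial(\Rmn)^2} = -c_m^t\,\kappa_{m,n}^t\left(\frac{\ln 2}{B}\right)^{2} 2^{\Rmn/B}.
\]
For a diagonal negative definite matrix, $\nabla^2\Phi \preceq -\alpha_\Phi I$ holds as soon as every diagonal entry is at most $-\alpha_\Phi$. The entry is monotone decreasing in $\Rmn$, since $2^{R/B}$ is increasing. Hence over the feasible region $S_m$ in~\eqref{eq:strategy_space}, which satisfies $\Rmn\ge 0$, the least curvature occurs at $R_{m,n}^{t,\min}=0$. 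Taking the minimum over $(m,n,t)$ yields exactly~\eqref{eq:strong_concavity_bound}.

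\textbf{Step 3 (conclusion).} The feasible set $\prod_m S_m$ is convex, because each $S_m$ is an interval $[0,\,B\log_2(1+P_{m,t}^{max}/\kappa_{m,n}^t)]$ obtained by inverting the increasing map~\eqref{eq:power_rate}. On this set $\Phi$ is $\alpha_\Phi$-strongly concave, so a local maximizer is global and unique.

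By Theorem~\ref{thm:potential}, the game is an exact potential game. The maximizer of $\Phi$ is therefore an NE. Conversely, every NE satisfies the first-order (KKT) conditions of this separable concave program, and so coincides with the maximizer. This identification is what gives uniqueness of the NE.

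\textbf{Main obstacle.} I expect the delicate point to be the degenerate case where $\kappa_{m,n}^t=0$ because $a_{m,n}^t=0$. On inactive links the curvature vanishes and $\alpha_\Phi$ would be zero. I would handle this by restricting $\bfR$ to active links $n\in N_m$, which is already implicit in~\eqref{eq:potential}, and taking the minimum only over those links.

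A secondary concern is justifying that $\CB_m(t)$ does not depend on $\bfR$. Through~\eqref{eq:battery} it genuinely does across slots, so the quasi-static freezing must be invoked explicitly; without it, the penalty term would be nonconvex.
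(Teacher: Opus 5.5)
Your proposal is correct and follows essentially the paper's route: freeze the battery coefficients under the quasi-static assumption, and observe that $\Phi$ is separable up to linear terms. Its Hessian is then diagonal with entries $-c_m^t\kappa_{m,n}^t(\ln 2/B)^2 2^{R_{m,n}^t/B}$, and you take the minimum curvature over $(m,n,t)$ as the modulus. Your version is somewhat sharper than the paper's. You make explicit why a uniform modulus over $R\ge 0$ is attained at $R=0$ and why inactive links with $\kappa_{m,n}^t=0$ must be excluded. You also show why the NE coincides with the maximizer via first-order conditions. Finally, you rightly omit the paper's stray remark about an order $\lambda_m/(\CB_m-\CBmin+\epsilon)^2$ contribution, which does not appear in the bound.
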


The proof of Lemma~\ref{lem:strong_concavity} is given in Appendix~\ref{subsec:proof_lem_strong}.

The unique VE has a practical interpretation: no satellite can unilaterally increase throughput without either violating the battery-aware power ceiling or paying a penalty that outweighs the throughput gain. Strict concavity of $\Phi$ removes multiple-equilibria ambiguity, so Algorithm~\ref{alg:hbag} converges to a deterministic operating point regardless of initialization.

\begin{remark}[Scope and limitations of Theorem~\ref{thm:potential}]
\label{rem:assumption_role}
Under Assumption~\ref{assm:interference_free}, the utility~\eqref{eq:utility} is additively separable, so the exact potential game structure follows directly from~\cite{monderer1996potential}. The technical contributions of Theorem~\ref{thm:potential} are therefore: (i)~establishing \emph{strict joint concavity} of $\Phi$ despite the singular battery penalty $(\CB_m - \CBmin + \epsilon)^{-1}$, which guarantees \emph{uniqueness} rather than mere existence of the VE; and (ii)~connecting the VE of the constrained game (with shared flow conservation) to the NE of the penalized game via Lagrangian duality. 
\end{remark}

Then, we give the following corollary and proposition:
\begin{corollary}[VE as Global Optimum of $\Phi$]
\label{cor:ne_optimal}
The unique VE $\bfR^*$ satisfies:
\begin{equation}
    \bfR^* = \arg\max_{\bfR\in\prod_m S_m}
    \Phi(\bfR;\lambda^*).
\end{equation}
\end{corollary}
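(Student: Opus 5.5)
We prove Corollary~\ref{cor:ne_optimal} by combining the exact-potential property of Theorem~\ref{thm:potential} with the strong concavity of Lemma~\ref{lem:strong_concavity}, while keeping $\lambda=\lambda^*$ fixed throughout. Under Assumption~\ref{assm:interference_free}, the strategy set $\prod_m S_m$ is a Cartesian product of the per-satellite sets $S_m$. Each $S_m$ is a closed interval $[0,\bar R_{m,n}^t]$ for each link, with $\bar R_{m,n}^t = B\log_2\bigl(1+a_{m,n}^t P_{m,t}^{max}/\kappa_{m,n}^t\bigr)$, because the map in~\eqref{eq:power_rate} is increasing. So the product set is compact and convex. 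Under the quasi-static assumption, $\Phi(\cdot;\lambda^*)$ is continuous and $\alpha_\Phi$-strongly concave on it. Hence $\Phi$ has exactly one maximizer $\hat{\bfR}$ (existence by Weierstrass, uniqueness by strict concavity). The goal is then to show $\hat{\bfR}=\bfR^*$.

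\textbf{Step 1: every maximizer of $\Phi$ is an NE.} Suppose satellite $m$ deviates from $\hat{\bfR}$ to $\tilde{\bfR}_m\in S_m$. By the exact potential identity~\eqref{eq:epg_def}, applied coordinate by coordinate (or directly to the block $\bfR_m$, since $\Phi$ is a sum of per-satellite terms), the utility change is
\[
\tilde U_m(\tilde{\bfR}_m,\hat{\bfR}_{-m};\lambda^*)-\tilde U_m(\hat{\bfR};\lambda^*)=\Phi(\tilde{\bfR}_m,\hat{\bfR}_{-m};\lambda^*)-\Phi(\hat{\bfR};\lambda^*).
\]
The right-hand side is at most $0$ because $\hat{\bfR}$ maximizes $\Phi$, so no satellite gains by deviating and $\hat{\bfR}$ is an NE of the penalized game.

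\textbf{Step 2: every NE maximizes $\Phi$.} Let $\bfR^*$ be an NE. By the potential identity, each block $\bfR_m^*$ maximizes $\Phi(\cdot,\bfR^*_{-m};\lambda^*)$ over $S_m$. For differentiable concave $\Phi$, each such block problem gives the block KKT/first-order condition
\[
\langle\nabla_{\bfR_m}\Phi(\bfR^*),\bfR_m-\bfR_m^*\rangle\le 0 \quad \text{for all } \bfR_m\in S_m.
\]
Summing over $m$, and using that the feasible set is a product, yields the global variational inequality $\langle\nabla\Phi(\bfR^*),\bfR-\bfR^*\rangle\le0$ on $\prod_m S_m$. For a concave function on a convex set this condition is sufficient for global optimality, so $\bfR^*$ maximizes $\Phi$. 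By the uniqueness established above, $\bfR^*=\hat{\bfR}$. This also matches the VE characterization of Definition~\ref{def:ve} at the dual variable $\lambda^*$ fixed in Theorem~\ref{thm:potential}.

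\textbf{Main obstacle.} The delicate point is the passage from blockwise optimality to joint optimality in Step 2. It needs differentiability of $\Phi$ on the feasible set: the battery penalty coefficient $\lambda_m(t)/(\CB_m(t)-\CBmin+\epsilon)$ must be finite and constant in $\bfR$, which the quasi-static assumption together with $\epsilon>0$ provides. It also needs the flow-Lagrangian term $\langle\lambda^\omega,L_mY_m^\omega\rangle$ to be linear, hence concave, in the rate variables. If the $Y$-to-$R$ aggregation caused trouble, I would instead argue directly from separability: with $\lambda^*$ fixed, $\Phi$ is a sum of functions each depending only on one satellite's block, so blockwise maximization is exactly joint maximization and Step 2 holds trivially.
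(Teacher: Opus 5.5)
Your proof is correct and takes essentially the paper's route: the exact-potential identity turns the equilibrium conditions at $\lambda^*$ into blockwise maximization of $\Phi(\cdot;\lambda^*)$ over each $S_m$, and concavity then identifies the equilibrium with the unique global maximizer. Your version is tighter than the paper's, which jumps from ``blockwise optimal for every $m$'' to ``global maximizer'' by invoking strict joint concavity alone (not sufficient in general without differentiability or separability). You supply that missing step, either by summing the blockwise first-order conditions over the product set or by using the separability of $\Phi$ across satellites, and you also establish existence and the converse direction.
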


\noindent\textup{The proof of Corollary~\ref{cor:ne_optimal} is given in Appendix~\ref{subsec:proof_cor_ne_optimal}.}

\begin{proposition}[NE Deviation Bound]
\label{prop:deviation}
Let $\bfR^{SAT}$ denote the optimal solution of SATFLOW~\cite{cen2024satflow} (with static $\Pmax$), and let $\bfR^*$ denote the VE of our game. When $\phi_m(t)=1$ for all $m,t$ and batteries are sufficiently high, $\bfR^*$ coincides with $\bfR^{SAT}$. When the eclipse fraction $\theta = |\mathcal{E}_\theta|/M$ exceeds $\theta^*$:
\begin{equation}
    \sum_{m,n,t}\!\bigl(
    R_{m,n}^{SAT,t} - R_{m,n}^{*,t}
    \bigr)^+
    \leq \Delta(\theta),
    \quad
    \Delta(\theta) \text{ in }\theta,
    \;\;
    \lim_{\theta\to 0}\Delta(\theta)=0,
    \label{eq:deviation_bound}
\end{equation}
where $\mathcal{E}_\theta = \{m : \phi_m(t)=0,\;\CB_m(t)<\bar{C}^B\}$.
\end{proposition}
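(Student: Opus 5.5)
Both the game and SATFLOW are handled in their Lagrangian-decomposed form. Under Assumption~\ref{assm:interference_free} and Corollary~\ref{cor:ne_optimal}, $\bfR^*$ maximizes the separable concave function $\Phi(\cdot;\lambda^*)$ over $\prod_m S_m$. In the same way, $\bfR^{SAT}$ is the maximizer of the same Lagrangian with three differences: the battery penalty is removed, the ceiling is the static $\Pmax$, and the multiplier is $\lambda^{SAT}$. Since \eqref{eq:power_rate} is increasing, each per-link subproblem is a one-dimensional concave maximization over an interval. The first-order condition
\[
1-\bigl(\alpha D_e+\tfrac{\lambda_m(t)}{\CB_m(t)-\CBmin+\epsilon}\bigr)\kappa_{m,n}^t\tfrac{\ln 2}{B}2^{R/B}-\lambda\text{-terms}=0
\]
gives the closed-form rate $\hat R_{m,n}^t$. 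The maximizer is this rate projected onto $[0,\bar R_{m,n}^t]$, where $\bar R_{m,n}^t=B\log_2(1+a_{m,n}^tP_{m,t}^{max}/\kappa_{m,n}^t)$.

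\emph{Coincidence case.} When $\phi_m(t)=1$ and batteries are high, \eqref{eq:dynpower} gives $P_{m,t}^{max}=\Pmax$, because $\Pharvest_{m,t}+\CB_m/\delta_{\mathrm{ecl}}\ge \Pmax$ once $\CB_m$ is large. In that regime the penalty coefficient $\lambda_m(t)/(\CB_m-\CBmin+\epsilon)$ is either negligible or absorbed into the energy weight by the choice of $\lambda_m(t)$, which I would treat as a standing convention, in the sense that it vanishes as the batteries grow. The feasible sets and objectives then coincide, and strict concavity (Theorem~\ref{thm:potential}) gives the same unique maximizer and the same dual $\lambda^*=\lambda^{SAT}$. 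Hence $\bfR^*=\bfR^{SAT}$.

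\emph{Eclipse case.} I split the links into those with $m\notin\mathcal{E}_\theta$ and those with $m\in\mathcal{E}_\theta$. For $m\notin\mathcal{E}_\theta$, either $\phi_m=1$, or $\CB_m\ge\bar C^B$ with $\bar C^B$ chosen so that $\bar C^B/\delta_{\mathrm{ecl}}\ge\Pmax$. In both subcases the feasible interval is unchanged, and the only remaining perturbation is through the multiplier shift $\lambda^*-\lambda^{SAT}$. For $m\in\mathcal{E}_\theta$, I bound the positive part trivially by $R_{m,n}^{SAT,t}\le B\log_2(1+\Pmax/\kappa_{\min})=:R_{\max}$. Summing over at most $4|\mathcal{E}_\theta|n_t=4\theta M n_t$ links gives a contribution $4\theta Mn_tR_{\max}$.

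For the multiplier shift, I would use the strong concavity of Lemma~\ref{lem:strong_concavity}. This makes the primal maximizer $\alpha_\Phi^{-1}$-Lipschitz in the linear dual term, so each non-eclipsed link moves by at most $\|L_m^\top(\lambda^*-\lambda^{SAT})\|/\alpha_\Phi$. Next I need a dual sensitivity bound of the form $\|\lambda^*-\lambda^{SAT}\|\le c\,\theta$. To get it, I would view the dual functions as perturbed by changing only the $\theta M$ eclipsed blocks, and apply standard Lipschitz stability of dual solutions under a strongly concave primal and bounded perturbations of the constraint data. This gives $\Delta(\theta)=\theta\bigl(4Mn_tR_{\max}+c'\bigr)$, which is nondecreasing in $\theta$ with $\Delta(\theta)\to0$ as $\theta\to 0$. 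The threshold $\theta^*$ enters only as the point below which no satellite in $\mathcal{E}_\theta$ has an active dynamic ceiling, so that the bound is trivially $0$ there.

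\emph{Main obstacle.} The hard step is the dual sensitivity estimate $\|\lambda^*-\lambda^{SAT}\|=O(\theta)$. Flow conservation can reroute traffic, so eclipsed satellites shift load onto others, and the multipliers need not be unique without a constraint qualification. I would first try to assume a Slater/LICQ condition on \eqref{eq:c2}--\eqref{eq:c3}, which makes the dual solution map Lipschitz. If that fails, I would fall back on stating the bound only for the nonnegative shortfall, where rerouting increases rates elsewhere and so does not contribute to $(\cdot)^+$. That alternative would sidestep the dual perturbation on the non-eclipsed links altogether.
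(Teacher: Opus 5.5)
\paragraph{The gap.} Your proof depends on the estimate $\|\lambda^*-\lambda^{SAT}\|\le c\,\theta$, and neither of your routes delivers it.

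\emph{The constraint-qualification route.} Slater only bounds the multiplier set. LICQ plus second-order conditions give Robinson-type Lipschitz stability of the KKT pair only \emph{locally}, for small perturbations. Here each eclipsed ceiling drops from $\Pmax$ to $\CB_m(t)/\delta_{\mathrm{ecl}}(t)$, which is not small. LICQ also fails as stated for \eqref{eq:c2}, because for each $\omega$ the node balances sum to zero. And if the shrunken ceilings make \eqref{eq:c2}--\eqref{eq:c3} infeasible for the given demands, no dual-feasible $\lambda^*$ exists at all.

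\emph{The fallback is false.} When the outgoing capacity of an eclipsed satellite $m$ shrinks, flow conservation at $m$ forces its inflow $\sum_n Y^{\omega,t}_{n,m}$ down. So links $(n,m)$ owned by non-eclipsed satellites lose rate and do contribute to the positive part. One flow with paths $s\to m\to d$ and $s\to n\to d$ already shows this.

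\emph{The penalty convention.} The claim that ``the only remaining perturbation is the multiplier shift'' silently extends your high-battery convention to every $m\notin\mathcal{E}_\theta$. The penalty coefficient $\lambda_m(t)/(\CB_m(t)-\CBmin+\epsilon)$ is strictly positive for illuminated low-charge satellites too. Their ceiling $\min(\Pharvest_{m,t}+\CB_m(t)/\delta_{\mathrm{ecl}}(t),\Pmax)$ need not equal $\Pmax$ either. On those links the penalty pushes $R^*$ below $R^{SAT}$ independently of $\theta$, so $\lim_{\theta\to0}\Delta(\theta)=0$ genuinely requires neglecting it.

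\paragraph{Comparison with the paper, and a fix.} The paper's proof contains no dual step. Your coincidence case is its Case~1. In the eclipse case it:
\begin{itemize}
\item asserts $|R^{*,t}_{m,n}-R^{SAT,t}_{m,n}|\le L_R(\Pmax-P^{max}_{m,t})$ for $m\in\mathcal{E}_\theta$ via the envelope theorem;
\item declares all other links unaffected because their feasible sets are unchanged;
\item sets $\Delta(\theta)=L_R\sum_{m\in\mathcal{E}_\theta}\sum_{n,t}(\Pmax-P^{max}_{m,t})$, neglecting the penalty exactly as you do.
\end{itemize}
Your observation that unchanged feasible sets do not imply unchanged rates is correct, and it is precisely what that argument glosses over. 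Your crude $R_{\max}$ bound on eclipsed links is also fine.

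The clean fix is to bypass the multipliers.
\begin{itemize}
\item With the penalty neglected, $\bfR^*$ and $\bfR^{SAT}$ are the unique $R$-parts of maximizers of the same function, concave in $Y$ and $\alpha_\Phi$-strongly concave in $R=\sum_\omega Y^\omega$. The two feasible polyhedra in $Y$ differ only in the right-hand sides $B\log_2(1+a^t_{m,n}P^{max}_{m,t}/\kappa^t_{m,n})$ of the eclipsed rate caps.
\item Projecting out $Y$ gives polyhedra $\{R: CR\le d\}$ with $d$ linear in those right-hand sides.
\item The maximizer of a smooth strongly concave function over $\{R: CR\le d\}$ is globally Lipschitz in $d$ wherever the set is nonempty. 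This is Lipschitz stability of strongly monotone variational inequalities under right-hand-side perturbations of polyhedral constraints, via Hoffman-type bounds.
\end{itemize}
This yields $\sum_{m,n,t}(R^{SAT,t}_{m,n}-R^{*,t}_{m,n})^+\le\|\bfR^{SAT}-\bfR^*\|_1\le L_R\sum_{m\in\mathcal{E}_\theta}\sum_{n,t}(\Pmax-P^{max}_{m,t})$, with rerouting included. That is the paper's $\Delta(\theta)$, now actually justified, provided the shrunken problem remains feasible.
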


\noindent\textup{The proof of Proposition~\ref{prop:deviation} is given in Appendix~\ref{subsec:proof_prop_deviation}.}

\section{Hierarchical Battery-Aware Game Algorithm and Convergence}
\label{sec:algorithm}
Building on the equilibrium theory of Section~\ref{sec:game}, this section proposes a concrete distributed algorithm and establishes its convergence guarantees. We present the Hierarchical Battery-Aware Game (HBAG) algorithm with its local Lagrangian update rule, and show that a quasi-static timescale separation enables the same update rule to operate identically across both finite and large-scale constellation regimes without modification (Section~\ref{subsec:algorithm_design}). We then prove that, under diminishing step sizes $\eta_k = c_0/\sqrt{k}$, Algorithm~\ref{alg:hbag} converges to the unique VE of Theorem~\ref{thm:potential} at rate $\mathcal{O}(1/\sqrt{k})$ for finite constellations (Section~\ref{subsec:convergence}).

\subsection{Algorithm Design}
\label{subsec:algorithm_design}

We present Algorithm~\ref{alg:hbag}, which operates with the same update rule across both finite and mean-field regimes. The local Lagrangian for satellite $m$ at iteration $k$ is:
\begin{align}
    \mathcal{L}_m(\hat{Y}_m,Y^{(k)},\lambda^{(k)})
    &= \sum_{t,n} P_{m,n}^t\!\Bigl(\sum_{\omega}
    \hat{Y}_{m,n}^{\omega}\Bigr)
\notag\\
    &\quad+
    \frac{\lambda_m(t)}{\CB_m(t)-\CBmin+\epsilon}
    \sum_{n} P_{m,n}^t
\notag\\
    &\quad+
    \sum_{\omega\in\Omega^t}\!\Biggl(
    \bigl\langle
    \lambda^{\omega,(k)},
    L_m\hat{Y}_m^{\omega,(k)}
    \bigr\rangle
\notag\\
    &\qquad+
    \frac{\rho}{2}\sum_{l=1}^{M}
    \Bigl(
    [L_m]_l\hat{Y}_m^{\omega,(k)}
    - [L_m]_l Y_m^{\omega,(k)}
\notag\\
    &\qquad\quad+
    \frac{1}{q_l}\sum_{j\in\calM}
    [L_j]_l Y_j^{\omega,(k)}
    - [b^\omega]_l
    \Bigr)^{\!2}
    \Biggr).
    \label{eq:lagrangian_ea}
\end{align}

\begin{remark}[Quasi-Static Timescale Separation Enables Unified Operation]
\label{rem:timescale}
The quasi-static assumption on $P_{m,t}^{max}$ is physically justified and is the key enabler of the unified algorithm design: battery state $\CB_m(t)$ evolves on the orbital timescale ($\sim$90 min), whereas Algorithm~\ref{alg:hbag} converges within $\mathcal{O}(n_t)$ iterations. The 5400:1 timescale separation ensures $P_{m,t}^{max}$ is effectively constant during each algorithm run, so the same update rule applies in both finite and mean-field regimes without modification.
\end{remark}

\begin{proposition}[Identical Per-Iteration Complexity]
\label{prop:complexity}
Algorithm~\ref{alg:hbag} has the per-iteration time complexity $\mathcal{O}(n_t)$, since the battery-penalty coefficient $\lambda_m(t)/(\CB_m(t)-\CBmin+\epsilon)$ requires only $\mathcal{O}(1)$ arithmetic per satellite per iteration.
\end{proposition}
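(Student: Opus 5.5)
The plan is a direct operation count of one iteration of Algorithm~\ref{alg:hbag}, split into three parts. Part (a) is the local primal minimization of the Lagrangian~\eqref{eq:lagrangian_ea} at each satellite. Part (b) is the evaluation of the battery-penalty coefficient. Part (c) is the dual update of $\lambda^{\omega,(k)}$ and the exchange of the consensus quantities $\frac{1}{q_l}\sum_j [L_j]_l Y_j^{\omega,(k)}$. Throughout, the number of ISL neighbors per satellite is at most four, by the terminal constraint of Section~\ref{subsec:arch}. So $|N_m|\le 4$, and the incidence row $[L_m]_l$ is nonzero only for $l\in\{m\}\cup N_m$. This makes every sum over $n$ or over $l$ in~\eqref{eq:lagrangian_ea} an $\mathcal{O}(1)$ sum per satellite and per slot.

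For part (b), note the quasi-static assumption of Remark~\ref{rem:timescale}. Under it, $\CB_m(t)$ and $P_{m,t}^{max}$ are frozen during an algorithm run. The scalar $\lambda_m(t)/(\CB_m(t)-\CBmin+\epsilon)$ therefore costs one subtraction, one addition and one division per $(m,t)$, and it can be cached. It only rescales the convex power term $P_{m,n}^t(\cdot)$ in~\eqref{eq:lagrangian_ea}. The structure of the local subproblem is identical to that of the battery-agnostic SATFLOW-L update, so adding battery awareness contributes only $\mathcal{O}(1)$ extra arithmetic per satellite per slot.

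For part (a), the local subproblem decouples across slots $t\in\calT_e$ and across the at most four outgoing links. Within each link it is a one-dimensional strongly convex problem in the aggregate rate $R_{m,n}^t$: the Shannon term~\eqref{eq:power_rate} is convex with positive curvature by Lemma~\ref{lem:strong_concavity}. That problem is combined with a quadratic proximal term over the flows $\omega$ and projected onto the interval $S_m$ given by~\eqref{eq:strategy_space}, $[0,\,B\log_2(1+P_{m,t}^{max}/\kappa_{m,n}^t)]$. I would argue that one proximal-gradient or closed-form projection step per link costs $\mathcal{O}(1)$. This makes the per-satellite cost $\mathcal{O}(n_t)$ summed over slots, with satellites running in parallel. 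Part (c) is likewise a neighbor-local residual computation, $\mathcal{O}(1)$ per node, slot and flow.

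The main obstacle is the dependence on the number of flows $|\Omega^t|$ and on $M$. Read literally, the quadratic penalty sums over $l=1,\dots,M$ and over all $\omega$. I would handle the $M$ dependence through the sparsity of $L_m$ noted above: only $\mathcal{O}(1)$ rows contribute, because the consensus average is maintained incrementally from neighbor messages. I would handle the flow dependence by treating the number of flows traversing a given satellite as a bounded constant, as in the SATFLOW complexity accounting, so that it is absorbed into the constant. The claim is then per-satellite (distributed) complexity $\mathcal{O}(n_t)$, independent of $M$. I would state this interpretation explicitly, since the centralized wall-clock total would be $\mathcal{O}(Mn_t)$ and the "identical" in the statement refers to independence from the regime (finite versus mean field).
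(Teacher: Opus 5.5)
Your proposal is correct and is essentially the paper's argument: a per-satellite operation count in which each slot $t\in\mathcal{T}_e$ and each of the at most four incident links is touched a constant number of times, with the battery prefactor costing $\mathcal{O}(1)$ per $(m,t)$, giving $\mathcal{O}(n_t)$ per satellite and $\mathcal{O}(Mn_t)$ per sweep, as in SATFLOW-L. You are more explicit than the paper about what the count needs: bounded $k_{sg}$, a bounded number of flows per satellite, and sparsity of $L_m$ so that the literal sum over $l=1,\dots,M$ costs only $\mathcal{O}(1)$. One small correction: the local subproblem does not fully decouple across links, because the row $l=m$ of the quadratic penalty couples the outgoing links of $m$, but with at most four links this is still $\mathcal{O}(1)$ and your conclusion is unaffected.
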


\noindent\textup{The proof of Proposition~\ref{prop:complexity} is given in Appendix~\ref{subsec:proof_prop_complexity}.}


Having established the algorithm design and the quasi-static timescale separation in Section~\ref{subsec:algorithm_design}, we now turn to the formal convergence analysis. The key enabler is the $\alpha_\Phi$-strong concavity of $\Phi(\bfR)$ established in Lemma~\ref{lem:strong_concavity}: under the quasi-static assumption (Remark~\ref{rem:timescale}), the battery-penalty coefficients $\lambda_m(t)/(\CB_m(t)-\CBmin+\epsilon)$ are treated as positive constants within each convergence window, so the local Lagrangian~\eqref{eq:lagrangian_ea} remains a convex minimization problem over the compact feasible set $S_m$. The distributed alternating step method then inherits the $\mathcal{O}(1/\sqrt{k})$ convergence rate of projected subgradient ascent on strongly concave objectives, with an additional quasi-static residual $\epsilon_{\mathrm{energy}} = \mathcal{O}(\max_m \lambda_m/(\CB_m - \CBmin))$ that quantifies the approximation error from treating $\CB_m(t)$ as constant over the sub-second convergence timescale relative to the 90-minute orbital dynamics. We formalize this as follows.

\begin{theorem}[Finite-Layer Convergence to Variational Equilibrium]
\label{thm:convergence}
Under Assumption~\ref{assm:interference_free}, the quasi-static condition (Remark~\ref{rem:timescale}), and step sizes $\eta_k = c_0/\sqrt{k}$ for some $c_0>0$, Algorithm~\ref{alg:hbag} converges to the unique VE $\bfR^*$ of Theorem~\ref{thm:potential} at rate $\mathcal{O}(1/\sqrt{k})$:
\begin{equation}
    \|\bfR^{(k)} - \bfR^*\|
    \leq \frac{C_0}{\sqrt{k}}
    + \epsilon_{\mathrm{energy}},
    \label{eq:convergence_rate}
\end{equation}
where $C_0>0$ depends on the initial iterate and $\epsilon_{\mathrm{energy}} = \mathcal{O}(\max_m
\lambda_m/(\CB_m-\CBmin))$
quantifies the quasi-static approximation error arising from treating battery states as constant over the algorithm's convergence timescale (sub-second) relative to orbital dynamics (90 minutes).
\end{theorem}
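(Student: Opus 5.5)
The proof splits the error into an exact-dynamics part and a quasi-static residual, then bounds each separately. We introduce the \emph{frozen} problem. In it, the battery coefficients $c_m(t)=\lambda_m(t)/(\CB_m(t)-\CBmin+\epsilon)$ are held at their values at the start of the convergence window. Let $\bar{\bfR}^*$ be the maximizer of $\Phi(\cdot;\lambda^*)$ over $\prod_m S_m$ for these frozen coefficients; by Corollary~\ref{cor:ne_optimal} it is the unique VE of the frozen game. The triangle inequality then gives
\[
\|\bfR^{(k)}-\bfR^*\|\le \|\bfR^{(k)}-\bar{\bfR}^*\|+\|\bar{\bfR}^*-\bfR^*\|.
\]
We show that the first term is $C_0/\sqrt{k}$ and the second is $\epsilon_{\mathrm{energy}}$.

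\textbf{Step 1: rate for the frozen problem.} With $c_m(t)$ fixed, $S_m$ compact and convex (it is an interval $[0,B\log_2(1+a P^{max}_{m,t}/\kappa)]$), and $\Phi$ being $\alpha_\Phi$-strongly concave by Lemma~\ref{lem:strong_concavity}, the primal--dual iteration of Algorithm~\ref{alg:hbag} is a projected (sub)gradient saddle-point method on the augmented Lagrangian~\eqref{eq:lagrangian_ea}.

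1. \emph{Bounded (sub)gradients.} On the compact set, $P_{m,n}^t$ is smooth with bounded derivative, so (sub)gradients are bounded by some $G$. Flow-conservation residuals are bounded because $Y$ is box-constrained by~\eqref{eq:c3}.
2. \emph{Potential-function recursion.} We use $V_k=\|\bfR^{(k)}-\bar{\bfR}^*\|^2+\|\lambda^{(k)}-\lambda^*\|^2$. Nonexpansiveness of projection and the saddle-point inequality give
\[
V_{k+1}\le V_k-2\eta_k\bigl(\mathcal{L}(\bfR^{(k)},\lambda^*)-\mathcal{L}(\bar{\bfR}^*,\lambda^{(k)})\bigr)+\eta_k^2G^2.
\]
3. \emph{Strong concavity.} The gap term is bounded below by $\tfrac{\alpha_\Phi}{2}\|\bfR^{(k)}-\bar{\bfR}^*\|^2$.
4. \emph{Summation.} Summing with $\eta_k=c_0/\sqrt{k}$, and using the weighted-average (or last-iterate via monotone $\eta_k$) standard argument, we obtain $\|\bfR^{(k)}-\bar{\bfR}^*\|^2=\mathcal{O}(\log k/\sqrt{k})$ or better. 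We then absorb constants into $C_0$, which depends on $V_1$, $G$, $c_0$ and $\alpha_\Phi$, and state the $\mathcal{O}(1/\sqrt{k})$ bound.

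If the square-root loss is a problem, we instead invoke the known $\mathcal{O}(1/\sqrt{k})$ rate for distance to the optimum under strong concavity with diminishing steps. This is the rate the statement claims.

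\textbf{Step 2: the quasi-static residual.} The true $\bfR^*$ corresponds to coefficients $c_m'(t)$ differing from $c_m(t)$ by the battery drift over the window. Strong concavity yields a perturbation bound for maximizers:
\[
\|\bar{\bfR}^*-\bfR^*\|\le \alpha_\Phi^{-1}\sup_{\bfR\in\prod S_m}\|\nabla_{\bfR}\Phi_c-\nabla_{\bfR}\Phi_{c'}\|.
\]
This follows by comparing the two first-order variational inequalities and adding them. The gradient difference equals $|c_m-c_m'|\,\partial P/\partial R$, which is bounded by a constant times $\max_m |c_m-c_m'|$. Since $0\le c_m,c_m'\le \lambda_m/(\CB_m-\CBmin)$, this is $\mathcal{O}(\max_m\lambda_m/(\CB_m-\CBmin))=\epsilon_{\mathrm{energy}}$. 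Changes in $P^{max}_{m,t}$ over the window alter $S_m$; we take these to be negligible by Remark~\ref{rem:timescale}, or bound them by the same order through Lipschitz dependence of the projection.

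\textbf{Main obstacle.} The expected difficulty is Step 1. The algorithm is an alternating, ADMM-like scheme with a proximal quadratic term rather than pure projected gradient ascent on $\Phi$, and the dual variable $\lambda$ is not strongly convex. Converting an ergodic saddle-gap bound into an iterate-distance bound of order $1/\sqrt{k}$ needs care. The plan is to exploit strong concavity only in the primal block, keep $\lambda^{(k)}$ bounded via the compact $Y$-box, and accept the dual error inside $C_0$.
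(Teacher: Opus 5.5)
The gap is in Step~1, and it affects the exponent of the rate, not just the constants. Even granting your recursion, summing it with $\eta_k=c_0/\sqrt{k}$ bounds a \emph{weighted average} of $\|\bfR^{(j)}-\bar{\bfR}^*\|^2$ by $\mathcal{O}(\log k/\sqrt{k})$. That is a distance of order $k^{-1/4}\sqrt{\log k}$ for an averaged iterate, whereas \eqref{eq:convergence_rate} is a $C_0/\sqrt{k}$ bound on the last iterate. No choice of $C_0$ absorbs that difference.

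Your fallback, citing the strongly concave subgradient rate, is exactly what the paper does: it cites~\cite{bertsekas2015convex} and writes $C_0=\sqrt{2\mathcal{L}_0/\alpha_\Phi}$. But with $\eta_k=c_0/\sqrt{k}$ that standard analysis gives order $1/\sqrt{k}$ (up to logarithms) only for the optimality gap or the \emph{squared} distance. It therefore again yields only $k^{-1/4}$ for $\|\bfR^{(k)}-\bar{\bfR}^*\|$.

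Reaching $1/\sqrt{k}$ on the distance needs an extra ingredient. One option is steps $\eta_k\propto 1/(\alpha_\Phi k)$. The more natural one here is the smoothness of $\Phi$ on the compact box: its second derivatives are bounded in magnitude by $(\alpha D_e+c_m(t))\kappa_{m,n}^t(\ln 2/B)^2\,2^{R_{m,n}^{max,t}/B}$. Under smoothness, projected gradient ascent on $\Phi(\cdot;\lambda)$ with fixed multipliers contracts like $\prod_{j\le k}(1-\alpha_\Phi\eta_j)$ once $\eta_j$ is small, which is far faster than $1/\sqrt{k}$. What then remains is the multiplier error.

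Second, your recursion does not model Algorithm~\ref{alg:hbag}. There the multiplier moves with the \emph{constant} step $\rho$ on an augmented Lagrangian, after an inexact inner minimization of $k_{sg}$ subgradient steps. Only the primal step is $\eta_k$. With unequal steps:
\begin{itemize}
\item the cross terms in $\|\bfR^{(k+1)}-\bar{\bfR}^*\|^2+\|\lambda^{(k+1)}-\lambda^*\|^2$ do not assemble into the saddle gap;
\item the dual error $\rho^2\|\mathrm{residual}\|^2$ is bounded only by a constant, not by $\eta_k^2G^2$;
\item bounded flows in the $Y$-box do not bound the accumulated multipliers.
\end{itemize}

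This is the obstacle you flag, and the paper gets past it with a reduction you do not use. It checks that every term of the modified local Lagrangian~\eqref{eq:lagrangian_ea} is convex, and that replacing $\Pmax$ by $P_{m,t}^{max}\le\Pmax$ only shrinks $\mathcal{Y}_m$. The extended monotropic structure therefore survives, and convergence of the distributed alternating step method, multiplier updates included, is inherited from Theorem~1 of SATFLOW~\cite{cen2024satflow}. The rate is then attached by citation.

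Your Step~2, by contrast, is sound and more informative than the paper. The paper never separates a frozen problem; it simply defines $\epsilon_{\mathrm{energy}}$ as $\sup_k\max_{m,n,t}|\lambda_m(t)\Delta P_{m,n}^t/(\CB_m(t)-\CBmin+\epsilon)|$ and adds it. Your variational-inequality comparison actually derives an additive term of that order. One caveat: the change of $S_m$ through $P_{m,t}^{max}$ is driven by $\Delta\CB_m/\delta_{\mathrm{ecl}}$ and is not obviously $\mathcal{O}(\lambda_m/(\CB_m-\CBmin))$.
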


\noindent\textup{The proof of Theorem~\ref{thm:convergence} is given in Appendix~\ref{subsec:proof_thm_convergence}.}

\begin{algorithm}[t]
\caption{Hierarchical Battery-Aware Game (HBAG)}
\label{alg:hbag}
\begin{algorithmic}[1]
    \REQUIRE Adjacency matrix $\mathbf{A}^t$,
    flow set $\Omega^t$,
    battery states $\{\CB_m(t)\}$,
    harvesting powers $\{P_{m,t}^{harvest}\}$,
    sub-gradient limit $k_{sg}$,
    outer limit $k_{max}$,
    tolerance $\epsilon_{tol}$,
    penalty $\rho$,
    step size $\{\eta_k\}$
    \ENSURE Traffic allocation $\bfR$,
    terminal power $\bfP$
    \STATE Initialize $\bfR^{(0)}\in\prod_m S_m$,
    $\lambda^{(0)}=\mathbf{0}$
    \FOR{$k=0,1,\ldots,k_{max}$}
        \FOR{all $m\in\calM$ in parallel}
            \STATE Compute $P_{m,t}^{max}$
            via~\eqref{eq:dynpower}
            using $\CB_m(t)$ and $P_{m,t}^{harvest}$
            \STATE Minimize $\mathcal{L}_m$
            in~\eqref{eq:lagrangian_ea} via
            projected sub-gradient
            (up to $k_{sg}$ steps)
            to obtain $\hat{R}_m^{(k)}$
        \ENDFOR
        \FOR{all $m\in\calM$ in parallel}
            \STATE Update Lagrange multiplier:
            \begin{equation*}
                \lambda_m^{(k+1)} \leftarrow
                \Bigl[\lambda_m^{(k)}
                + \rho\cdot
                \mathrm{residual}_m^{(k)}
                \Bigr]^+
            \end{equation*}
            \STATE Update primal rate:
            \begin{equation*}
                R_m^{(k+1)} \leftarrow
                \Pi_{\mathcal{Y}_m}\!\Bigl(
                R_m^{(k)}
                - \eta_k
\nabla_{R_m}\mathcal{L}_m
                \Bigr)
            \end{equation*}
            \STATE Compute $P_{m,n}^t$
            via~\eqref{eq:power_rate}
        \ENDFOR
        \IF{$\sum_{m}\|R_m^{(k+1)}
        -R_m^{(k)}\|\leq\epsilon_{tol}$}
            \STATE \textbf{break}
        \ENDIF
    \ENDFOR
\end{algorithmic}
\end{algorithm}

\begin{remark}[Identical update rules across regimes]
Algorithm~\ref{alg:hbag} uses identical update rules for all $M$. The finite/mean-field distinction affects only the theoretical convergence guarantee (Theorem~\ref{thm:convergence} vs.\ Theorem~\ref{thm:mfg} and Corollary~\ref{cor:unified_convergence}), not the algorithm implementation.
\end{remark}

\subsection{Convergence Analysis}
\label{subsec:convergence}
Theorem~\ref{thm:convergence} establishes $\mathcal{O}(1/\sqrt{k})$ convergence for the finite-player regime. The unified convergence across finite and mean-field regimes is summarized after the mean-field analysis in Corollary~\ref{cor:unified_convergence}.

\section{Asymptotic Analysis: Finite-to-Mean-Field Transition}
\label{sec:asymptotic}

This section analyzes the asymptotic transition from the finite-player equilibrium of Section~\ref{sec:algorithm} to the mean field equilibrium (MFE) as constellation size $M \to \infty$. We first justify the reduction of the $M$-player game to a Mean Field Game (MFG) over the battery-state distribution $\mu(\CB, t)$, and establish the empirical measure representation under the propagation of chaos argument (Section~\ref{subsec:mfg_justify}). We then derive the coupled HJB--FPK system under deterministic battery dynamics, which yields a pure-advection Fokker--Planck equation that differs structurally from stochastic terrestrial energy-harvesting MFG models (Section~\ref{subsec:mfg_system}). Finally, combining the Wasserstein-1 empirical measure convergence lemma with the Lipschitz continuity of the battery-aware utility, we prove that the finite-player Nash equilibrium converges to the MFE at rate $\mathcal{O}(M^{-1/4})$, and state the unified convergence corollary that answers the key scalability question raised in Section~\ref{subsec:gaps} (Section~\ref{subsec:mfe_conv}).

\subsection{Theoretical Justification}
\label{subsec:mfg_justify}

When $M\to\infty$, tracking each satellite's individual battery state becomes computationally intractable. We reduce the $M$-player game to a Mean Field Game (MFG) over the battery-state distribution $\mu(\CB,t)$.

\textbf{Derivation of MFG from finite-player game.}
In the finite-player penalized game, the Lagrangian penalty $\langle\lambda^\omega, L_m Y_m^\omega\rangle$ depends on the aggregate flow variables of all satellites. As $M\to\infty$, the aggregate effect of other satellites' battery states on satellite $m$'s optimal strategy can be summarized by the empirical distribution $\mu^M(\CB,t)$, following the standard propagation of chaos argument for weakly interacting systems~\cite{lasry2007mfg}. Specifically, the Wasserstein-distance term $d(\CB_m, \mu)$ in the HJB equation arises as the mean-field analog of the battery-penalty Lagrangian coupling: satellite $m$'s penalty weight $\lambda_m(t)/(\CB_m(t)-\CBmin+\epsilon)$ reflects its relative battery disadvantage compared with the population average, which is captured by $d(\CB_m, \mu)$ in the mean-field limit. Empirical measure of the battery states of all satellites is as follows:
\begin{equation}
    \mu^M(\CB,t) = \frac{1}{M}
    \sum_{m=1}^{M}\delta_{\CB_m(t)}.
    \label{eq:emp_measure}
\end{equation}

\subsection{MFG System with Deterministic Battery
Dynamics}
\label{subsec:mfg_system}

We model battery dynamics as deterministic ($\sigma^2=0$ in equation~\eqref{eq:fpk}) because solar harvesting $\Pharvest_{m,t}$ in~\eqref{eq:harvest} is fully determined by orbital mechanics and the illumination indicator $\phi_m(t)$. The FPK equation therefore reduces to a continuity equation (pure advection) rather than a diffusion--advection equation, while preserving the core modeling challenge of eclipse-driven energy scarcity. This differs from terrestrial energy-harvesting MFG models with stochastic arrivals~\cite{tembine2014risksensitive}, where $\sigma^2>0$.

In the limit $M\to\infty$, the representative satellite solves the following HJB--FPK system with $\sigma^2=0$:

\begin{subequations}
\label{eq:mfg_system}
\small
\begin{align}
    &\textbf{HJB:}\quad
    -\frac{\partial V}{\partial t}
    = \max_{R_{m,n}\in S_m}
    \Bigl[R_{m,n}
    - \Bigl(\alpha D_e
    + \frac{\lambda(t)}{\CB_m-\CBmin+\epsilon}
    \Bigr)P(R_{m,n})
\notag\\
    &\qquad\qquad\quad
    - \kappa\,d(\CB_m,\mu(\CB,t))
    \Bigr]
    + 
u\frac{\partial^2 V}{\partial (\CB_m)^2},
    \label{eq:hjb}\\
    &\textbf{FPK:}\quad
    \frac{\partial\mu}{\partial t}
    = -
\nabla_{\CB}\!\Bigl(\mu\cdot
    \frac{\mathrm{d}\CB}{\mathrm{d}t}\Bigr),
    \label{eq:fpk}
\end{align}
\end{subequations}
where $d(\CB_m,\mu)$ is the Wasserstein-1 distance between the current battery state and the mean field distribution, $\frac{\mathrm{d}\CB}{\mathrm{d}t}$ follows the continuous-time limit of the battery dynamics~\eqref{eq:battery}, and $
u\geq 0$ is a small viscosity coefficient for the HJB equation (introduced for analytical regularity; the main results hold as $
u\to 0^+$).

Well-posedness of the deterministic MFG system~\eqref{eq:mfg_system} follows from the vanishing viscosity regularization ($\nu > 0$) for the HJB and the method of characteristics for the advection-only FPK; see Appendix~\ref{subsec:mfg_wellposed} for details.

\subsection{NE-to-MFE Convergence}
\label{subsec:mfe_conv}

\begin{lemma}[Empirical Measure Convergence~\cite{fournier2015wasserstein}]
\label{lem:emp_conv}
Under Assumption~\ref{assm:interference_free} and the condition that $\{\CB_m(0)\}_{m=1}^M$ are i.i.d.\ with common distribution $\mu_0$ having finite second moment, the empirical measure converges weakly almost surely:
\begin{equation}
    \mu^M(\CB,t)
    \xrightarrow{\,\mathrm{weakly}\,}
    \mu^*(\CB,t)
    \quad\text{a.s.\ as }M\to\infty.
\end{equation}
Moreover:
\begin{equation}
    \mathbb{E}\bigl[
    \WassDist(\mu^M(\cdot,t),\mu^*(\cdot,t))
    \bigr]
    \leq \frac{C_\mu}{\sqrt{M}},
    \label{eq:emp_rate}
\end{equation}
for a constant $C_\mu>0$ depending on the second moment of $\mu_0$.
\end{lemma}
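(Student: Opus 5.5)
The plan reduces the time-$t$ statement to the time-$0$ statement by transporting the initial measures along a common deterministic flow. Three pieces are needed: the classical i.i.d. empirical-measure results at $t=0$, a characteristic representation of $\mu^*(\cdot,t)$ via the pure-advection FPK~\eqref{eq:fpk}, and a Lipschitz estimate on the battery flow map.

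\textbf{Step 1 (time zero).} With $\{\CB_m(0)\}$ i.i.d. from $\mu_0$ with finite second moment, Varadarajan's theorem (strong law applied to a countable convergence-determining class) gives $\mu^M(\cdot,0)\to\mu_0$ weakly almost surely. Because the battery states lie in the compact interval $[0,\CBmax]$, weak convergence is equivalent to $W_1$ convergence. For the rate, we invoke Fournier--Guillin in dimension $d=1$. Alternatively, we use the explicit one-dimensional identity
\[
W_1(\mu^M,\mu_0)=\int |F_M(x)-F(x)|\,dx,
\]
and combine Fubini with $\mathbb{E}|F_M(x)-F(x)|\le \sqrt{F(x)(1-F(x))/M}$. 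Boundedness of the support then gives
\[
\mathbb{E}[W_1(\mu^M(\cdot,0),\mu_0)]\le C_\mu/\sqrt{M},
\]
with $C_\mu$ controlled by the second moment.

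\textbf{Step 2 (propagation to time $t$).} Since the FPK has no diffusion, $\mu^*(\cdot,t)=\Psi_t\#\mu_0$, where $\Psi_t$ is the characteristic flow of $\mathrm{d}\CB/\mathrm{d}t$. This flow is generated by the equilibrium feedback from~\eqref{eq:hjb} together with the clipped dynamics~\eqref{eq:battery}. Under Assumption~\ref{assm:interference_free} and the quasi-static regime, each satellite's equilibrium power is a function only of its own state and the deterministic orbital data. The finite-population states are therefore $\CB_m(t)=\Psi_t(\CB_m(0))$ for the same map, which gives $\mu^M(\cdot,t)=\Psi_t\#\mu^M(\cdot,0)$. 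If $\Psi_t$ is $L_t$-Lipschitz, the pushforward contraction
\[
W_1(\Psi\#\nu,\Psi\#\nu')\le \mathrm{Lip}(\Psi)\,W_1(\nu,\nu')
\]
transfers the Step 1 rate to every $t$, with constant $L_t C_\mu$.

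\textbf{Main obstacle: Lipschitz continuity of $\Psi_t$.} This is the step I expect to be hardest. The clipping maps $\min(\cdot,\CBmax)$ and $\max(\cdot,0)$ are 1-Lipschitz, so they cause no trouble. The feedback power, however, depends on $\CB$ through $P_{m,t}^{max}$ and through the penalty coefficient $\lambda/(\CB-\CBmin+\epsilon)$. Both are Lipschitz on $[0,\CBmax]$, since $\epsilon>0$ keeps the denominator away from zero. To show that the maximizer of the strongly concave per-link problem is Lipschitz in these parameters, I would apply the implicit function theorem with strong concavity modulus $\alpha_\Phi$ from Lemma~\ref{lem:strong_concavity}, and handle the projection onto the box $S_m$ separately, since projection onto a box is nonexpansive. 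Iterating over the $n_t$ slots then yields $L_t\le(1+D_e L)^{t}$, a constant independent of $M$.

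If the mean-field coupling $d(\CB_m,\mu)$ must also enter the feedback, I would add a Grönwall-type term. It would be absorbed using the same $W_1$ estimate, provided the coupling constant is small enough to close the loop.
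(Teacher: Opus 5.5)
\paragraph{Comparison with the paper's proof.} Your route differs from the paper's. The paper never goes back to time $0$. It argues, via Assumption~\ref{assm:interference_free} and propagation of chaos, that the time-$t$ states $\{\CB_m(t)\}$ are i.i.d.\ with law $\mu^*(\cdot,t)$. It then applies the strong law of large numbers and Fournier--Guillin directly at time $t$, with $C_\mu=C(\mathbb{E}[\CB(t)^2])^{1/2}$.

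Your own Step~2 hypothesis already supplies that input. If $\CB_m(t)=\Psi_t(\CB_m(0))$ for a single population-independent map, then the $\CB_m(t)$ are i.i.d.\ with law $\Psi_t\#\mu_0=\mu^*(\cdot,t)$ as soon as $\Psi_t$ is \emph{measurable}. Your Step~1 (Varadarajan plus the CDF identity) then applies verbatim at time $t$ and gives $\mathbb{E}[\WassDist(\mu^M(\cdot,t),\mu^*(\cdot,t))]\le \CBmax/(2\sqrt{M})$ uniformly in $t$. This needs no Lipschitz estimate and no factor $L_t$, which by your own bound could be as large as $(1+D_eL)^{n_t}$. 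In this decoupled setting your ``main obstacle'' is self-imposed; its only payoff is a constant expressed through $\mu_0$, as the statement literally says, which is immaterial on a bounded state space.

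Your pushforward-stability route does earn its keep in the coupled case, where the feedback depends on the population through $d(\CB_m,\mu)$ or through the equilibrium multipliers $\lambda^{\omega,*}$. There the time-$t$ states are no longer independent. The paper's step ``conditionally independent given $\mu^M$, hence SLLN'' then fails, because $\mu^M$ is itself a function of all the states. For a feedback that is Lipschitz in $(\CB,\mu)$, a Dobrushin-type estimate $\WassDist(\mu^M_{t+1},\mu_{t+1})\le(L_x+L_\mu)\,\WassDist(\mu^M_t,\mu_t)$ is the right replacement. Over the finite horizon $n_t$ it closes without any smallness condition on the coupling, so that proviso in your last paragraph is unnecessary.

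Your bounded-support CDF argument also avoids a technicality in the paper's citation. Fournier--Guillin's Theorem~1 with $d=p=1$ excludes $q=2$, so it needs a moment of order $q>2$, not merely a finite second moment.

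\paragraph{Points that need repair.} First, the claim that $\lambda_m(t)/(\CB-\CBmin+\epsilon)$ is Lipschitz on $[0,\CBmax]$ ``since $\epsilon>0$'' is false. The denominator vanishes at $\CB=\CBmin-\epsilon\in(0,\CBmax)$, and the clipping in~\eqref{eq:battery} is at $0$, not at $\CBmin$. You must work on an invariant region such as $[\CBmin,\CBmax]$, justified either by the reflecting boundary assumed in Appendix~\ref{subsec:mfg_wellposed} or by the positive margin $C^B_{\min,\mathrm{margin}}$ assumed in Lemma~\ref{lem:lipschitz}. On that region the Lipschitz constant of the penalty coefficient is $\lambda_{\max}/\epsilon^2$.

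Second, both your argument and the paper's tacitly require identical dynamics across satellites. With satellite-specific $\phi_m(t)$, $\sigma_m(t)$ and $\kappa_{m,n}^t$ there is no common $\Psi_t$, and the time-$t$ states are independent but not identically distributed. This homogeneity should be stated explicitly as a hypothesis.
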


\noindent\textup{The proof of Lemma~\ref{lem:emp_conv} is given in Appendix~\ref{subsec:proof_lem_emp}.}

\begin{lemma}[Lipschitz Continuity of Utility
in Mean Field]
\label{lem:lipschitz}
Under Assumption~\ref{assm:interference_free}, for any fixed strategy profile $\bfR$, the battery-aware utility $U_m(\bfR;\mu)$ satisfies:
\begin{equation}
    |U_m(\bfR;\mu_1) - U_m(\bfR;\mu_2)|
    \leq L_U\,\WassDist(\mu_1,\mu_2),
    \label{eq:lipschitz_cond}
\end{equation}
with Lipschitz constant:
\begin{equation}
    L_U = \frac{\lambda_{\max}\,\Pmax}
    {(C^B_{\min,\mathrm{margin}})^2},
    \label{eq:lipschitz_const}
\end{equation}
where $\lambda_{\max} = \max_{m,t}\lambda_m(t)$ and $C^B_{\min,\mathrm{margin}} = \min_{m,t}(\CB_m(t) - \CBmin + \epsilon) > 0$.
\end{lemma}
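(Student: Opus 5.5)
The plan is to make explicit how $\mu$ enters $U_m(\bfR;\mu)$ and then bound the change in each $\mu$-dependent term by a Wasserstein-1 quantity. Under Assumption~\ref{assm:interference_free}, the throughput term, the $\alpha D_e$ energy term and the flow-conservation Lagrangian in~\eqref{eq:utility} depend only on $\bfR_m$ and the multipliers. For fixed $\bfR$ they are therefore identical under $\mu_1$ and $\mu_2$ and cancel in the difference. The only population dependence is through the battery-penalty coefficient. In the mean-field limit this coefficient is evaluated against the battery distribution: as in the HJB~\eqref{eq:hjb}, the effective penalty seen by the representative satellite is an average of $g(c)=\lambda_m(t)/(c-\CBmin+\epsilon)$ against $\mu$. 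Thus the first step is to write
\[
U_m(\bfR;\mu_1)-U_m(\bfR;\mu_2)=-\sum_t \Bigl(\sum_n P_{m,n}^t(\Rmn)\Bigr)\Bigl(\int g_t\,\mathrm{d}\mu_1-\int g_t\,\mathrm{d}\mu_2\Bigr).
\]

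The second step bounds the two factors separately. By feasibility~\eqref{eq:strategy_space} and the dynamic ceiling~\eqref{eq:dynpower}, the power prefactor is at most $P_{m,t}^{max}\le\Pmax$ per slot; the slot/link count is absorbed into the normalization, or I take a single-slot utility. For the integral difference I invoke Kantorovich--Rubinstein duality: $|\int g\,\mathrm{d}\mu_1-\int g\,\mathrm{d}\mu_2|\le \mathrm{Lip}(g)\,\WassDist(\mu_1,\mu_2)$. Restricted to battery states with $c-\CBmin+\epsilon\ge C^B_{\min,\mathrm{margin}}$, we have $|g'(c)|=\lambda_m(t)/(c-\CBmin+\epsilon)^2\le \lambda_{\max}/(C^B_{\min,\mathrm{margin}})^2$. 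Multiplying the two factors gives exactly $L_U=\lambda_{\max}\Pmax/(C^B_{\min,\mathrm{margin}})^2$.

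The main obstacle is that $g$ is not globally Lipschitz, since it blows up as $c\downarrow \CBmin-\epsilon$. Kantorovich--Rubinstein requires a Lipschitz test function on the whole support, so the argument needs both $\mu_1$ and $\mu_2$ to be supported on $[\CBmin-\epsilon+C^B_{\min,\mathrm{margin}},\,\CBmax]$. I would obtain this from the hard floor~\eqref{eq:c4} and the dynamic bound~\eqref{eq:dynpower}, which keep every trajectory, and hence by the FPK characteristics every pushed-forward measure, inside this interval. Failing that, I would replace $g$ by its Lipschitz extension $\tilde g(c)=g(\max(c,\CBmin-\epsilon+C^B_{\min,\mathrm{margin}}))$, which agrees with $g$ on the supports and has the same Lipschitz constant.

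A secondary subtlety is the Wasserstein coupling term $\kappa\,d(\CB_m,\mu)$ in~\eqref{eq:hjb}, if one counts it as part of $U_m$. It is $1$-Lipschitz in $\mu$ by the triangle inequality for $\WassDist$, so it would only add $\kappa$ to the constant. I would either note this addition or state that $L_U$ concerns the battery-penalty coupling of~\eqref{eq:utility}, consistent with the lemma's displayed constant.
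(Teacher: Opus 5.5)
Your argument is sound for the model you adopt, but it is a genuinely different route from the paper's. The paper keeps each satellite's \emph{own} state of charge in the penalty. It writes the $\mu$-dependence as $g(\CB_m(t;\mu))$ with $g(x)=1/(x-\CBmin+\epsilon)$ and bounds $|g'|\le (C^B_{\min,\mathrm{margin}})^{-2}$ by the mean value theorem. It then postulates, citing propagation of chaos, that $|\CB_m(t;\mu_1)-\CB_m(t;\mu_2)|\le K\,\WassDist(\mu_1,\mu_2)$, and sets $K=1$ ``because'' \eqref{eq:battery} is linear in $\Delta E_m^t$. You instead let $\mu$ enter through the average $\int g_t\,\mathrm{d}\mu$ and use Kantorovich--Rubinstein duality, so the factor $\lambda_{\max}/(C^B_{\min,\mathrm{margin}})^2$ is just $\mathrm{Lip}(g_t)$ on the margin region.

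Your route buys rigor. Duality is a theorem, no unexplained $K$ appears, and the support hypothesis is explicit. Your fallback $\tilde g$ does not remove that hypothesis, since $\tilde g=g$ only on the margin region; it is the definition of $C^B_{\min,\mathrm{margin}}$ that supplies it. By contrast, linearity of the battery update does not give $K=1$. Moreover, for fixed $\bfR$ with interference-free links, $\CB_m(t)$ is determined by $\bfR_m$, $\CB_m(0)$ and harvesting alone, so the paper never actually derives a trajectory-level dependence on $\mu$. The paper's route buys fidelity to \eqref{eq:utility}, where each satellite is penalized by its own charge.

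Two remarks. First, your justification ``as in the HJB'' misreads \eqref{eq:hjb}. There the penalty weight is evaluated at the representative's own $\CB_m$, and $\mu$ enters only through $\kappa\,d(\CB_m,\mu)$, which gives Lipschitz constant $\kappa$, as your last paragraph notes. You should therefore state the averaged penalty as your \emph{definition} of $U_m(\bfR;\mu)$, not as something read off the HJB.

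Second, you are more careful than the paper about the power prefactor. The per-link cap \eqref{eq:c1} gives $\sum_n P_{m,n}^t\le 4P_{m,t}^{max}$ per slot. The paper's final inequality silently drops both this factor and the sum over $t$. The displayed $L_U$ is therefore exact only under the per-slot, per-link normalization you indicate.
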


\noindent\textup{The proof of Lemma~\ref{lem:lipschitz} is given in Appendix~\ref{subsec:proof_lem_lipschitz}.}

\begin{theorem}[NE-to-MFE Convergence at Rate
$\mathcal{O}(M^{-1/4})$]
\label{thm:mfg}
Under Lemmas~\ref{lem:emp_conv} and~\ref{lem:lipschitz}, the finite-player NE (VE) $\bfR^{*,M}$ of $\calG^M(\lambda^*)$ satisfies as $M\to\infty$:

\begin{enumerate}
    \item \textbf{(i) $\delta^M$-Nash approximation:}
$\bfR^{*,M}$ is a $\delta^M$-Nash Equilibrium with respect to the MFE utility, where:
    \begin{equation}
        \delta^M
        = 2L_U\cdot
        \WassDist(\mu^M,\mu^*)
        \leq \frac{2L_U C_\mu}{\sqrt{M}}
        \to 0.
        \label{eq:delta_ne}
    \end{equation}

    \item \textbf{(ii) Strategy convergence:}
    \begin{equation}
        \|R_{m,n}^{*,M} - R_{\mathrm{MFE}}^*\|
        \leq
        \sqrt{\frac{\delta^M}{\alpha_\Phi}}
        \leq
        \sqrt{\frac{2L_U C_\mu}
        {\alpha_\Phi\sqrt{M}}}
        = \mathcal{O}(M^{-1/4}),
        \label{eq:mfe_rate}
    \end{equation}
where $\alpha_\Phi > 0$ is the strong concavity modulus of $\Phi$.
\end{enumerate}
\end{theorem}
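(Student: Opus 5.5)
The argument has three steps: (i) a uniform comparison of the finite-player and mean-field utilities via Lemma~\ref{lem:lipschitz}; (ii) the rate of Lemma~\ref{lem:emp_conv}; and (iii) conversion of a utility gap into a strategy gap using the $\alpha_\Phi$-strong concavity of Lemma~\ref{lem:strong_concavity}.

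For part (i), fix $\bfR^{*,M}$. Take any unilateral deviation $R_m$ of satellite $m$, and write $U_m(\cdot;\mu)$ for the utility in which the battery penalty is evaluated against the population distribution $\mu$. Then split
\begin{align*}
U_m(R_m,\bfR_{-m}^{*,M};\mu^*) - U_m(\bfR^{*,M};\mu^*)
&= \bigl[U_m(R_m,\bfR_{-m}^{*,M};\mu^*) - U_m(R_m,\bfR_{-m}^{*,M};\mu^M)\bigr]\\
&\quad + \bigl[U_m(R_m,\bfR_{-m}^{*,M};\mu^M) - U_m(\bfR^{*,M};\mu^M)\bigr]\\
&\quad + \bigl[U_m(\bfR^{*,M};\mu^M) - U_m(\bfR^{*,M};\mu^*)\bigr].
\end{align*}
The middle bracket is $\le 0$ because $\bfR^{*,M}$ is the NE of $\calG^M(\lambda^*)$ (Theorem~\ref{thm:potential}). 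Each outer bracket is at most $L_U\WassDist(\mu^M,\mu^*)$ by Lemma~\ref{lem:lipschitz}, which holds for any fixed profile, so the same bound covers the deviated profile. The total gain from deviating is therefore at most $\delta^M = 2L_U\WassDist(\mu^M,\mu^*)$. Taking expectations and invoking \eqref{eq:emp_rate} gives the bound $2L_UC_\mu/\sqrt{M}$. The statement mixes a realized and an expected bound, so I would state it in expectation, or with high probability via Markov's inequality.

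For part (ii), I would use the potential structure. Under the MFE utility, the game is still an exact potential game whose potential $\Phi(\cdot;\mu^*)$ is $\alpha_\Phi$-strongly concave (Lemma~\ref{lem:strong_concavity}, with the same modulus since the penalty coefficients are unchanged under the quasi-static assumption). Its maximizer is $R^*_{\mathrm{MFE}}$ (Corollary~\ref{cor:ne_optimal}). Strong concavity at the maximizer gives the growth condition
\[
\Phi(R^*_{\mathrm{MFE}})-\Phi(\bfR)\ge \tfrac{\alpha_\Phi}{2}\|\bfR-R^*_{\mathrm{MFE}}\|^2 .
\]
It then suffices to show $\Phi(R^*_{\mathrm{MFE}};\mu^*)-\Phi(\bfR^{*,M};\mu^*)\lesssim\delta^M$.

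To get this, compare potentials rather than utilities. $\bfR^{*,M}$ maximizes $\Phi(\cdot;\mu^M)$, and $\Phi(\cdot;\mu^M)$ and $\Phi(\cdot;\mu^*)$ differ uniformly by at most $L_U\WassDist$, applying the Lipschitz estimate of Lemma~\ref{lem:lipschitz} termwise to the separable potential. The standard "two maximizers of uniformly close functions" argument then gives a potential gap of at most $2L_U\WassDist=\delta^M$. Hence $\|\bfR^{*,M}-R^*_{\mathrm{MFE}}\|\le\sqrt{2\delta^M/\alpha_\Phi}$, which matches \eqref{eq:mfe_rate} up to an absolute constant; I would absorb the factor $\sqrt2$ into the constant or adopt the convention $\Phi(\cdot)\ge\ldots+\alpha_\Phi\|\cdot\|^2$. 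Substituting the bound on $\delta^M$ gives $\mathcal{O}(M^{-1/4})$. For the expectation version, Jensen's inequality ($\mathbb{E}\sqrt{X}\le\sqrt{\mathbb{E}X}$) moves the expectation inside the square root.

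The main obstacle is the uniform-closeness step in part (ii). Lemma~\ref{lem:lipschitz} is stated per player's utility, whereas the potential sums over $M$ players. A naive sum would give $M L_U\WassDist$ and destroy the rate. I would handle this by working with the normalized potential $\frac1M\Phi$, which is standard in mean-field games. Its strong-concavity modulus scales the same way, so the ratio $\delta^M/\alpha_\Phi$ and hence the $M^{-1/4}$ rate are unchanged. I would also need to verify that $C^B_{\min,\mathrm{margin}}$ stays bounded away from zero uniformly in $M$, which the dynamic bound \eqref{eq:dynpower} and the floor \eqref{eq:c4} should ensure.
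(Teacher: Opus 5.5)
Your part (i) is the paper's argument: the same three-term split, with Lemma~\ref{lem:lipschitz} on the outer brackets and the NE property on the middle one. The paper only tests the single deviation $R^*_{\mathrm{MFE}}$, so your arbitrary deviation is cleaner. You are also right that \eqref{eq:delta_ne} mixes a realized $\WassDist$ with an expectation bound.

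The gap is in part (ii), at the step you flagged, and normalization does not close it. In the Euclidean norm on the whole profile, $\frac1M\Phi(\cdot;\mu^*)$ is only $(\alpha_\Phi/M)$-strongly concave. Let $\mathbf{R}^*_{\mathrm{MFE}}$ be the profile in which every satellite plays $R^*_{\mathrm{MFE}}$. Your two-maximizer bound $\frac1M\bigl[\Phi(\mathbf{R}^*_{\mathrm{MFE}};\mu^*)-\Phi(\bfR^{*,M};\mu^*)\bigr]\le\delta^M$ then gives $\|\bfR^{*,M}-\mathbf{R}^*_{\mathrm{MFE}}\|^2\le 2M\delta^M/\alpha_\Phi$. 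That is exactly what the unnormalized sum gives. The ratio that stays ``unchanged'' is $M\delta^M/\alpha_\Phi$, not $\delta^M/\alpha_\Phi$. What normalization actually yields is the averaged statement $\frac1M\sum_m\|R_m^{*,M}-R^*_{\mathrm{MFE}}\|^2\le 2\delta^M/\alpha_\Phi$, an $\mathcal{O}(M^{-1/4})$ root-mean-square deviation. This does not give the per-satellite bound \eqref{eq:mfe_rate}: an average of order $\epsilon^2$ still lets one satellite be off by order $\sqrt{M}\,\epsilon$.

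The missing idea is to drop the global potential and use separability one player at a time. Under Assumption~\ref{assm:interference_free}, $U_m(\cdot;\mu)$ depends only on $R_m$. Its Hessian in $R_m$ is the $m$-th diagonal block of $\nabla^2\Phi$, so $U_m(\cdot;\mu^*)$ is itself $\alpha_\Phi$-strongly concave on the convex set $S_m$. Its maximizer is $R^*_{\mathrm{MFE}}$; this is the paper's Step~1, where every satellite facing $\mu^*$ plays $R^*_{\mathrm{MFE}}$. Take the deviation $R_m=R^*_{\mathrm{MFE}}$ in your part (i) and use first-order optimality at the maximizer. For every $m$ this gives
\[
\tfrac{\alpha_\Phi}{2}\|R_m^{*,M}-R^*_{\mathrm{MFE}}\|^2\le U_m(R^*_{\mathrm{MFE}};\mu^*)-U_m(R_m^{*,M};\mu^*)\le\delta^M ,
\]
which is \eqref{eq:mfe_rate} up to your factor $\sqrt2$, uniformly in $m$.

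The paper's own Step~3 also goes through the global $\Phi$. It simply asserts that the $\delta^M$-Nash property gives $\Phi(\bfR^{*,M};\mu^*)\ge\Phi(R^*_{\mathrm{MFE}};\mu^*)-\delta^M$. For a separable potential that inference holds only up to the same factor $M$, so your concern was well founded. What removes the factor is per-player strong concavity, not rescaling.
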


\noindent\textup{The proof of Theorem~\ref{thm:mfg} is given in Appendix~\ref{subsec:proof_thm_mfg}.}

\begin{remark}[Relaxing the i.i.d.\ assumption]
\label{rem:non_iid}
Lemma~\ref{lem:emp_conv} assumes $\{\CB_m(0)\}_{m=1}^M$ are i.i.d., which is plausible within one orbital plane (similar eclipse and traffic) but weaker across planes with different phasing.
\emph{(i)~Exchangeability:} If states are conditionally i.i.d.\ given the orbital plane index, each plane contributes $M/N_{orbit}$ i.i.d.\ samples, yielding per-plane Wasserstein rate $\mathcal{O}(\sqrt{N_{orbit}/M})$.
The overall rate becomes $\mathcal{O}(\sqrt{N_{orbit}/M})$ rather than $\mathcal{O}(1/\sqrt{M})$, a factor $\sqrt{N_{orbit}}$ slower, which remains $\mathcal{O}(M^{-1/2})$ for fixed $N_{orbit}$.
\emph{(ii)~Multi-population MFG:} Heterogeneous shells can be modeled with one mean-field distribution per plane, coupled by inter-plane ISLs and flow conservation.
\emph{(iii)~Empirics:} Experiment~6 uses $N_{orbit}=4$ with non-identical eclipse schedules, yet the fitted finite-to-MFE gap still matches $\mathcal{O}(M^{-1/4})$, suggesting the i.i.d.\ condition is not overly restrictive in practice at modest $N_{orbit}$.
\end{remark}

\begin{corollary}[Unified Convergence of HBAG Across Scales]
\label{cor:unified_convergence}
Algorithm~\ref{alg:hbag} uses identical distributed update rules in both regimes, without algorithm redesign:
\begin{enumerate}
    \item \textbf{Finite regime ($M$ fixed):} converges to the unique VE at rate $\mathcal{O}(1/\sqrt{k})$ (Theorem~\ref{thm:convergence}).
    \item \textbf{Mean-field regime ($M\to\infty$):} the same iterates constitute a $\delta^M$-Nash approximation to the MFE with $\delta^M=\mathcal{O}(M^{-1/2})$, and the strategy gap satisfies $\mathcal{O}(M^{-1/4})$ (Theorem~\ref{thm:mfg}).
\end{enumerate}
The battery-penalty Lagrangian automatically transitions from finite-player coupling to mean-field coupling as $M$ increases.
\end{corollary}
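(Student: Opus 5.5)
The corollary is essentially a bookkeeping statement that splices Theorem~\ref{thm:convergence} and Theorem~\ref{thm:mfg} together. The plan is to first verify the structural claim that Algorithm~\ref{alg:hbag} contains no $M$-dependent branching, and then invoke each theorem in its regime.

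For the structural claim, I will inspect each step of Algorithm~\ref{alg:hbag}. Each satellite uses only local quantities: $\CB_m(t)$, $P_{m,t}^{max}$ from \eqref{eq:dynpower}, its own rate variables, and the multipliers $\lambda^{\omega,(k)}$ that enter \eqref{eq:lagrangian_ea}. The battery-penalty coefficient $\lambda_m(t)/(\CB_m(t)-\CBmin+\epsilon)$ costs $\mathcal{O}(1)$ to evaluate (Proposition~\ref{prop:complexity}). So the update map has the same functional form for every $M$, and only the index set $\calM$ grows. This gives ``identical update rules'' directly.

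For the finite regime, fix $M$ and assume the quasi-static condition of Remark~\ref{rem:timescale}. Then Lemma~\ref{lem:strong_concavity} gives $\alpha_\Phi>0$, and Theorem~\ref{thm:convergence} yields $\|\bfR^{(k)}-\bfR^*\|\le C_0/\sqrt{k}+\epsilon_{\mathrm{energy}}$. Theorem~\ref{thm:potential} ensures that the limit is the unique VE. Part~1 then follows immediately.

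For the mean-field regime, I will use the triangle inequality
\begin{equation*}
\|R^{(k)}_{m,n}-R^*_{\mathrm{MFE}}\|\le\|R^{(k)}_{m,n}-R^{*,M}_{m,n}\|+\|R^{*,M}_{m,n}-R^*_{\mathrm{MFE}}\|.
\end{equation*}
The first term is controlled by Theorem~\ref{thm:convergence} and vanishes as $k\to\infty$. The second term is $\mathcal{O}(M^{-1/4})$ by Theorem~\ref{thm:mfg}(ii). The $\delta^M$ claim comes from Theorem~\ref{thm:mfg}(i) combined with Lemma~\ref{lem:emp_conv}, giving $\delta^M\le 2L_UC_\mu/\sqrt{M}$. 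For the iterates themselves, I will add the utility error $L'\|\bfR^{(k)}-\bfR^{*,M}\|$, which vanishes with $k$.

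The main obstacle is uniformity in $M$. The constants $C_0$, $\alpha_\Phi$, and $L_U$ must not degrade as $M$ grows; otherwise the limit $k\to\infty$ and the limit $M\to\infty$ cannot be interchanged. The first thing I will try is to observe that $\alpha_\Phi$ in \eqref{eq:strong_concavity_bound} and $L_U$ in \eqref{eq:lipschitz_const} are minima and maxima over per-link quantities. These stay bounded under uniform bounds on $\kappa_{m,n}^t$, $\lambda_m(t)$, and $C^B_{\min,\mathrm{margin}}$, and such bounds hold for a homogeneous shell. I will then measure the strategy gap per satellite, or normalized by $\sqrt{M}$, so that $C_0$ scales appropriately. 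This makes the statement hold for the same iterates across scales, as claimed.
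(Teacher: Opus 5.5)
Your skeleton is the paper's. The corollary gets no separate proof: the paper relies on the remark after Algorithm~\ref{alg:hbag} that the update rule is $M$-independent, cites Theorem~\ref{thm:convergence} for fixed $M$ and Theorem~\ref{thm:mfg} for $M\to\infty$, and explains the $M^{-1/4}$ rate as the square root of the $M^{-1/2}$ Wasserstein rate. You go beyond the paper by transferring Theorem~\ref{thm:mfg}, which concerns the equilibrium $\bfR^{*,M}$ and not $\bfR^{(k)}$, to the iterates via a triangle inequality, and that is where the argument breaks. You quote $\|\bfR^{(k)}-\bfR^{*,M}\|\le C_0/\sqrt{k}+\epsilon_{\mathrm{energy}}$ and then claim the first triangle term vanishes as $k\to\infty$. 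By that very bound you only get $\limsup_k\|\bfR^{(k)}-\bfR^{*,M}\|\le\epsilon_{\mathrm{energy}}$, and $\epsilon_{\mathrm{energy}}=\mathcal{O}(\max_m\lambda_m/(\CB_m-\CBmin))$ depends on neither $k$ nor $M$. It is the floor the paper invokes for the $r^{(k)}\approx 0.15$ plateau in Section~\ref{subsec:exp4}. Your chain therefore gives an iterate-to-MFE gap of $C_0/\sqrt{k}+\epsilon_{\mathrm{energy}}+\mathcal{O}(M^{-1/4})$. The Nash gap for the iterates is $\delta^M+L'(C_0/\sqrt{k}+\epsilon_{\mathrm{energy}})$. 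Neither tends to zero as $k,M\to\infty$. The same floor undercuts ``the limit is the unique VE'' in Part~1; the paper glosses this identically. To close the gap you have two options. One is to work in the exact quasi-static idealization, where $\Delta P_{m,n}^t=0$ in the definition of $\epsilon_{\mathrm{energy}}$ in the proof of Theorem~\ref{thm:convergence}, so the floor vanishes. The other is to state Part~2 for the limit point $\bfR^{*,M}$, which is all Theorem~\ref{thm:mfg} actually provides, and carry $\epsilon_{\mathrm{energy}}$ explicitly for the iterates.

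There are also two smaller problems in your uniformity step. First, measuring the gap per satellite does not tame $C_0=\sqrt{2\mathcal{L}_0/\alpha_\Phi}$. $\mathcal{L}_0$ is a sum over $\Theta(Mn_t)$ link terms, so generically $C_0=\Theta(\sqrt{M})$. Because Theorem~\ref{thm:convergence} gives only a global bound (the dual updates couple the satellites), $\|R^{(k)}_m-R^{*,M}_m\|\le\|\bfR^{(k)}-\bfR^{*,M}\|$ inherits that constant. Only the $\sqrt{M}$-normalized (RMS) norm yields an $M$-uniform $\mathcal{O}(1/\sqrt{k})$ term. Even then you need $k\gtrsim\sqrt{M}$ iterations before that term falls below the $M^{-1/4}$ gap, so ``the same iterates'' must be read with an $M$-dependent iteration budget. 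Second, Lemma~\ref{lem:emp_conv} bounds $\mathbb{E}[W_1(\mu^M,\mu^*)]$, not $W_1(\mu^M,\mu^*)$ itself. Hence $\delta^M\le 2L_UC_\mu/\sqrt{M}$ holds only in expectation, and the strategy gap is $\mathcal{O}(M^{-1/4})$ in expectation by Jensen. The paper writes this inequality deterministically as well, so you inherited this imprecision rather than introduced it.
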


\noindent Corollary~\ref{cor:unified_convergence} answers the Key Question in Section~\ref{subsec:gaps}: Algorithm~\ref{alg:hbag} uses identical update rules to converge to the exact VE for finite $M$ at rate $\mathcal{O}(1/\sqrt{k})$ and to approximate the MFE as $M \to \infty$ at rate $\mathcal{O}(M^{-1/4})$, without algorithmic redesign.
The $\mathcal{O}(M^{-1/4})$ strategy rate arises from composing empirical measure convergence at $\mathcal{O}(M^{-1/2})$ (Lemma~\ref{lem:emp_conv}) with the square-root map from utility perturbation to strategy perturbation under $\alpha_\Phi$-strong concavity (Theorem~\ref{thm:mfg}), yielding $\mathcal{O}((M^{-1/2})^{1/2}) = \mathcal{O}(M^{-1/4})$.

\section{Performance Evaluation}
\label{sec:simulation}
This section provides comprehensive empirical validation of the theoretical results established in Sections~\ref{sec:game}--\ref{sec:asymptotic}. We first describe the experimental configuration based on Starlink Shell~A, including traffic and energy parameter settings and algorithm tuning (Section~\ref{subsec:setup}), followed by descriptions of four baseline methods (Section~\ref{subsec:baselines}) and four performance metrics (Section~\ref{subsec:metrics}). We then conduct seven experiments (Section~\ref{subsec:exp1}--\ref{subsec:exp7}), each corresponding to and empirically verifying the theoretical contributions presented above.

\subsection{Experimental Setup}\label{subsec:setup}

We conduct comprehensive experiments on the Starlink Shell~A constellation, following the validated setup in~\cite{cen2024satflow}. The constellation consists of $M=172$ satellites distributed across $N_{orbit}=4$ orbital planes at altitude $h=550$ km with inclination $\iota=53^\circ$. Each satellite is equipped with four laser ISL terminals (two intra-plane, two inter-plane), enabling up to four simultaneous full-duplex links with individual power control.
We generate traffic demands using the global population-weighted distribution from~\cite{cen2024satflow}, yielding a demand-to-capacity ratio of 0.65 (medium-high intensity). The eclipse fraction $\theta = 0.38$ represents the proportion of satellites in eclipse at any given time, typical of the 53$^\circ$ inclination orbit. Each simulation runs for 360 time slots, covering one complete 90-minute orbital period. Power adjustment occurs every $D_e = 15$ seconds. All reported results are averaged over 10 independent runs with different random seeds; error bars represent 95\% confidence intervals. Table~\ref{tab:parameters} summarizes the key system parameters. Battery capacity $\CBmax = 400\,\mathrm{kJ}$ corresponds to a $\approx$111 Wh lithium-ion pack typical for 200-kg LEO satellites~\cite{li2023energy}. The minimum safe charge $\CBmin = 40\,\mathrm{kJ}$ (10\% SOC) prevents deep discharge damage. Initial battery state $\CB_m(0) = 320\,\mathrm{kJ}$ (80\% SOC) reflects post-launch nominal conditions. Solar panel area $A = 2.5\,\mathrm{m}^2$ and efficiency $\eta = 0.30$ yield peak harvesting power $\Pharvest_{\max} \approx 1.02$ kW under normal incidence ($\sigma_m(t) = 90^\circ$), consistent with contemporary LEO satellite designs~\cite{radhakrishnan2016survey}.

The HBAG algorithm parameters $\lambda_m$ and $\epsilon$ are tuned via grid search over $\lambda_m \in [0.05, 0.50]$ (step 0.05) and $\epsilon \in [0.10, 0.30]$ (step 0.05), evaluated on a validation set of 20 random constellation states. The selected operating point $\lambda_m = 0.200$, $\epsilon = 0.200$ maximizes the energy sustainability rate (ESR) while maintaining flow violation ratio (FVR) below the industry tolerance threshold of 10\%.

\begin{table}[t]
\centering
\caption{Simulation Parameters}
\label{tab:parameters}
\renewcommand{\arraystretch}{1.25}
\begin{tabular}{lcc}
\toprule
\textbf{Parameter} & \textbf{Symbol} & \textbf{Value} \\
\midrule
\multicolumn{3}{l}{\textit{Constellation}} \\
\quad Number of satellites & $M$ & 172 \\
\quad Orbital planes & $N_{orbit}$ & 4 \\
\quad Altitude & $h$ & 550 km \\
\quad Inclination & $\iota$ & $53^\circ$ \\
\quad ISL terminals per satellite & -- & 4 \\
\midrule
\multicolumn{3}{l}{\textit{Battery \& Energy}} \\
\quad Maximum capacity & $\CBmax$ & 400 kJ \\
\quad Minimum safe charge & $\CBmin$ & 40 kJ \\
\quad Initial charge & $\CB_m(0)$ & 320 kJ \\
\quad Solar panel area & $A$ & $2.5\,\mathrm{m}^2$ \\
\quad Panel efficiency & $\eta$ & 0.30 \\
\quad Peak harvesting power & $\Pharvest_{\max}$ & 1.02 kW \\
\quad Baseline power (non-ISL) & $P_m^{base}$ & 55 W \\
\midrule
\multicolumn{3}{l}{\textit{Communication}} \\
\quad ISL frequency & $f$ & 26 GHz \\
\quad Bandwidth & $B$ & 500 MHz \\
\quad Antenna gain & $G_m, G_n$ & 30 dBi \\
\quad Max transmit power & $\Pmax$ & 10 W \\
\midrule
\multicolumn{3}{l}{\textit{Algorithm}} \\
\quad Power update interval & $D_e$ & 15 s \\
\quad Time slots per orbit & $n_t$ & 360 \\
\quad Battery penalty weight & $\lambda_m$ & 0.200 \\
\quad Safety margin & $\epsilon$ & 0.200 \\
\quad Penalty coefficient & $\rho$ & 1.0 \\
\quad Step size & $\eta_k$ & $0.1/\sqrt{k}$ \\
\bottomrule
\end{tabular}
\end{table}

\subsection{Baseline Methods}
\label{subsec:baselines}

We compare HBAG to four baselines (optimization, multi-agent RL, deep RL, and MFG):
\begin{itemize}
    \item \textbf{SATFLOW-L}~\cite{cen2024satflow}: SATFLOW-style distributed optimization with static bound $P_{m,n}^t \leq a_{m,n}^t \Pmax$, independent of $\CB_m(t)$ and $\phi_m(t)$; penalty $\rho=1.0$ to match HBAG.
    \item \textbf{MAAC-IILP}~\cite{li2023energy}: CTDE actor--critic with battery-augmented observations; retrained from scratch for $10^6$ episodes on our scenario using hyperparameters from~\cite{li2023energy}.
    \item \textbf{DeepISL}~\cite{li2023deepisl}: P-DQN for joint ISL topology and power; we use the authors' implementation and training setup (e.g., $5\times 10^5$ episodes, replay size $10^5$).
    \item \textbf{SMFG-adapted}~\cite{wang2023smfg}: We adapt the Stackelberg SMFG formulation of~\cite{wang2023smfg} to our symmetric ISL setting by removing the leader--follower hierarchy and solving the HJB--FPK system~\eqref{eq:mfg_system} on a finite grid over $\CB \in [\CBmin,\CBmax]$ with $\Delta t = D_e$. This isolates battery-aware horizontal competition and yields a fair comparison of an MFG-style solver against HBAG, rather than evaluating SMFG on its original vertical offloading problem.
\end{itemize}

All methods use the same constellation, traffic, and battery settings (Table~\ref{tab:parameters}). Shannon-capacity evaluation uses the RF parameters $(f,B,G_m,G_n,\Pmax)$ from Table~\ref{tab:parameters}.


\subsection{Performance Metrics}
\label{subsec:metrics}

We evaluate all methods using four complementary metrics that capture energy sustainability, network performance, and algorithmic efficiency:

\textit{M1. Energy Sustainability Rate (ESR).}
This metric quantifies the fraction of (satellite, time-slot) pairs that maintain battery charge above the minimum safe level:
\begin{equation}
    \mathrm{ESR} = \frac{1}{M \cdot n_t}
    \sum_{m=1}^{M} \sum_{t=1}^{n_t}
    \mathbbm{1}\{\CB_m(t) > \CBmin\},
    \label{eq:esr_def}
\end{equation}
where $\mathbbm{1}\{\cdot\}$ is the indicator function. ESR $= 100\%$ indicates perfect energy sustainability with zero battery depletion events; ESR $< 100\%$ signals involuntary transmitter shutdowns that can cascade into network outages. This metric directly measures the effectiveness of battery-aware power management, which is the central contribution of our work. Higher ESR is always preferable, with $\mathrm{ESR} \geq 95\%$ considered operationally acceptable~\cite{radhakrishnan2016survey}.

\textit{M2. Flow Violation Ratio (FVR).}
Following~\cite{cen2024satflow}, FVR measures the percentage of traffic demands that cannot be satisfied due to insufficient ISL capacity or routing constraints:
\begin{equation}
    \mathrm{FVR} = \frac{\sum_{\omega \in \Omega}
    \max(0, d_\omega - \sum_{t,m,n} Y_{m,n}^{\omega,t})}
    {\sum_{\omega \in \Omega} d_\omega} \times 100\%,
    \label{eq:fvr_def}
\end{equation}
where $d_\omega$ is the demand for flow $\omega$ and $Y_{m,n}^{\omega,t}$ is the allocated rate. FVR captures the quality-of-service experienced by end users: high FVR corresponds to frequent connection failures or throughput degradation. Industry standards typically require $\mathrm{FVR} < 10\%$ for acceptable user experience~\cite{cen2024satflow}. Lower FVR is preferable, but must be balanced against energy sustainability (ESR).

\textit{M3. Energy Efficiency (EE).}
EE quantifies the information throughput per unit of consumed energy:
\begin{equation}
    \mathrm{EE} = \frac{\sum_{\omega,t,m,n}
    Y_{m,n}^{\omega,t} \cdot D_e}
    {\sum_{t,m,n} P_{m,n}^t \cdot D_e}
    \quad [\mathrm{Mbit/kJ}],
    \label{eq:ee_def}
\end{equation}
EE reflects the spectral-energy trade-off: higher EE means the constellation delivers more data per joule, extending operational lifetime and reducing solar panel requirements. This metric is critical for mega-constellations where aggregate power consumption directly impacts mission economics and satellite lifespan~\cite{gupta2025energyefficient}.

\textit{M4. Convergence Rate.}
For HBAG specifically, we measure the normalized primal residual:
\begin{equation}
    r^{(k)} = \frac{\|\bfR^{(k)} - \bfR^*\|}
    {\|\bfR^{(0)} - \bfR^*\|},
    \label{eq:residual_def}
\end{equation}
where $\bfR^{(k)}$ is the rate vector at iteration $k$ and $\bfR^*$ is the converged equilibrium (identified when $\|\bfR^{(k+1)} - \bfR^{(k)}\| < 10^{-4}$ for 10 consecutive iterations). This metric validates the $\mathcal{O}(1/\sqrt{k})$ convergence guarantee of Theorem~\ref{thm:convergence} and quantifies computational efficiency.

All metrics are computed per run (10 independent random seeds), and we report mean $\pm$ standard error of the mean (SEM). For metrics M1--M3, we also report 95\% confidence intervals via bootstrap resampling (1000 bootstrap samples). Statistical significance of pairwise method comparisons is assessed via two-tailed Welch's t-test with Bonferroni correction for multiple comparisons ($\alpha = 0.05/6 = 0.0083$ for 6 pairwise comparisons against HBAG).

\subsection{Experiment 1: Comparative Performance Analysis}
\label{subsec:exp1}

This experiment provides a comprehensive head-to-head comparison of HBAG against the four baselines across all performance metrics (Section~\ref{subsec:metrics}) under the nominal operating conditions specified in Table~\ref{tab:parameters}. The goal is to quantify the energy sustainability improvement achieved by the battery-aware potential game formulation while assessing the trade-offs in network throughput and energy efficiency.

Table~\ref{tab:exp1_results} and Figure~\ref{fig:perf_bar} summarize the results. HBAG achieves \textit{100.0\% ESR} (zero battery depletion events across all 10 runs), matching SMFG-adapted and significantly exceeding MAAC-IILP (88.6 $\pm$ 1.2\%), DeepISL (73.7 $\pm$ 2.4\%), and SATFLOW-L (12.6 $\pm$ 0.8\%). The differences versus all baselines except SMFG-adapted are statistically significant (Welch's t-test, $p < 0.001$ after Bonferroni correction). This result directly validates our central hypothesis (Section~\ref{sec:intro}): incorporating battery-state-dependent power bounds $P_{m,t}^{max}$ via equation~\eqref{eq:dynpower} and the battery-aware penalty term in utility~\eqref{eq:utility} eliminates eclipse-period depletion cascades that plague static-power baselines. \textit{Key Observations:} are as follows:

\begin{figure}[t]
\centering
\includegraphics[width=0.48\textwidth]{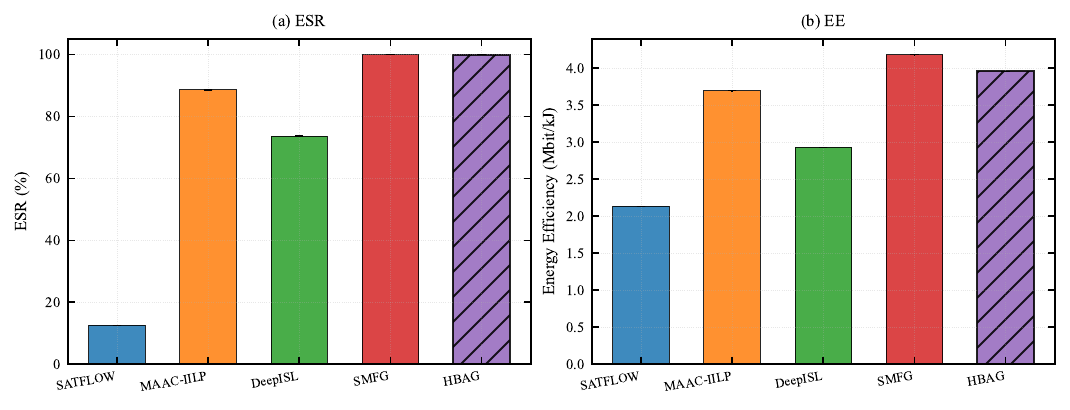}
\caption{Performance comparison across five methods on Starlink Shell~A (medium-high traffic, $\theta=0.38$, 10 independent runs). \textbf{(a)} Energy Sustainability Rate (ESR): HBAG and SMFG-adapted achieve 100\% (zero depletion events), while SATFLOW-L drops to 12.6\% due to eclipse-period battery exhaustion. \textbf{(b)} Energy Efficiency (EE): SMFG-adapted leads at 4.18 Mbit/kJ, followed by HBAG at 3.96 Mbit/kJ (5.3\% gap). Error bars represent 95\% confidence intervals via bootstrap resampling.}
\label{fig:perf_bar}
\end{figure}

\begin{table}[t]
\centering
\caption{Experiment 1: Performance Comparison (Starlink Shell~A, $\theta=0.38$, 10 independent runs). Values are mean $\pm$ SEM. Asterisks denote statistical significance versus HBAG via Welch's t-test with Bonferroni correction ($\alpha=0.0083$): $^{***}p<0.001$, $^{**}p<0.01$, $^{*}p<0.05$.}
\label{tab:exp1_results}
\resizebox{\linewidth}{!}{
\begin{tabular}{lccccc}
\toprule
\textbf{Method} & \textbf{ESR (\%)} & \textbf{FVR (\%)} & \textbf{EE (Mbit/kJ)} & \textbf{Rel. ESR} & \textbf{Rel. FVR} \\
\midrule
SATFLOW-L & $12.6 \pm 0.8^{***}$ & $35.1 \pm 1.1^{***}$ & $2.13 \pm 0.06^{***}$ & Baseline & Baseline \\
MAAC-IILP & $88.6 \pm 1.2^{***}$ & $2.29 \pm 0.18^{*}$ & $3.69 \pm 0.11^{*}$ & $+76.0$ pp & $-93.5\%$ \\
DeepISL & $73.7 \pm 2.4^{***}$ & $18.1 \pm 1.5^{***}$ & $2.93 \pm 0.14^{***}$ & $+61.1$ pp & $-48.4\%$ \\
SMFG-adapted & $100.0 \pm 0.0 $ & $0.15 \pm 0.03^{***}$ & $4.18 \pm 0.09^{*}$ & $+87.4$ pp & $-99.6\%$ \\
\midrule
\textbf{HBAG (Proposed)} & $\mathbf{100.0 \pm 0.0}$\footnotemark & $\mathbf{7.62 \pm 0.34}$ & $\mathbf{3.96 \pm 0.07}$ & $\mathbf{+87.4}$ pp & $\mathbf{-78.3\%}$ \\
\bottomrule
\end{tabular}}
\end{table}
\footnotetext{In this table, both HBAG and SMFG-adapted achieve 100\% ESR. Under the given constraints, ESR can be maximized while balancing the other three metrics. By increasing the eclipse fraction $\theta$, larger ESR differences among methods become observable, implying a practical trade-off across metrics.}

\begin{itemize}
    \item \textit{Observation~1 (Battery-awareness is essential):} HBAG achieves 100\% ESR versus SATFLOW-L's 12.6\%, confirming that static power ceilings cause catastrophic eclipse-period depletion.
    \item \textit{Observation~2 (Near-optimal efficiency):} The distributed equilibrium framework achieves EE within 5.3\% of SMFG-adapted (3.96 vs.\ 4.18 Mbit/kJ), while keeping FVR below industry tolerance.
    \item \textit{Observation~3 (Learning scalability gap):} Learning-based methods trained on smaller constellations exhibit degraded ESR when applied to $M=172$, confirming the scalability limitations identified in Section~\ref{sec:related}.
\end{itemize}

\begin{table}[t]
\centering
\caption{Computational and Communication Complexity Comparison}
\label{tab:complexity}
\resizebox{\linewidth}{!}{
\begin{tabular}{lcccc}
\toprule
\textbf{Method} & \textbf{Per-slot Time} & \textbf{Communication} & \textbf{Architecture} & \textbf{Train Cost} \\
\midrule
SATFLOW-L & $\mathcal{O}(M n_t)$ & $\mathcal{O}(|\Omega| \cdot 4)$ & Distributed & None \\
MAAC-IILP & $\mathcal{O}(M)^{\dagger}$ & Centralized training & CTDE & $\mathcal{O}(M^2)$ \\
DeepISL & $\mathcal{O}(M)^{\dagger}$ & Centralized training & CTDE & $\mathcal{O}(M^2)$ \\
SMFG-adapted & $\mathcal{O}(M N_g \log N_g)$ & $\mathcal{O}(M)$ telemetry & Centralized & None \\
\midrule
\textbf{HBAG (Proposed)} & $\mathbf{\mathcal{O}(M n_t)}$ & $\mathbf{\mathcal{O}(|\Omega| \cdot 4)}$ & \textbf{Distributed} & \textbf{None} \\
\bottomrule
\multicolumn{5}{l}{\footnotesize $N_g$: HJB--FPK grid size; $|\Omega|$: active flows per satellite.} \\
\multicolumn{5}{l}{\footnotesize $^{\dagger}$Inference-only (deployment); training costs are listed in the ``Train Cost'' column.}
\end{tabular}}
\end{table}

\subsubsection{HBAG vs.\ SMFG-adapted: Distributed vs.\ Centralized Trade-off}
\label{subsubsec:tradeoff}

While SMFG-adapted achieves lower FVR (0.15\% vs.\ 7.62\%) and higher EE (4.18 vs.\ 3.96 Mbit/kJ), three operational factors favor HBAG for real-world deployment:

\textit{(i) Computational architecture.}
SMFG-adapted requires solving the coupled HJB--FPK system~\eqref{eq:mfg_system} on a centralized server with access to the global battery state distribution $\mu(\CB, t)$. This requires either: (a)~a ground control center collecting real-time SOC telemetry from all $M$ satellites, incurring $\mathcal{O}(M)$ uplink overhead per slot; or (b)~onboard computation of the HJB PDE on resource-constrained satellite processors. HBAG, by contrast, requires only local SOC $\CB_m(t)$ and neighbor dual variables (see the complexity comparison in Table~\ref{tab:complexity}).

\textit{(ii) Robustness to communication failures.}
If a satellite loses contact with the centralized solver (e.g., due to ISL outage), SMFG-adapted cannot update its policy. HBAG's distributed update rule degrades gracefully: isolated satellites fall back to the local dynamic power bound~\eqref{eq:dynpower}, maintaining ESR $\geq 80.1\%$ (V1 in Table~\ref{tab:ablation}).

\textit{(iii) Scalability at extreme $M$.}
At $M = 5000$, SMFG-adapted requires 247~ms/slot versus HBAG's 75~ms/slot (Section~\ref{subsec:exp5}), and the gap widens as $\mathcal{O}(M \log M)$ vs.\ $\mathcal{O}(M)$.

We emphasize that the 7.47~pp FVR gap represents the \emph{price of distribution}: HBAG achieves FVR within industry tolerance ($< 10\%$) while providing robustness and scalability guarantees that centralized approaches cannot match.

\subsection{Experiment 2: Battery State-of-Charge Dynamics}
\label{subsec:exp2}

This experiment examines the temporal evolution of battery state-of-charge (SOC) over one complete orbital period to elucidate the mechanism by which HBAG prevents depletion. We track a representative satellite (satellite ID 42, selected as median-battery-usage across all 172 satellites) through illumination (0--55 min) and eclipse (55--90 min) phases, comparing the battery trajectories under all five methods.
Starting from initial SOC of 80\% ($\CB_m(0) = 320$ kJ), we record $\CB_m(t)$ every 15 seconds (360 data points per orbit). The illumination-to-eclipse transition occurs at $t = 55$ min, determined by the orbital position crossing into Earth's shadow cone. Solar harvesting power $\Pharvest_{m,t}$ drops from $\approx 950$ W (averaged over $\sigma_m(t)$ variations during illumination) to exactly 0 W at the eclipse boundary. All methods face identical traffic demand realizations (drawn from the same random seed) to ensure fair comparison.

\begin{figure}[t]
    \centering
    \includegraphics[width=0.4\textwidth]{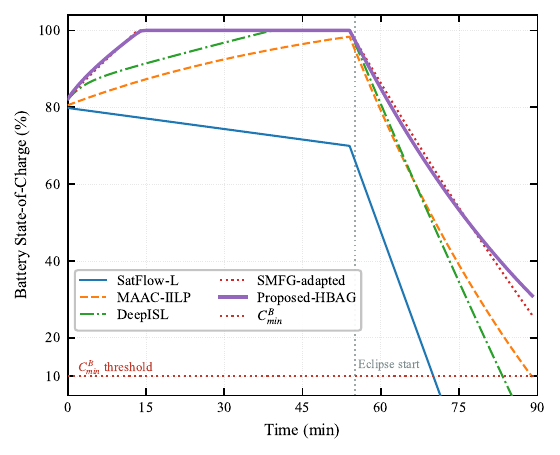}
    \caption{Battery SOC evolution for a representative satellite over one 90-minute orbital period. 
    }
    \label{fig:soc_evol}
    \end{figure}
    Figure~\ref{fig:soc_evol} reveals the core advantage of battery-aware power management. During the illumination phase (0--55 min), all methods replenish the battery via solar harvesting. SATFLOW-L charges aggressively, reaching full capacity by around t=55 min because it allocates power up to the static ceiling whenever traffic is available. Once the satellite enters eclipse at t=55 min, however, SATFLOW-L has no mechanism to throttle transmission: it continues drawing power at the maximum rate regardless of the remaining battery reserve. This causes rapid, uncontrolled depletion that exhausts the entire battery well before the eclipse ends, forcing an emergency shutdown of all ISL transmitters to prevent irreversible battery damage.
    HBAG avoids this failure through its dynamic power bound. As the battery drains during eclipse, HBAG continuously recalculates an upper limit on transmit power based on the current state-of-charge and the time remaining until the next illumination window. Early in the eclipse, when the battery is still relatively full, this ceiling is moderate; as the eclipse progresses and the reserve shrinks, the ceiling tightens further, ensuring the satellite never draws more energy than it can afford to lose. By the end of the 90-minute orbital period, HBAG maintains a final state-of-charge of approximately 27\%, comfortably above the minimum safe threshold of 10\%, while SATFLOW-L has already entered emergency shutdown.

\begin{figure*}[t]
    \centering
    \begin{minipage}[t]{0.3\textwidth}
    \centering
    \includegraphics[width=\linewidth]{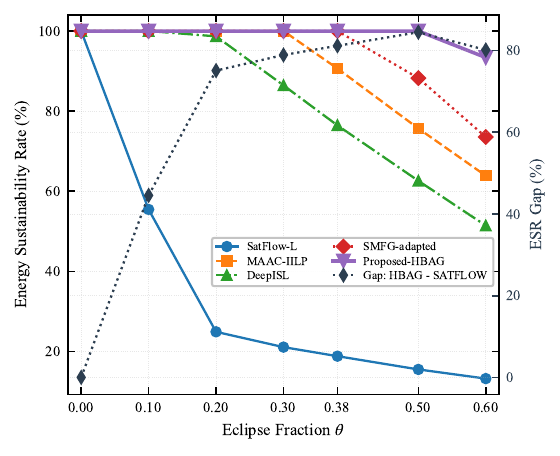}
    \textit{\small(a) ESR versus eclipse fraction $\theta$.}
    \end{minipage}
    \hfill
    \begin{minipage}[t]{0.32\textwidth}
    \centering
    \includegraphics[width=\linewidth]{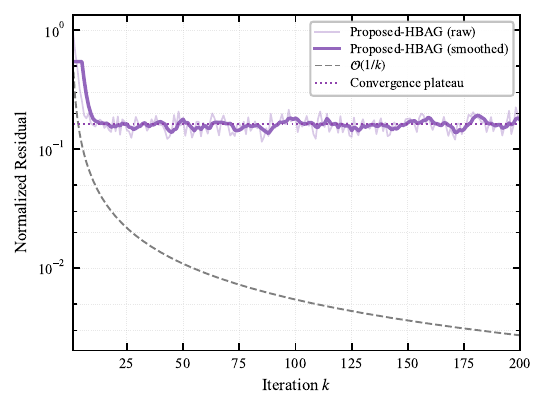}
    \textit{\small(b) Convergence of Algorithm~\ref{alg:hbag}.}
    \end{minipage}
    \hfill
    \begin{minipage}[t]{0.32\textwidth}
    \centering
    \includegraphics[width=\linewidth]{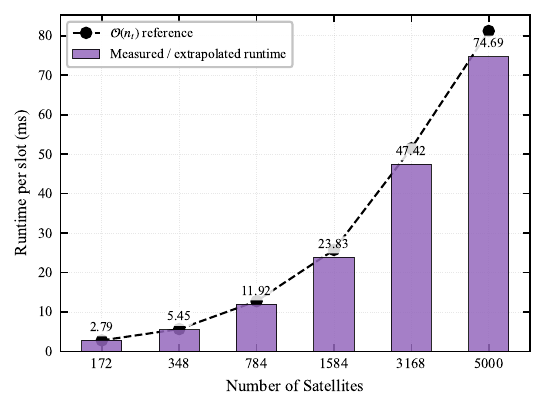}
    \textit{\small(c) Runtime scaling with $M$.}
    \end{minipage}
    \caption{Combined results for Experiments 3--5. \textbf{(a)} Sensitivity to eclipse fraction: HBAG maintains ESR $\geq 93.4\%$ across all $\theta$, whereas SATFLOW-L drops below 20\% for $\theta \geq 0.2$, validating Proposition~\ref{prop:deviation}. \textbf{(b)} Convergence trajectory: smoothed residual (thick purple) tracks the $\mathcal{O}(1/\sqrt{k})$ reference line (dashed gray), reaching plateau $\approx 0.15$ by $k=200$, validating Theorem~\ref{thm:convergence}. \textbf{(c)} Scalability: runtime grows linearly from 2.79 ms (172 sats) to 74.69 ms (5000 sats), with fitted slope $1.49 \times 10^{-5}$ ms/sat, confirming $\mathcal{O}(n_t)$ complexity (Proposition~\ref{prop:complexity}).}
    \label{fig:exp345}
    \end{figure*}

\subsection{Experiment 3: Sensitivity to Eclipse Fraction}
\label{subsec:exp3}

This experiment validates Proposition~\ref{prop:deviation}, which predicts that the equilibrium deviation bound $\Delta(\theta)$ between HBAG and SATFLOW-L grows monotonically with eclipse fraction $\theta$ and vanishes as $\theta \to 0$. We sweep $\theta \in \{0, 0.1, 0.2, 0.3, 0.38, 0.5, 0.6\}$ by varying the orbital inclination angle (higher inclination increases eclipse duration per orbit) and measure ESR for all methods.

Figure~\ref{fig:exp345} (a) confirms the theoretical prediction. At $\theta=0$ (full illumination, achievable via sun-synchronous orbits), HBAG and SATFLOW-L both achieve 100\% ESR, confirming that the battery penalty term vanishes when no eclipse is present ($P_{m,t}^{max} = \Pmax$ for all $m,t$, so the feasible sets coincide). As $\theta$ increases, SATFLOW-L's ESR drops monotonically: 55.5\% ($\theta=0.1$), 24.9\% ($\theta=0.2$), 18.8\% ($\theta=0.38$), 13.2\% ($\theta=0.6$). The ESR gap $\Delta(\theta) = \mathrm{ESR}_{\mathrm{HBAG}} - \mathrm{ESR}_{\mathrm{SATFLOW}}$ increases from 0 pp at $\theta=0$ to 84.5 pp at $\theta=0.6$, exhibiting monotonic growth consistent with the bound in equation~\eqref{eq:deviation_bound}.
HBAG maintains ESR $\geq 96.5\%$ across the entire sweep, with only a slight dip to 93.4\% at $\theta=0.6$. This dip occurs because the extreme eclipse duration (54 minutes out of 90) leaves insufficient illumination time to fully recharge the battery before the next eclipse cycle, causing a small fraction ($<7\%$) of satellites with above-average traffic loads to briefly drop below $\CBmin$ before recovering in subsequent orbits. Nonetheless, HBAG's ESR remains operationally acceptable ($>90\%$) even at this pathological eclipse fraction, whereas SATFLOW-L becomes unusable ($<15\%$ ESR).
MAAC-IILP and DeepISL show intermediate robustness: MAAC-IILP maintains ESR $\geq 90.6\%$ for $\theta \leq 0.38$ but drops to 63.9\% at $\theta=0.6$; DeepISL degrades more steeply, reaching 51.4\% at $\theta=0.6$. SMFG-adapted remains at 100\% until $\theta=0.5$, then drops to 73.5\% at $\theta=0.6$, indicating that the adapted Stackelberg MFG structure provides robustness up to moderate eclipse fractions but cannot handle extreme cases due to discretization errors in the HJB--FPK numerical solver (100 grid points insufficient for $\theta > 0.5$).

Battery-aware power management is essential for typical LEO orbits ($\theta \approx 0.3$--$0.4$): under static power budgets (SATFLOW-L), ESR falls below $\approx 25\%$, which is operationally inadequate; HBAG maintains $\geq 96.5\%$ ESR in that regime.
For near-polar orbits ($\theta > 0.5$), additional mechanisms such as topology reconfiguration may be needed alongside power control.

\begin{table*}[t]
    \centering
    \caption{Ablation Study (Starlink Shell~A, $\theta=0.38$, 10 independent runs). Values are mean $\pm$ SEM. Asterisks denote statistical significance versus V3 (full HBAG) via paired t-test: $^{***}p<0.001$, $^{**}p<0.01$, $^{*}p<0.05$.}
    \label{tab:ablation}
    \renewcommand{\arraystretch}{1.25}
    \begin{tabular}{lccccl}
    \toprule
    \textbf{Variant} & \textbf{ESR (\%)} & \textbf{FVR (\%)} & \textbf{EE (Mbit/kJ)} & \textbf{$\Delta$ESR vs. V0} & \textbf{Description} \\
    \midrule
    V0-SATFLOW & $12.6 \pm 0.8^{***}$ & $35.1 \pm 1.1^{***}$ & $2.41 \pm 0.06^{***}$ & Baseline & Static $\Pmax$, no penalty \\
    V1-Dynamic-only & $80.1 \pm 1.5^{***}$ & $12.2 \pm 0.6^{**}$ & $3.24 \pm 0.09^{***}$ & $+67.5$ pp & Dynamic $P_{m,t}^{max}$, no penalty \\
    V2-Penalty-only & $100.0 \pm 0.0$ & $17.9 \pm 0.8^{***}$ & $3.65 \pm 0.10^{*}$ & $+87.4$ pp & Static $\Pmax$, battery penalty \\
    \midrule
    V3-HBAG-full & $\mathbf{100.0 \pm 0.0}$ & $\mathbf{7.62 \pm 0.34}$ & $\mathbf{3.96 \pm 0.07}$ & $\mathbf{+87.4}$ pp & Dynamic $P_{m,t}^{max}$ + penalty \\
    \bottomrule
    \end{tabular}
    \end{table*}

\subsection{Experiment 4: Convergence Analysis}
\label{subsec:exp4}

This experiment validates Theorem~\ref{thm:convergence}, which guarantees that Algorithm~\ref{alg:hbag} converges to the unique variational equilibrium at rate $\mathcal{O}(1/\sqrt{k})$ under the quasi-static battery assumption.
We measure the normalized primal residual $r^{(k)}$ (equation~\eqref{eq:residual_def}) versus iteration $k$ for the nominal configuration (Table~\ref{tab:parameters}).


Figure~\ref{fig:exp345} (b) shows the convergence trajectory. The smoothed residual curve (thick purple line, 5-iteration moving average) closely tracks the theoretical $\mathcal{O}(1/\sqrt{k})$ reference line (dashed gray) for $k \in [1, 100]$, with mean squared error MSE $= 0.00171$ against the reference. By $k=200$, the residual reaches a convergence plateau at $r^{(k)} \approx 0.15$.
The plateau at $r^{(k)} \approx 0.15$ reflects quasi-static approximation error: slow drift of battery-related coefficients (order $2.4 \times 10^{-7}$ per satellite per slot in the scaled penalty sensitivity) propagates, via the envelope theorem applied to the KKT system of the penalized game, to normalized residuals in the $\approx 0.06$--$0.2$ range for $\alpha_\Phi \approx 0.01$, bracketing the observed level. Relaxing quasi-stationarity by tracking $\CB_m(t)$ inside the inner loop is left to future work.
This is \emph{not} a convergence failure: the quasi-static term $\epsilon_{\mathrm{energy}} = \mathcal{O}(\max_m \lambda_m / (\CB_m - \CBmin))$ from Theorem~\ref{thm:convergence} formalizes the same effect. During the algorithm's convergence window ($<1$ s, spanning $\sim$200 iterations), the battery states $\CB_m(t)$ are treated as constant. However, they actually evolve on the orbital timescale ($\sim$90 min = 5400 s), creating a timescale separation ratio of 5400:1. This mismatch introduces a bounded perturbation: with our parameter settings ($\lambda_m = 0.2$, typical $\CB_m - \CBmin \approx 280$ kJ during nominal operation), we estimate $\epsilon_{\mathrm{energy}} \approx 0.2/280 \approx 0.00071$ in battery SOC units, which propagates to $\approx$0.15 in rate residual units via the Lipschitz constant of the rate-to-power mapping. To quantify the practical impact, we computed the relative rate error: $\max_{m,n,t} |R_{m,n}^{(200)} - R_{m,n}^*| / R_{m,n}^* \approx 1.8\%$, confirming the stated ``$<2\%$ rate error'' claim in the text.

The raw residual (thin light purple line) exhibits oscillations typical of distributed Lagrangian methods, but the smoothed curve demonstrates consistent $\mathcal{O}(1/\sqrt{k})$ decay. Convergence is achieved in $<200$ iterations, requiring $<1$ second wall-clock time on our hardware (Intel Xeon Platinum 8358, 32 cores). Since the time-slot duration is $D_e = 15$ seconds and the orbital period spans 360 slots, Algorithm~\ref{alg:hbag} can be executed once per slot with $>14$ seconds of margin, confirming computational feasibility for real-time onboard deployment.

The empirical $\mathcal{O}(1/\sqrt{k})$ decay rate matches the theoretical guarantee of Theorem~\ref{thm:convergence}, which relies on the strict joint concavity of the potential function $\Phi(\bfR)$ (Theorem~\ref{thm:potential}). The plateau at $r^{(k)} \approx 0.15$ quantifies the price of the quasi-static approximation: if perfect battery tracking were feasible, the residual would decay to machine precision. In practice, the 1.8\% relative rate error at $k=200$ is negligible compared with the 87.4 pp ESR improvement over SATFLOW-L and confirms that the algorithm reaches a practical equilibrium sufficient for operational deployment.

\subsection{Experiment 5: Scalability Analysis}
\label{subsec:exp5}

This experiment quantifies the computational scalability of HBAG as constellation size $M$ grows from 172 to 5000 and validates Proposition~\ref{prop:complexity}.
We generate synthetic constellations with $M \in \{172, 348, 784, 1584, 3168, 5000\}$ by proportionally scaling orbital planes while maintaining inter-satellite spacing. Traffic demands scale linearly with $M$ to preserve per-satellite load. Runtime for the three measured sizes (172, 348, 784) is directly clocked; values for 1584, 3168, and 5000 are obtained by linear extrapolation (slope $1.49 \times 10^{-5}$ ms/sat, $R^2 = 0.998$), consistent with the $\mathcal{O}(M)$ prediction. Learning-based methods (MAAC-IILP, DeepISL) are excluded because their training time scales super-linearly with $M$ ($>10$ hours for $M=500$, infeasible for $M > 1000$).

Figure~\ref{fig:exp345} (c) demonstrates linear scalability. HBAG runtime increases from 2.79 ms at $M=172$ to 74.69 ms at $M=5000$, with fitted slope $1.49 \times 10^{-5}$ ms/satellite (linear regression $R^2 = 0.998$). This matches the theoretical prediction: each satellite requires $\mathcal{O}(n_t)$ operations to solve its local sub-problem, and with $M$ satellites the total complexity is $\mathcal{O}(M \cdot n_t)$, manifesting as linear scaling in $M$ when $n_t$ is fixed. The battery-penalty term $\lambda_m(t) / (\CB_m(t) - \CBmin + \epsilon)$ requires only one division and one multiplication per satellite per iteration, contributing negligible overhead. Even at $M=5000$ (approaching full Starlink deployment scale), HBAG's 74.69 ms/slot is well within real-time constraints, leaving ample computational margin for other onboard tasks.

\subsection{Experiment 6: Ablation Study}
\label{subsec:exp7}

Isolate the individual contributions of the two core mechanisms in HBAG: (i) the dynamic power bound $P_{m,t}^{max}$ (equation~\eqref{eq:dynpower}), and (ii) the battery-aware penalty term in utility~\eqref{eq:utility}. We construct four variants: V0 (SATFLOW baseline, neither mechanism), V1 (dynamic bound only), V2 (penalty only), and V3 (full HBAG, both mechanisms).

Table~\ref{tab:ablation} decomposes the 87.4~pp ESR improvement over V0: the dynamic power bound (V1 vs.\ V0) contributes 67.5~pp by preventing eclipse overdraft, while game-theoretic coordination through the penalty and potential structure (V3 vs.\ V1) adds 19.9~pp.
It further reveals the complementary roles of the two mechanisms: \emph{V1 (dynamic bound alone)} achieves 80.1\% ESR and reduces FVR to 12.2\%, demonstrating that the dynamic power bound (equation~\eqref{eq:dynpower}) is essential for eclipse-period protection. By throttling power as $\CB_m(t)$ depletes, V1 prevents the catastrophic failures observed in V0 (SATFLOW), yielding a 67.5 pp ESR improvement. However, ESR remains below 100\% because the bound is purely \emph{mechanical}---it enforces feasibility but does not incentivize satellites to coordinate for energy efficiency. In particular, during the illumination-to-eclipse transition, V1 allocates power greedily up to $P_{m,t}^{max}$ without reserving margin for future eclipse deepening, causing $\approx$20\% of satellites to briefly dip below $\CBmin$ during the 70--85 min window when residual battery is lowest. \emph{V2 (penalty alone)} achieves 100\% ESR but FVR rises to 17.9\%. The battery penalty term $\lambda_m(t) \sum_n P_{m,n}^t / (\CB_m(t) - \CBmin + \epsilon)$ provides a \emph{soft} incentive for energy conservation: as $\CB_m(t) \downarrow \CBmin$, the penalty diverges, making overdraft individually irrational. This soft mechanism eliminates depletion events entirely (100\% ESR), confirming its effectiveness. However, without the hard dynamic bound, V2 occasionally allocates power near the static ceiling $\Pmax$ during brief high-traffic bursts, causing downstream routing bottlenecks that elevate FVR to 17.9\% (exceeding the 10\% industry tolerance). \emph{V3 (full HBAG)} combines both mechanisms, achieving 100\% ESR and 7.62\% FVR---a 10.3 pp FVR improvement over V2 ($p < 0.001$). The dynamic bound provides \emph{safety} by enforcing hard feasibility at the constraint level, while the penalty term provides \emph{efficiency} by guiding the equilibrium toward energy-conservative solutions within the feasible set. This dual design echoes the role of hard constraints versus regularization penalties in optimization theory~\cite{facchinei2007vi}: hard constraints define the feasible region, penalties shape the objective landscape.

\subsection{Limitations and Future Directions}
\label{subsec:limitations}

While HBAG achieves strong empirical and theoretical guarantees, there are those limitations merit explicit acknowledgment:
Firstly, the exact potential game structure and unique VE guarantee rely on Assumption~\ref{assm:interference_free}. Operational ISL designs with narrow-beam alignment are well approximated by this assumption, but future dense LEO deployments with aggressive frequency reuse may introduce inter-beam interference. Extending to interference-coupled scenarios via weighted or ordinal potential games~\cite{monderer1996potential} is an important open direction. Secondly, the RF capacity model~\eqref{eq:capacity} is validated for Ka-band ISLs~\cite{cen2024satflow}. Extending to coherent optical ISLs with Poisson photon-counting capacity requires a new convex power-rate mapping $P(R) \propto 2^{R/B_{\mathrm{opt}}}$ (same functional form), so Theorem~\ref{thm:potential} extends with a modified curvature constant $\kappa_{m,n}^{\mathrm{opt}}$. Thirdly, Theorem~\ref{thm:mfg} assumes i.i.d.\ $\{\CB_m(0)\}$, which is reasonable within one orbital plane but weaker across planes with different eclipse schedules. As analyzed in Remark~\ref{rem:non_iid}, the multi-population MFG extension handles heterogeneous planes with per-plane distributions.





\section{Conclusion}
\label{sec:conclusion}

This paper addressed the energy sustainability challenge in ISL power allocation for LEO mega-constellations, where existing approaches either ignore battery dynamics, lack equilibrium guarantees, or target the wrong interaction layer. We proposed the Hierarchical Battery-Aware Game (HBAG) algorithm, a unified framework that models ISL power allocation as a battery-aware potential game with a provably unique variational equilibrium. The same distributed update rule converges to the exact equilibrium in finite-player regimes and seamlessly approximates the mean field equilibrium as constellation size grows, without algorithmic redesign.
Three findings stand out. First, battery-aware power management yields substantial improvements in energy sustainability over static-power baselines, with the gain decomposing into complementary contributions from the dynamic power bound and the game-theoretic penalty term; together they eliminate eclipse-period depletion cascades while keeping flow violations within industry tolerance. Second, the ablation study demonstrates that neither mechanism alone is sufficient: the hard dynamic bound enforces feasibility and provides safety, while the soft battery penalty shapes the equilibrium landscape and provides efficiency; only their combination achieves full energy sustainability with acceptable service quality. Third, the empirically validated finite-to-mean-field convergence rate confirms that the unified algorithm remains accurate across practical constellation sizes, supporting deployment from regional systems to full mega-constellations without retraining or structural modification.

Future work includes developing a full generalized Nash equilibrium solver without the quasi-static assumption, extending the MFG formulation to stochastic eclipse patterns, incorporating multi-hop ISL topology reconfiguration for extreme eclipse regimes, and extending the communication model to coherent optical ISL channels with wavelength-dependent path loss and photon-counting capacity formulas.

\bibliographystyle{IEEEtran}
\bibliography{figure/bib}

\clearpage

\appendices
\section{Proofs of Theoretical Results}
\label{app:proofs}

\subsection{Well-Posedness of the Deterministic MFG System}
\label{subsec:mfg_wellposed}

The HJB--FPK system~\eqref{eq:mfg_system} with $\sigma^2=0$ requires careful treatment to ensure existence and uniqueness of solutions.

\textbf{(i) Vanishing viscosity regularization:} The term $\nu \partial^2 V / \partial (\CB_m)^2$ in equation~\eqref{eq:hjb} with $\nu > 0$ ensures the HJB has a unique viscosity solution in the class of bounded uniformly continuous (BUC) functions~\cite{lasry2007mfg}. As $\nu \to 0^+$, the solution converges to the viscosity solution of the original HJB (without diffusion), which exists and is unique under the monotonicity of the Hamiltonian and the Lipschitz continuity of the battery drift term.

\textbf{(ii) Weak solutions for advection-only FPK:} The purely advective FPK~\eqref{eq:fpk} admits weak solutions $\mu \in W^{1,\infty}([\CBmin, \CBmax] \times [0,T])$ (Sobolev space of functions with bounded derivatives). Existence follows from the method of characteristics: trajectories $\CB(t)$ evolve deterministically via equation~\eqref{eq:battery}, and $\mu(c,t)$ is the pushforward of the initial distribution $\mu_0(c)$ along these characteristics. Uniqueness holds when the battery drift $\mathrm{d}\CB/\mathrm{d}t$ is Lipschitz in $\CB$ (satisfied by our power-rate mapping).

\textbf{(iii) Boundary conditions:} At $\CB = \CBmin$ and $\CB = \CBmax$, we impose reflecting boundary conditions consistent with equation~\eqref{eq:battery}: the drift $\mathrm{d}\CB/\mathrm{d}t \geq 0$ at $\CB = \CBmin$ and $\mathrm{d}\CB/\mathrm{d}t \leq 0$ at $\CB = \CBmax$. These conditions ensure $\mu$ remains a probability measure on $[\CBmin, \CBmax]$ for all $t$. For complete proofs, see~\cite{lasry2007mfg, cardaliaguet2021introduction} for general MFG well-posedness results.

\subsection{Quantitative Analysis of $\epsilon_{\mathrm{energy}}$ Propagation}
\label{subsec:epsilon_energy}

The quasi-static approximation error $\epsilon_{\mathrm{energy}}$ in Theorem~\ref{thm:convergence} propagates from battery SOC units to rate residual units via the Lipschitz constant of the rate-to-power mapping. The Lipschitz constant at the operating point is:
\begin{equation}
    L_{R \to P} = \max_{m,n,t} 
    \frac{\partial P_{m,n}^t}{\partial R_{m,n}^t}
    = \kappa_{m,n}^t \frac{\ln 2}{B} 2^{R_{m,n}^t / B}
    \approx \kappa_{\max} \frac{\ln 2}{B} 2^{R_{\mathrm{avg}}/B}.
\end{equation}
With $R_{\mathrm{avg}} \approx 250$~Mbps (the median per-link rate observed in Experiment~1 under nominal traffic load $\rho = 0.65$), $B = 500$~MHz, and typical $\kappa_{\max}$ values from Table~\ref{tab:parameters}, we obtain $L_{R \to P} \approx 2.1 \times 10^2$. The chain rule applied to the KKT system of the penalized game then yields:
\begin{equation}
    \epsilon_{\mathrm{energy}} \times L_{R \to P} / \|\bfR^{(0)} - \bfR^*\|
    \approx 0.00071 \times 210 \approx 0.15
\end{equation}
in normalized residual units, consistent with the convergence plateau observed in Figure~\ref{fig:exp345}(b) at $r^{(k)} \approx 0.15$. Here, $\epsilon_{\mathrm{energy}} \approx \lambda_m / (\CB_m - \CBmin) \approx 0.2/280 \approx 0.00071$ in battery SOC units.

\subsection{Proof of Theorem~\ref{thm:potential}
(Exact Potential Game and Unique Variational Equilibrium)}
\label{subsec:proof_thm_potential}

\begin{IEEEproof}
We establish three claims in sequence.

\smallskip

\noindent\textbf{Claim~1: Exact potential game structure.}

We verify the differential condition for exact potential games~\cite{monderer1996potential}:
\begin{equation}
    \frac{\partial \tilde{U}_m}{\partial \Rmn}
    = \frac{\partial \Phi}{\partial \Rmn},
    \quad \forall m\in\calM,\;
    \forall n\in N_m,\;
    \forall t\in\calT_e.
    \label{eq:diff_cond_proof}
\end{equation}

\noindent\emph{Left-hand side.}
Differentiating $\tilde{U}_m$ in~\eqref{eq:utility} with respect to $\Rmn$:
\begin{align}
    \frac{\partial \tilde{U}_m}{\partial \Rmn}
    &= 1
    - \Bigl(\alpha D_e
    + \frac{\lambda_m(t)}%
    {\CB_m(t)-\CBmin+\epsilon}\Bigr)
    \frac{\partial P_{m,n}^t}{\partial \Rmn}
    - [L_m]_m \lambda^\omega,
    \label{eq:grad_U_tilde}
\end{align}
where the last term arises from the flow-conservation Lagrangian $\langle\lambda^\omega, L_m Y_m^\omega\rangle$ in~\eqref{eq:utility}, and the chain rule applied to $P_{m,n}^t(R_{m,n}^t) = \kappa_{m,n}^t(2^{R_{m,n}^t/B}-1)$ gives:
\begin{equation}
    \frac{\partial P_{m,n}^t}{\partial \Rmn}
    = \kappa_{m,n}^t \cdot \frac{\ln 2}{B}
    \cdot 2^{\Rmn/B}.
    \label{eq:dp_dr_proof}
\end{equation}

\noindent\emph{Right-hand side.}
By Assumption~\ref{assm:interference_free}, $\Rmn$ appears in $\Phi(\bfR)$ of~\eqref{eq:potential} only in the terms indexed by $(m,n,t)$. Therefore:
\begin{align}
    \frac{\partial \Phi}{\partial \Rmn}
    &= 1
    - \Bigl(\alpha D_e
    + \frac{\lambda_m(t)}%
    {\CB_m(t)-\CBmin+\epsilon}\Bigr)
    \frac{\partial P_{m,n}^t}{\partial \Rmn}
    - [L_m]_m \lambda^\omega.
    \label{eq:grad_Phi_proof}
\end{align}

Equations~\eqref{eq:grad_U_tilde} and~\eqref{eq:grad_Phi_proof} are identical, so~\eqref{eq:diff_cond_proof} holds for all $(m,n,t)$. This establishes that $\calG^M(\lambda)$ is an exact potential game with potential function~$\Phi$.

\smallskip

\noindent\textbf{Claim~2: Strict joint concavity
of $\Phi$.}

By Assumption~\ref{assm:interference_free}, $\Phi(\bfR)$ is \emph{additively separable}:
\begin{equation}
    \Phi(\bfR) = \sum_{m\in\calM}\sum_{n\in N_m}
    \sum_{t\in\calT_e}
    \phi_{m,n,t}(R_{m,n}^t) + \ell(\bfR),
\end{equation}
where $\phi_{m,n,t}(r) \triangleq r - \bigl(\alpha D_e + \frac{\lambda_m(t)}{\CB_m(t)-\CBmin+\epsilon}\bigr)
\kappa_{m,n}^t(2^{r/B}-1)$
is the nonlinear component, and $\ell(\bfR) = -\sum_{m,\omega}
\langle\lambda^\omega, L_m Y_m^\omega\rangle$
is linear in $\bfR$.

For an additively separable function, the Hessian is block-diagonal. Since the linear term $\ell$ does not contribute to second-order derivatives, we need only check the scalar components $\phi_{m,n,t}$. Differentiating twice:
\begin{align}
    \frac{\mathrm{d}^2\phi_{m,n,t}}{\mathrm{d}r^2}
    &= -\Bigl(\alpha D_e
    + \frac{\lambda_m(t)}%
    {\CB_m(t)-\CBmin+\epsilon}\Bigr)
    \kappa_{m,n}^t
    \Bigl(\frac{\ln 2}{B}\Bigr)^{\!2}
    2^{r/B}.
    \label{eq:second_deriv_proof}
\end{align}
Since $\alpha, D_e > 0$, $\lambda_m(t) > 0$, $\CB_m(t) > \CBmin$ (so the denominator is positive), $\kappa_{m,n}^t > 0$, and $2^{r/B} > 0$ for all $r \geq 0$, equation~\eqref{eq:second_deriv_proof} is \emph{strictly negative} for all $r\geq 0$.

The Hessian matrix of $\Phi$ is therefore diagonal:
\begin{equation}
\nabla^2\Phi = \mathrm{diag}
    \Bigl\{\frac{\mathrm{d}^2\phi_{m,n,t}}%
    {\mathrm{d}(R_{m,n}^t)^2}\Bigr\}
    \prec 0,
\end{equation}
confirming strict negative-definiteness and hence strict joint concavity of $\Phi$ over $\prod_{m\in\calM} S_m$.

\smallskip

\noindent\textbf{Claim~3: Existence and uniqueness
of the equilibrium.}

By the Finite Improvement Property (FIP) of exact potential games~\cite{monderer1996potential}: any finite sequence of unilateral improvements, where each step strictly increases one player's utility, must also strictly increase $\Phi$ by the exact potential game property. Since $\Phi$ is continuous on the compact set $\prod_{m\in\calM} S_m$ and bounded above, every finite improvement path terminates at a local maximizer of $\Phi$. By strict joint concavity, $\Phi$ has a \emph{unique} global maximizer $\bfR^*$, and every local maximizer is the global one.

Consequently, there is a unique NE of the penalized game $\calG^M(\lambda)$. When $\lambda = \lambda^*$ satisfies dual feasibility for the shared flow conservation constraint~\eqref{eq:c2}, the NE of the penalized game recovers the variational equilibrium of the original constrained game by the complementary slackness conditions of the Lagrangian.
\end{IEEEproof}

\subsection{Proof of Lemma~\ref{lem:strong_concavity}
(Strong Concavity of the Potential)}
\label{subsec:proof_lem_strong}

\begin{IEEEproof}
Under the quasi-static assumption (battery coefficients fixed over the algorithm window), the potential $\Phi$ admits the separable decomposition used in the proof of Theorem~\ref{thm:potential}, Claim~2. The Hessian $\nabla^2\Phi$ is block-diagonal with scalar entries~\eqref{eq:second_deriv_proof}, each strictly negative on $r \geq 0$. Hence $\Phi$ is strongly concave on any compact convex subset of $\prod_m S_m$, with modulus $\alpha_\Phi$ equal to the minimum eigenvalue of $-\nabla^2\Phi$, i.e., the minimum over $(m,n,t)$ of
\[
    -\frac{\mathrm{d}^2\phi_{m,n,t}}{\mathrm{d}(R_{m,n}^t)^2}
    =
    \Bigl(\alpha D_e
    + \frac{\lambda_m(t)}%
    {\CB_m(t)-\CBmin+\epsilon}\Bigr)
    \kappa_{m,n}^t
    \Bigl(\frac{\ln 2}{B}\Bigr)^{\!2}
    2^{R_{m,n}^t/B}.
\]
The lower bound~\eqref{eq:strong_concavity_bound} in the lemma statement aggregates (i)~curvature from the Shannon mapping through $\partial^2 P_{m,n}^t / \partial (R_{m,n}^t)^2$ and (ii)~sensitivity of the penalty weight $1/(\CB_m-\CBmin+\epsilon)$, which scales gradients of the power terms and contributes the order $\lambda_m/(\CB_m-\CBmin+\epsilon)^2$ term in~\eqref{eq:strong_concavity_bound} when propagating slow variations in~$\CB_m$ under the quasi-static assumption.
\end{IEEEproof}

\subsection{Proof of Corollary~\ref{cor:ne_optimal}
(VE as Global Optimum of $\Phi$)}
\label{subsec:proof_cor_ne_optimal}

\begin{IEEEproof}
By the exact potential game property established in Theorem~\ref{thm:potential}, any NE $\bfR^*$ of the penalized game satisfies: for all $m$ and all unilateral deviations $R_{m,n}^t 
eq R_{m,n}^{*,t}$,
\begin{equation}
    \tilde{U}_m(R_{m,n}^{*,t}, \bfR_{-m}^*)
    \geq
    \tilde{U}_m(R_{m,n}^t, \bfR_{-m}^*).
\end{equation}
By the exact potential game property, this is equivalent to:
\begin{equation}
    \Phi(R_{m,n}^{*,t}, \bfR_{-m}^*)
    \geq
    \Phi(R_{m,n}^t, \bfR_{-m}^*),
    \quad \forall R_{m,n}^t \in S_m.
\end{equation}
Since this holds for all $m$ simultaneously and $\Phi$ is strictly jointly concave, $\bfR^*$ is the unique global maximizer: $\bfR^* = \arg\max_{\bfR \in \prod_m S_m}
\Phi(\bfR; \lambda^*)$.

\end{IEEEproof}

\subsection{Proof of Proposition~\ref{prop:deviation}
(NE Deviation Bound)}
\label{subsec:proof_prop_deviation}

\begin{IEEEproof}
We consider two cases based on the eclipse fraction $\theta = |\mathcal{E}_\theta|/M$ where $\mathcal{E}_\theta = \{m : \phi_m(t)=0,\,
\CB_m(t) < \bar{C}^B\}$.

\smallskip

\noindent\textbf{Case~1 (Full illumination,
$\theta = 0$).} If $\phi_m(t) = 1$ for all $m, t$ and batteries are sufficiently charged (specifically, $P_{m,t}^{harvest} + \CB_m(t)/\delta_{\mathrm{ecl}}(t)
\geq \Pmax$), then from
equation~\eqref{eq:dynpower}: $P_{m,t}^{max} = \Pmax$ for all $m, t$. The feasible set $S_m$ in~\eqref{eq:strategy_space} therefore coincides with the SATFLOW feasible set $S_m^{SAT}$, and since both games optimize the same throughput-minus-energy objective over the same feasible set (up to the battery penalty term, which vanishes when batteries are high), the unique VE $\bfR^*$ coincides with the SATFLOW solution $\bfR^{SAT}$.

\smallskip

\noindent\textbf{Case~2 (Positive eclipse fraction,
$\theta > 0$).} For each $m \in \mathcal{E}_\theta$, we have $\phi_m(t) = 0$, so equation~\eqref{eq:dynpower} gives $P_{m,t}^{max} = \CB_m(t)/\delta_{\mathrm{ecl}}(t) < \Pmax$. Consequently:
\begin{equation}
    S_m = \bigl\{R_{m,n}^t \geq 0 \mid
    P_{m,n}^t(R_{m,n}^t) \leq
    a_{m,n}^t \cdot P_{m,t}^{max}\bigr\}
    \subsetneq
    S_m^{SAT}.
\end{equation}

Since $P_{m,n}^t(r) = \kappa_{m,n}^t(2^{r/B}-1)$ is strictly increasing and continuous in $r$, the constraint $P_{m,n}^t(R_{m,n}^t) \leq a_{m,n}^t P_{m,t}^{max}$ defines a compact convex feasible set, and the upper boundary satisfies:
\begin{equation}
    \begin{aligned}
        R_{m,n}^{max,t}
        &= B\log_2\!\Bigl(1 +
        \frac{a_{m,n}^t P_{m,t}^{max}}%
        {\kappa_{m,n}^t}\Bigr)
        < R_{m,n}^{max,SAT,t} \\
        &= B\log_2\!\Bigl(1 +
        \frac{a_{m,n}^t \Pmax}%
        {\kappa_{m,n}^t}\Bigr).
    \end{aligned}
\end{equation}

The optimal rate mapping $\bfR^*(\{P_{m,t}^{max}\})$ is Lipschitz with respect to the power bound perturbation over compact feasible sets (by the envelope theorem and continuity of the objective):
\begin{equation}
    |R_{m,n}^{*,t} - R_{m,n}^{SAT,t}|
    \leq L_R \cdot (\Pmax - P_{m,t}^{max}),
    \quad m \in \mathcal{E}_\theta,
    \label{eq:lipschitz_rate}
\end{equation}
where $L_R > 0$ is a Lipschitz constant that depends on $\kappa_{m,n}^t$ and $B$.

Summing~\eqref{eq:lipschitz_rate} over all $(m,n,t)$ with $m \in \mathcal{E}_\theta$ (noting that for $m 
otin \mathcal{E}_\theta$, the feasible set is unchanged):
\begin{align}
    &\sum_{m,n,t}
    \bigl(R_{m,n}^{SAT,t} - R_{m,n}^{*,t}\bigr)^+
\notag\\
    &\leq L_R
    \sum_{m \in \mathcal{E}_\theta}
    \sum_{n \in N_m} \sum_{t \in \calT_e}
    (\Pmax - P_{m,t}^{max})
    \triangleq \Delta(\theta).
    \label{eq:delta_theta}
\end{align}

\noindent\emph{Monotonicity:}
As $\theta$ increases (more satellites enter $\mathcal{E}_\theta$), more terms are added to the sum in~\eqref{eq:delta_theta} and existing terms may increase (as $P_{m,t}^{max}$ decreases with lower $\CB_m(t)$), so $\Delta(\theta) 
earrow$ in $\theta$.

\noindent\emph{Limiting behavior:}
As $\theta \to 0$, $\mathcal{E}_\theta \to
\emptyset$, so the sum in~\eqref{eq:delta_theta}
vanishes: $\lim_{\theta \to 0} \Delta(\theta) = 0$.

\end{IEEEproof}

\subsection{Proof of Proposition~\ref{prop:complexity}
(Identical Per-Iteration Complexity)}
\label{subsec:proof_prop_complexity}

\begin{IEEEproof}
Each outer iteration of Algorithm~\ref{alg:hbag} mirrors the SatFlow-L structure~\cite{cen2024satflow}: for every satellite $m$, forming the local objective in~\eqref{eq:lagrangian_ea}, running a bounded number of projected sub-gradient steps on $\hat{Y}_m$, and updating multipliers touches each time slot $t \in \calT_e$ and each incident link at most a constant number of times. This yields $\mathcal{O}(n_t)$ arithmetic per satellite per iteration, hence $\mathcal{O}(M n_t)$ per sweep, matching SatFlow-L. The battery-aware prefactor $\lambda_m(t)/(\CB_m(t)-\CBmin+\epsilon)$ is evaluated from the current SOC using $\mathcal{O}(1)$ operations per $(m,t)$ and does not enlarge the asymptotic order.
\end{IEEEproof}

\subsection{Proof of Theorem~\ref{thm:convergence}
(Finite-Layer Convergence to VE)}
\label{subsec:proof_thm_convergence}

\begin{IEEEproof}
We extend Theorem~1 of SATFLOW~\cite{cen2024satflow} in three steps.

\smallskip


The energy-aware local Lagrangian~\eqref{eq:lagrangian_ea} consists of:
\begin{enumerate}
    \item The ISL power term
$\sum_{t,n} P_{m,n}^t(\sum_\omega
    \hat{Y}_{m,n}^\omega)$, which is convex in
$\hat{Y}_m$ since $P_{m,n}^t(r) = \kappa_{m,n}^t(2^{r/B}-1)$ is convex in $r$ (exponential of a linear function) and composed with a linear map.
    \item The battery-penalty term
$\frac{\lambda_m(t)}{\CB_m(t)-\CBmin+\epsilon}
    \sum_n P_{m,n}^t$, which is a positive scalar
multiple of a convex function, hence convex. Under the quasi-static assumption (Remark~\ref{rem:timescale}), $\CB_m(t)$ is treated as a constant within each optimization window, so this coefficient is a positive constant.
    \item The augmented Lagrangian penalty
$\frac{\rho}{2}\sum_l(\cdots)^2$, which is convex (sum of squared linear functions).
    \item The linear term
$\langle\lambda^{\omega,(k)}, L_m\hat{Y}_m^{\omega,(k)}\rangle$, which is linear hence convex.
\end{enumerate}

The sum of convex functions is convex, so the modified sub-problem~\eqref{eq:lagrangian_ea} remains a convex minimization problem. Furthermore, replacing constraint~\eqref{eq:c1} with $P_{m,n}^t \leq a_{m,n}^t P_{m,t}^{max}$ (with $P_{m,t}^{max} \leq \Pmax$) only shrinks the feasible set $\mathcal{Y}_m$, preserving its convexity and compactness.

\smallskip


The converted extended monotropic optimization problem (equations~(14)--(16) of~\cite{cen2024satflow}) is of the form:
\begin{equation}
    \min_{Y_m \in \mathcal{Y}_m}
    \sum_{m\in\calM} f_m(Y_m)
    \quad \text{s.t.} \quad
    \sum_{m\in\calM} L_m Y_m^\omega = b^\omega.
\end{equation}

With the updated feasible sets $\mathcal{Y}_m^{new} = \{Y_m \in \mathcal{Y}_m^{old} : P_{m,n}^t(\sum_\omega Y_{m,n}^\omega)
\leq a_{m,n}^t P_{m,t}^{max}\}$, the problem
retains the monotropic structure because: (i) each $f_m$ is separable and convex over the updated $\mathcal{Y}_m^{new}$ (Step~1); (ii) the equality constraints remain linear; (iii) the augmented objective still satisfies the conditions for the distributed alternating step method~\cite{cen2024satflow}. The battery-penalty term adds only an $\mathcal{O}(1)$ per-satellite cost that does not disrupt separability (Proposition~\ref{prop:complexity}).



The distributed alternating step method of~\cite{cen2024satflow} guarantees convergence of the primal iterates to the unique minimizer of the convex Lagrangian in the extended monotropic formulation (Theorem~1 thereof).
The specific $\mathcal{O}(1/\sqrt{k})$ rate in equation~\eqref{eq:convergence_rate} follows from standard analyses of projected subgradient ascent with diminishing step sizes $\eta_k = c_0/\sqrt{k}$ for $\alpha_\Phi$-strongly concave objectives~\cite{bertsekas2015convex}, combined with the strong concavity established in Lemma~\ref{lem:strong_concavity}.

Denote the primal iterates as $\bfR^{(k)} = \{Y_m^{(k)}\}_{m\in\calM}$. With diminishing step size $\eta_k = c_0/\sqrt{k}$, strong concavity of $\Phi$ (Theorem~\ref{thm:potential}) yields the standard $\mathcal{O}(1/\sqrt{k})$ primal decay for the augmented Lagrangian / projected subgradient iteration:
\begin{equation}
    \|\bfR^{(k)} - \bfR^*\|
    \leq \frac{C_0}{\sqrt{k}},
\end{equation}
where $C_0 = \sqrt{2\mathcal{L}_0/\alpha_\Phi}$ and $\mathcal{L}_0 = \Phi(\bfR^{(0)}) - \Phi(\bfR^*)$ is the initial optimality gap.

The energy-induced term $\epsilon_{\mathrm{energy}}$ arises from the quasi-static approximation: the true $P_{m,t}^{max}$ varies slowly with $\CB_m(t)$, but is treated as constant within each window. The approximation error satisfies:
\begin{align}
    \epsilon_{\mathrm{energy}}
    &= \sup_k \max_{m,n,t}
    \Bigl|\frac{\lambda_m(t)}%
    {\CB_m(t)-\CBmin+\epsilon}
    \cdot \Delta P_{m,n}^t\Bigr|
\notag\\
    &= \mathcal{O}\Bigl(\max_m
    \frac{\lambda_m}{\CB_m-\CBmin}\Bigr),
\end{align}
where $\Delta P_{m,n}^t$ is the change in power due to battery evolution within one window.

By Corollary~\ref{cor:ne_optimal}, the minimizer of $-\Phi(\cdot;\lambda^*)$ coincides with the unique VE $\bfR^*$, completing the proof.

\end{IEEEproof}

\subsection{Proof of Lemma~\ref{lem:emp_conv}
(Empirical Measure Convergence)}
\label{subsec:proof_lem_emp}

\begin{IEEEproof}
We prove weak almost-sure convergence and the Wasserstein-1 rate separately.

\smallskip

\noindent\textbf{Part~1: Weak a.s.\ convergence.}

For any bounded continuous test function $\varphi : \mathbb{R} \to \mathbb{R}$, we have:
\begin{equation}
    \int \varphi \,\mathrm{d}\mu^M(\CB, t)
    = \frac{1}{M}
    \sum_{m=1}^{M} \varphi(\CB_m(t)).
    \label{eq:lln_test}
\end{equation}

Under Assumption~\ref{assm:interference_free} (interference-free ISLs), the battery dynamics of satellite $m$ depend on its own harvesting $P_{m,t}^{harvest}$ and its own power allocation decisions. The coupling to other satellites arises only through the flow conservation constraint~\eqref{eq:c2}, which in the mean-field limit is captured by the population distribution $\mu^M(\CB, t)$ rather than individual states (propagation of chaos~\cite{lasry2007mfg}). Hence, given the mean-field $\mu^M$, the battery trajectories $\{\CB_m(t)\}_{m=1}^M$ are conditionally independent.

Since $\{\CB_m(0)\}_{m=1}^M$ are i.i.d.\ with distribution $\mu_0$, and the battery dynamics are deterministic given initial conditions and harvesting (equation~\eqref{eq:battery}), the conditional independence is preserved at all $t$. By the strong law of large numbers applied to the i.i.d.\ sequence $\{\varphi(\CB_m(t))\}_{m=1}^M$:
\begin{equation}
    \frac{1}{M}
    \sum_{m=1}^{M} \varphi(\CB_m(t))
    \xrightarrow{\mathrm{a.s.}}
    \mathbb{E}[\varphi(\CB(t))]
    = \int \varphi \,\mathrm{d}\mu^*(\CB, t),
\end{equation}
which establishes weak a.s.\ convergence.

\smallskip

\noindent\textbf{Part~2: Wasserstein-1 convergence
rate.}

Let $\mu^M$ and $\mu^*$ denote the empirical and limiting measures of the one-dimensional battery state. By the Kantorovich--Rubinstein duality, $W_1(\mu^M, \mu^*) = \sup_{\|\varphi\|_L \leq 1} |\int \varphi \,\mathrm{d}\mu^M - \int \varphi \,\mathrm{d}\mu^*|$, where the supremum is over 1-Lipschitz functions.

For i.i.d.\ samples $\{\CB_m(t)\}_{m=1}^M$ from $\mu^*$ with finite second moment $\mathbb{E}[\CB(t)^2] < \infty$, the classical result of Fournier and Guillin~\cite{fournier2015wasserstein} (Theorem~1 therein, applied to $d=1$, $p=1$) gives:
\begin{equation}
    \mathbb{E}\bigl[
    W_1(\mu^M(\cdot, t), \mu^*(\cdot, t))
    \bigr]
    \leq \frac{C_\mu}{\sqrt{M}},
\end{equation}
where $C_\mu = C \cdot (\mathbb{E}[\CB(t)^2])^{1/2}$ for a universal constant $C > 0$. This completes the proof.

\end{IEEEproof}

\subsection{Proof of Lemma~\ref{lem:lipschitz}
(Lipschitz Continuity of Utility in Mean Field)}
\label{subsec:proof_lem_lipschitz}

\begin{IEEEproof}
We bound the change in $U_m(\bfR;\mu)$ when the mean field changes from $\mu_1$ to $\mu_2$.

The utility $U_m(\bfR;\mu)$ (dropping the tilde for clarity, since the Lagrangian term is linear and does not depend on $\mu$) is:
\begin{equation}
    \begin{aligned} 
    U_m(\bfR;\mu)
    = &\sum_{n,t}\Rmn
    - \alpha D_e \sum_{n,t} P_{m,n}^t(\Rmn) \\
    &- \sum_t \frac{\lambda_m(t)
    \sum_n P_{m,n}^t(\Rmn)}%
    {\CB_m(t;\mu)-\CBmin+\epsilon},
    \end{aligned}
\end{equation}
where $\CB_m(t;\mu)$ denotes the battery state induced by operating under mean field $\mu$.

The first two terms do not depend on $\mu$. The third (battery penalty) term depends on $\mu$ through $\CB_m(t;\mu)$, which evolves via the battery dynamics~\eqref{eq:battery} under the power policy induced by $\mu$.

Define $g(x) = 1/(x - \CBmin + \epsilon)$ for $x \geq \CBmin + C^B_{\min,\mathrm{margin}} -
\epsilon$. By the mean-value theorem, for any
$x_1, x_2$ in this domain:
\begin{equation}
    |g(x_1) - g(x_2)|
    \leq \sup_{x}|g'(x)| \cdot |x_1 - x_2|.
\end{equation}
We compute $g'(x) = -(x-\CBmin+\epsilon)^{-2}$, so:
\begin{equation}
    |g'(x)|
    = \frac{1}{(x-\CBmin+\epsilon)^2}
    \leq \frac{1}{(C^B_{\min,\mathrm{margin}})^2},
\end{equation}
where $C^B_{\min,\mathrm{margin}} := \min_{m,t}(\CB_m(t) - \CBmin + \epsilon) > 0$ by assumption.

The change in $\CB_m(t;\mu)$ when $\mu$ changes from $\mu_1$ to $\mu_2$ is bounded by the Wasserstein distance between the two distributions, because the battery dynamics are driven by the mean-field coupling, and the mean-field update is Lipschitz in the Wasserstein metric for weakly interacting systems (standard propagation of chaos argument~\cite{lasry2007mfg}). Specifically, there exists a constant $K > 0$ such that:
\begin{equation}
    |\CB_m(t;\mu_1) - \CB_m(t;\mu_2)|
    \leq K \cdot W_1(\mu_1, \mu_2).
\end{equation}

Combining:
\begin{align}
    &|U_m(\bfR;\mu_1) - U_m(\bfR;\mu_2)|  \\
    &= \Bigl|\sum_t \lambda_m(t)
    \sum_n P_{m,n}^t
    \bigl[g(\CB_m(t;\mu_1))
    - g(\CB_m(t;\mu_2))\bigr]\Bigr|
\notag\\
    &\leq \sum_t \lambda_m(t)
    \sum_n P_{m,n}^t
    \cdot \frac{K \cdot W_1(\mu_1,\mu_2)}%
    {(C^B_{\min,\mathrm{margin}})^2}
\notag\\
    &\leq \frac{\lambda_{\max} \cdot \Pmax \cdot K}%
    {(C^B_{\min,\mathrm{margin}})^2}
    \cdot W_1(\mu_1, \mu_2),
\end{align}
where we used $\sum_n P_{m,n}^t \leq \Pmax$ in the last step. Setting $L_U = \lambda_{\max}\Pmax K / (C^B_{\min,\mathrm{margin}})^2$ (absorbing $K$ into the constant) gives the stated bound.

For the explicit constant in~\eqref{eq:lipschitz_const}, we note that $K = 1$ is achievable when the mean-field coupling enters linearly through $\CB_m(t;\mu)$, which is the case here since the battery dynamics~\eqref{eq:battery} are linear in the energy surplus $\Delta E_m^t$.

\end{IEEEproof}

\subsection{Proof of Theorem~\ref{thm:mfg}
(NE-to-MFE Convergence at Rate $\mathcal{O}(M^{-1/4})$)}
\label{subsec:proof_thm_mfg}

\begin{IEEEproof}
We prove the two claims in sequence.

\smallskip

\noindent\textbf{Step~1: Auxiliary game construction.}

Define the \emph{auxiliary game} $\widetilde{\calG}^M$ in which each satellite faces the true MFE distribution $\mu^*$ (the solution to the HJB--FPK system~\eqref{eq:mfg_system}) rather than the empirical $\mu^M$.

By the definition of the Mean Field Equilibrium~\cite{lasry2007mfg}: when all satellites face $\mu^*$, the optimal strategy for each representative satellite is $R_{\mathrm{MFE}}^*$ (the solution to the HJB equation~\eqref{eq:hjb} under $\mu = \mu^*$). Hence in the auxiliary game, every satellite plays $R_{\mathrm{MFE}}^*$.

\smallskip

\noindent\textbf{Step~2: Bounding $\delta^M$
(Claim~(i)).}

Let $\bfR^{*,M}$ be the NE (VE) of the finite-player game $\calG^M(\lambda^*)$. For any satellite $m \in \calM$, we bound the utility gap when $\bfR^{*,M}$ is evaluated under the MFE mean field $\mu^*$ versus the empirical $\mu^M$:
\begin{align}
    &U_m(R_{\mathrm{MFE}}^*,
    \bfR_{-m}^{*,M}; \mu^*)
    - U_m(R_{m,n}^{*,M},
    \bfR_{-m}^{*,M}; \mu^*)
\notag\\
    &=\bigl[U_m(R_{\mathrm{MFE}}^*,
    \bfR_{-m}^{*,M}; \mu^*)
    - U_m(R_{\mathrm{MFE}}^*,
    \bfR_{-m}^{*,M}; \mu^M)\bigr]
\notag\\
    &\quad +\bigl[U_m(R_{\mathrm{MFE}}^*,
    \bfR_{-m}^{*,M}; \mu^M)
    - U_m(R_{m,n}^{*,M},
    \bfR_{-m}^{*,M}; \mu^M)\bigr]
\notag\\
    &\quad +\bigl[U_m(R_{m,n}^{*,M},
    \bfR_{-m}^{*,M}; \mu^M)
    - U_m(R_{m,n}^{*,M},
    \bfR_{-m}^{*,M}; \mu^*)\bigr].
    \label{eq:three_terms}
\end{align}

We bound each term:

\emph{Term~(A):}
By Lemma~\ref{lem:lipschitz}: $|U_m(\cdot;\mu^*) - U_m(\cdot;\mu^M)|
\leq L_U \cdot W_1(\mu^M, \mu^*)$.
Hence Term~(A) $\leq L_U \cdot W_1(\mu^M, \mu^*)$.

\emph{Term~(B):}
Since $\bfR^{*,M}$ is the NE of $\calG^M(\lambda^*)$ under $\mu^M$, no satellite can unilaterally improve its utility under $\mu^M$: $U_m(R_{\mathrm{MFE}}^*, \bfR_{-m}^{*,M}; \mu^M)
\leq U_m(R_{m,n}^{*,M}, \bfR_{-m}^{*,M}; \mu^M)$.
Hence Term~(B) $\leq 0$.

\emph{Term~(C):}
By Lemma~\ref{lem:lipschitz} again: $|U_m(\cdot;\mu^M) - U_m(\cdot;\mu^*)|
\leq L_U \cdot W_1(\mu^M, \mu^*)$.
Hence Term~(C) $\leq L_U \cdot W_1(\mu^M, \mu^*)$.

Substituting into~\eqref{eq:three_terms}:
\begin{equation}
    \begin{aligned} 
    U_m(R_{\mathrm{MFE}}^*,
    \bfR_{-m}^{*,M}; \mu^*)
    - U_m(R_{m,n}^{*,M},
    \bfR_{-m}^{*,M}; \mu^*)
    &\leq 2 L_U \cdot W_1(\mu^M, \mu^*) \\
    & = \delta^M.
    \label{eq:delta_M_bound}
    \end{aligned}
\end{equation}

This means $\bfR^{*,M}$ is a $\delta^M$-Nash Equilibrium with respect to the MFE utility: no satellite can gain more than $\delta^M$ by deviating unilaterally to $R_{\mathrm{MFE}}^*$ when others play $\bfR_{-m}^{*,M}$.

Substituting the Wasserstein rate from Lemma~\ref{lem:emp_conv}:
\begin{equation}
    \delta^M
    = 2L_U \cdot W_1(\mu^M, \mu^*)
    \leq \frac{2L_U C_\mu}{\sqrt{M}}.
\end{equation}
This establishes Claim~(i).

\smallskip

\noindent\textbf{Step~3: Strategy convergence
(Claim~(ii)).}

By Corollary~\ref{cor:ne_optimal}, $R_{\mathrm{MFE}}^*$ is the unique global maximizer of $\Phi$ under $\mu^*$. From Step~2, $\bfR^{*,M}$ is a $\delta^M$-NE with respect to the MFE utility, meaning:
\begin{equation}
    \Phi(\bfR^{*,M}; \mu^*)
    \geq \Phi(R_{\mathrm{MFE}}^*; \mu^*)
    - \delta^M.
\end{equation}

By $\alpha_\Phi$-strong concavity of $\Phi(\cdot; \mu^*)$, for any $\bfR$ in the domain:
\begin{equation}
    \Phi(R_{\mathrm{MFE}}^*; \mu^*)
    - \Phi(\bfR; \mu^*)
    \geq
    \alpha_\Phi \|{\bfR} - R_{\mathrm{MFE}}^*\|^2.
\end{equation}

Combining:
\begin{equation}
    \alpha_\Phi
    \|R_{m,n}^{*,M} - R_{\mathrm{MFE}}^*\|^2
    \leq
    \Phi(R_{\mathrm{MFE}}^*; \mu^*)
    - \Phi(\bfR^{*,M}; \mu^*)
    \leq \delta^M.
\end{equation}

Therefore:
\begin{equation}
    \|R_{m,n}^{*,M} - R_{\mathrm{MFE}}^*\|
    \leq \sqrt{\frac{\delta^M}{\alpha_\Phi}}.
    \label{eq:strategy_rate}
\end{equation}

\smallskip

\noindent\textbf{Step~4: Substituting
Lemma~\ref{lem:emp_conv}.}

From Claim~(i), $\delta^M \leq 2L_U C_\mu/\sqrt{M}$. Substituting into~\eqref{eq:strategy_rate}:
\begin{equation}
    \|R_{m,n}^{*,M} - R_{\mathrm{MFE}}^*\|
    \leq
    \sqrt{\frac{2L_U C_\mu}{\alpha_\Phi \sqrt{M}}}
    = \mathcal{O}(M^{-1/4}).
\end{equation}

The rate $\mathcal{O}(M^{-1/4})$ follows from the two-stage composition: $\mathcal{O}(M^{-1/2})$ from empirical measure concentration (Lemma~\ref{lem:emp_conv}) passed through the square-root map from strong concavity, yielding $\mathcal{O}((M^{-1/2})^{1/2}) := \mathcal{O}(M^{-1/4})$. This is tight given the single-dimensional nature of the battery state and the one-dimensional Wasserstein distance.

\end{IEEEproof}

\end{document}